\newtheorem{theorem}{Theorem}[section]
\newtheorem{defn}[theorem]{Definition}
\newtheorem{obs}[theorem]{Observation}
\newcommand\CONDITION[2]%
\xpatchcmd{\algorithmic}{\setcounter}{\algorithmicfont\setcounter}{}{}
\providecommand{\algorithmicfont}{}
\providecommand{\setalgorithmicfont}[1]{\renewcommand{\algorithmicfont}{#1}}
\algnewcommand{\IIf}[1]{\State\algorithmicif\ #1\ \algorithmicthen}
\algnewcommand{\EndIIf}{\unskip\ \ \algorithmicend\ \algorithmicif}
\algnewcommand{\ElsIIf}[1]{\unskip\State\algorithmicelse\ \algorithmicif\ #1\ \algorithmicthen}
\algnewcommand{\ElseI}[1]{\unskip\State\algorithmicelse\ }
\renewcommand{\paragraph}[1]{\vspace{0.3em}\noindent \textbf{#1}}
\definecolor{pdfbgcolor}{RGB}{180,180,180}
\algnewcommand{\LineComment}[1]{\State \(\triangleright\) #1}
\begin{document}

\title{PathCAS: An Efficient Middle Ground for Concurrent Search Data Structures}  

\author{Trevor Brown}
\affiliation{
  \institution{University of Waterloo}            %
  \country{Canada}                    %
}
\email{me@tbrown.pro}          %

\author{William Sigouin}
\affiliation{
  \institution{University of Waterloo}            %
  \country{Canada}                    %
}
\email{wpsigoui@uwaterloo.ca}          %

\author{Dan Alistarh}
\affiliation{
  \institution{Institute of Science and Technology}           %
  \country{Austria}                   %
}
\email{dan.alistarh@ist.ac.at}         %

\begin{abstract}
    To maximize the performance of concurrent data structures, researchers have often turned to highly complex fine-grained techniques, resulting in efficient and elegant algorithms, which can however be often difficult to understand and prove correct.
    While simpler techniques exist, such as transactional memory, they can have limited performance or portability relative to their fine-grained counterparts. 
    Approaches at both ends of this complexity-performance spectrum have been extensively explored, but relatively less is known about the middle ground: approaches that are willing to sacrifice some performance for simplicity, while remaining competitive with state-of-the-art handcrafted designs. 
    In this paper, we explore this middle ground, and present PathCAS, a primitive that combines  ideas from multi-word CAS (KCAS) and transactional memory approaches, while carefully avoiding overhead.
    We show how PathCAS can be used to implement efficient search data structures relatively simply, using an internal binary search tree as an example, then extending this to an AVL tree.
    Our best implementations outperform many handcrafted search trees: in search-heavy workloads, it rivals the BCCO tree \cite{bronson-tree}, the fastest known concurrent binary tree in terms of search performance~\cite{BSTRoot}.
    Our results suggest that PathCAS can yield concurrent data structures that are relatively easy to build and prove correct, while offering surprisingly high performance.
\end{abstract}

\begin{CCSXML}
<ccs2012>
   <concept>
       <concept_id>10010147.10011777.10011778</concept_id>
       <concept_desc>Computing methodologies~Concurrent algorithms</concept_desc>
       <concept_significance>500</concept_significance>
       </concept>
   <concept>
       <concept_id>10010147.10010169.10010170.10010171</concept_id>
       <concept_desc>Computing methodologies~Shared memory algorithms</concept_desc>
       <concept_significance>300</concept_significance>
       </concept>
 </ccs2012>
\end{CCSXML}

\ccsdesc[500]{Computing methodologies~Concurrent algorithms}
\ccsdesc[300]{Computing methodologies~Shared memory algorithms}

\keywords{concurrent data structures, search trees, non-blocking algorithms, lock-free, synchronization primitives} %

\maketitle

\begin{figure}[t]
 \hspace{-8mm}
    \centering
    \includegraphics[width=0.45\textwidth]{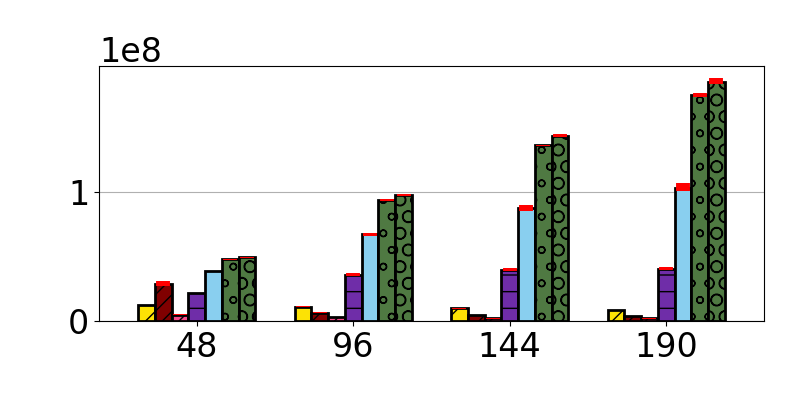}
    
    \vspace{-4mm}
    \includegraphics[width=0.45\textwidth]{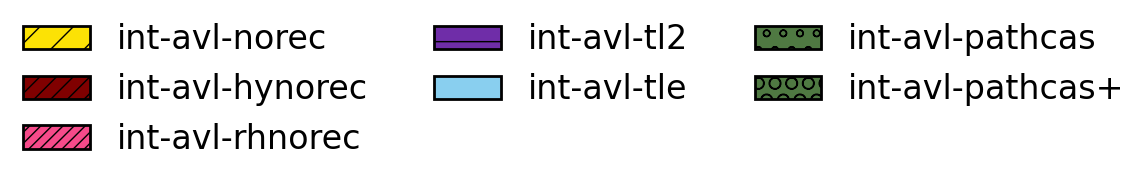}

\centering
\vspace{-5mm}
\caption{AVL trees using PathCAS vs state-of-the-art transactional memory. 10\% updates, 1M key trees. x-axis = number of threads. y-axis = millions of operations per second.}
\label{fig:example_perf}
\vspace{-3mm}
\end{figure}

\section{Introduction}
Significant work has been invested in building scalable concurrent variants of fundamental data structures, and fast implementations are now known for many instances, from search trees to hash tables, or to containers such as queues and stacks~\cite{HSBook}.
On the one hand, designs based on  \emph{fine-grained locking} or \emph{fine-grained lock-free algorithms}, have arguably emerged as the best-performing solution~\cite{BSTRoot}. Yet, such designs tend to have high complexity, and are  notoriously difficult to analyze and prove correct.

On the other hand, significant attention has been given to \emph{general} techniques for obtaining  fast and simple concurrent data structures. 
The classic example is transactional memory (TM)~\cite{HerlihyMoss}, which is now available in software, hardware, and hybrid variants, and allows one to derive concurrent implementations from sequential ones with lower programming effort relative to fine-grained designs. When TM is available, such designs can  provide excellent performance.  

However, TM-based designs still have drawbacks. 
Software TM (STM) provides a hardware-independent alternative to HTM, but can incur higher overheads.
Moreover, although hardware transactional memory (HTM) is technically available on many platforms, via Intel's TSX/TSX-NI, IBM's POWER8/9 TM and ARM's TME \cite{armtle}, it is notably missing from AMD chips (despite the proposal of ASF \cite{ASF} more than a decade ago), and has been disabled in Intel and recent POWER processors~\cite{tsx:disabled,tsx:disabled:by:default, power10}, due to various concerns, chief among which is security.  %

In this context, it is natural to seek a middle ground between the high efficiency, but high complexity, of fine-grained designs, and the relative ease-of-use, but potential pitfalls, of general designs such as the ones based on TM. 
A number of techniques exploring this trade-off have been investigated over the years, often based on either restricted STMs or extended Multi-Compare Multi-Swap (MCMS) \cite{mcms} implementations. However, as we illustrate later in the paper, known instances of these techniques often fail to scale in the context of high-performance data structures.

We revisit this question, and present a primitive for building correct and efficient concurrent search data structures from scratch, called \emph{PathCAS}. 
\textit{PathCAS} combines key ideas from efficient multi-word compare-and-swap (KCAS), and transactional memory, to allow for concurrent data structures which are both efficient, and easier to reason about than hand-crafted data structures using fine-grained primitives.

This mechanism is reminiscent of STM techniques, but it has two \textbf{semantic restrictions}: %
(1) PathCAS \emph{does not guarantee opacity}, and (2) PathCAS has a \textit{bounded read-set}.
These restrictions allow for significant performance benefits, as exemplified in Figure \ref{fig:example_perf}, which we believe are worth the increase in programming complexity compared to TM.\footnote{
Bounding the read-set size is not strictly necessary, but doing so helps us avoid the overheads associated with dynamically sized data structures.
One might imagine, for example, first filling a fixed array with nodes, then appending further nodes to a linked list.
Checking whether a read should be added to the array or the linked list would require an extra branch for each read (possibly on a hot code path).
Even with branch prediction, we have found this overhead to be significant in unbounded transactional memory implememtations.
}
Despite PathCAS being less expressive than TM, it is still sufficient to implement useful data structures.

{We begin by describing the semantics and rationale behind PathCAS in Section 2, and then illustrate how it can be used to implement a simple concurrent unbalanced internal BST, a data structure whose concurrent implementations  warranted publication on their own, e.g.~\cite{howley, dana-tree, Natarajan14, ram-tree, Ellen10, david-tree}, in Section 3. 
To illustrate the difficulty of implementing \emph{correct} variants of these data structures, we note that, during our investigation, we identified a correctness bug in the \emph{lock-based internal BST} of Drachsler et al.~\cite{dana-tree}, and that the publicly-available implementations of the \emph{lock-free internal BSTs} of~\cite{howley} and~\cite{ram-tree} fail experimental validation tests, and still lack complete correctness proofs. (We defer a detailed discussion of these issues to the full version of this paper.) %
In this context, PathCAS provides an implementation that is both efficient and is easy to prove correct.} 

{To further highlight the expressive power of PathCAS, we also show a how a lock-free \emph{balanced} version of this tree can be derived, creating an implementation of a \emph{fully-internal} relaxed AVL tree (Section 4), which performs favorably when compared to state-of-the-art solutions (please see Figure~\ref{fig:example_perf}).}

On the practical side, we present two efficient implementations of PathCAS: an HTM-enabled one which targets Intel systems, and a software-only variant applicable to AMD systems, and use them to empirically validate the above data structure designs. 
We perform an in-depth comparison against previous methods: from HTM- and STM-based  designs, to fine-grained lock-free variants, across both Intel and AMD systems with up to 256 threads. 
We find that PathCAS data structures are highly competitive, across the range from read-heavy to update-heavy workloads. In particular, our unbalanced BST implementation manages to outperform the state-of-the-art in unbalanced BSTs, and our balanced implementation matches the performance of the fastest known balanced BST in read-mostly workloads. %

In sum, PathCAS introduces an additional point in the  trade-off between  expressiveness and programming effort, on the one hand, and efficiency of the resulting data structures, on the other. 
Although PathCAS builds on known techniques, the resulting mechanisms are novel, and our search tree implementations can achieve state-of-the-art results on a variety of workloads.

\section{Related Work} \label{sec:related-work}
The question of identifying synchronization primitives with the ``right'' balance between expressivity and efficiency is as old as the field of concurrency. We describe related work with the goal of situating PathCAS on the spectrum of synchronization techniques that have been developed.

Treiber~\cite{Treiber} gave one of the first illustrations of a non-blocking data structure via CAS, while seminal work by Herlihy~\cite{wait-free-sync} showed that CAS is \emph{universal}. 
Anderson and Moir gave constant-time implementations of \textit{Load Link} (LL) and \textit{Store Conditional} (SC) from CAS \cite{LLSC}, which is more expressive than CAS and helps circumvent ABA problems.
Luchangco et al. ~\cite{kcss} expand on LL/SC by implementing k-compare-single-swap, allowing for the change of a single field to be conditional on \textit{multiple} fields containing their expected values. 
Another extension of LL/SC by Brown et al. \cite{TrevorLLX} introduced \textit{LLX} and \textit{SCX}, which function on \textit{data records}. Data records contain \textit{multiple} related fields which are loaded by LLX, and SCX only succeeds in changing a single memory location if none of the fields loaded by \textit{LLX} have changed since its invocation. LLX/SCX is less expressive than PathCAS, as it atomically: changes \textit{one} field to a \textit{new value}, and \textit{marks} some number of nodes.
Brown et al.~\cite{TrevorLLX} introduced several search tree design based on LLX/SCX, one of which we compare with in the experimental section (ext-chromatic-lf).

Harris et al. \cite{harris-kcas}, introduced a lock-free version of KCAS, which is the basis of our implementation along with optimizations from Arbel-Raviv and Brown \cite{trevor-recycle}.
\textit{KCAS} facilitates atomic multi-word updates, however it does nothing to simplify the arguments around values read but not updated, for example the path followed during a search. 

Timnat et al.~\cite{mcms} introduced a direct evolution of KCAS which they called Multi-Compare Multi-Swap (MCMS)~\cite{mcms}.
Their proposal has similar goals to our work, attempting to achieve a ``middle-ground'' between performance and ease of implementation. 
MCMS attempts to simplify the implementation of concurrent data structures by increasing the expressiveness of KCAS, allowing fields to \textit{compared} without being \textit{swapped}.
When HTM is available, this algorithm attempts to carry out operations in a transaction as a fast-path, similar to our approach. 

One key difference is that, in its slow path (or if HTM is not available), the MCMS algorithm incurs high synchronization costs. 
Specifically, when using MCMS for search data structures, one would need to include the entire search path in the arguments to MCMS.
On the software code path, MCMS would write to every node on the entire search path, including the root, both in updates and in searches.
This aborts all concurrent hardware transactions, likely causing cascading aborts on NUMA systems.
In turn, this induces a global contention bottleneck at the root of the search tree. 
Another key difference is that, on machines without hardware TM, we offer high performance, whereas MCMS essentially becomes the HFP KCAS algorithm.
We present a comparison of MCMS and PathCAS in the experimental section. 

The work of Mahreshanski et al. \cite{lockswapelide} analyzes interplay between HTM and other concurrent designs and how they function together.
This work suggests that HTM does not obviate other designs, but can be used to improve them.
We apply a similar technique in PathCAS, using HTM as a \textit{fast path} for operations, and falling back to our software algorithm when transactions fail a certain number of times.

Guerraoui and Trigonakis present \textit{Optik} \cite{OptikPPoPP}, which is a %
methodology for implementing concurrent data structures using optimistic concurrency and versioned locks.
PathCAS is built using similar low-level techniques, and encapsulates their complexities in an expressive primitive. %

Herlihy and Moss~\cite{HerlihyMoss} proposed HTM to provide flexible hardware support for non-blocking data structures. %
Shavit and Touitou introduced software TM \cite{STM}, as a software-only alternative.
PathCAS has similarities to software transactional memory (STM). STM is easier to use than HTM, as %
STM does not have the same space limitations, but it suffers from various overheads, such as requiring locks per word, dependency on dynamic data structures, and function call overheads on reads and writes. 

Kumar et al.~\cite{HyTM} introduced \textit{Hybrid Transactional Memory} (HyTM), which is a combination of HTM and STM. Our experiments include results from state-of-the-art HyTM algorithms.
To accelerate TM, other work has attempted to break up transactions to avoid overheads. One example is the speculation-friendly tree \cite{synchrobench}, which uses ElasticTM \cite{elastic}. However, %
this tree has relatively poor performance compared to the state of the art, as we show in Section~\ref{eval}. %

\vspace{-1em}
\section{PathCAS}

We provide an overview of the PathCAS primitive, and show how it can be used to implement a concurrent data structure.
Data structures implemented with PathCAS should be \textit{node-based}.
(A data structure can have many different \textit{types} of nodes, and PathCAS can be used to modify any or all of them.)
PathCAS combines ideas from KCAS and version based validation; the rest of this section will provide a description of these components and how they interact. 

\subsection{Background}

In essence, PathCAS is a generalization of KCAS with additional capabilities. 
KCAS is semantically similar to compare-and-swap (CAS), with the key difference that it is able to atomically change \textbf{multiple} addresses (which do \textit{not} have to be contiguous).
KCAS supports a single operation in the form of: \textit{KCAS(addr$_1$, oldValue$_1$, newValue$_1$, ... addr$_k$, oldValue$_k$, newValue$_k$)}.
KCAS does the following atomically: if \textit{addr}$_i$ contains \textit{oldValue}$_i$ for all \textit{i}, the value stored at \textit{addr}$_i$ is changed to \textit{newValue}$_i$ for all \textit{i} and returns true. 
If not, false is returned. 

Our implementation of PathCAS builds on the lock-free KCAS implementation of %
Harris, Fraser and Pratt (HFP)~\cite{harris-kcas}.

\paragraph{Harris, Fraser and Pratt (HFP) algorithm.}
A KCAS %
operation first creates a \textit{KCAS descriptor} $D$ that contains the arguments to the KCAS  as well as a \textit{status} word that indicates whether the KCAS is \textit{InProgress}, \textit{Succeeded} or \textit{Failed}.
It then performs a sequence of atomic double-compare single-swap (DCSS) operations to change all addresses from their respective old values to point to the KCAS descriptor $D$ \textbf{only if} the \textit{status} is still \textit{InProgress}.
(DCSS atomically determines whether \textit{two} (potentially non-contiguous) addresses contain their respective old values, and if so, changes one to a new value and returns \textit{true}. Otherwise it returns \textit{false}.)

If all of the DCSS operations are successful, then the \textit{status} is changed to \textit{Succeeded} and the addresses are all changed from the descriptor pointer to their respective new values using CAS.
Otherwise, the \textit{status} is changed to \textit{Failed} and the addresses are changed back to their old values.
This atomic change to the \textit{status} field  decides the outcome of the operation (and dictates the behaviour of \textit{helper} threads).

Since old values are replaced by descriptor pointers, any time a thread reads an address that could be modified by a KCAS, it must use a special \textit{KCASRead} function that knows how to handle descriptor pointers.
In particular, whenever \textit{KCASRead} encounters a KCAS descriptor, it will \textit{help} the corresponding KCAS operation to complete (by performing the same set of DCSSs and CASs that would be performed by the thread that initially invoked the KCAS).
The use of DCSS avoids ABA problems that could otherwise be introduced by this lock-free helping~\cite{harris-kcas}.
Crucially, DCSS prevents any helper from storing new pointers to the KCAS descriptor once the \textit{status} has become \textit{Succeeded} or \textit{Failed} (preventing helpers from resurrecting completed KCAS operations).

The KCAS descriptor pointer behaves conceptually like a \textbf{lock} that grants exclusive access of a field to a KCAS \textit{operation}, rather than to a particular \textit{thread}.
Once all addresses contain pointers to a KCAS descriptor, they can only be changed in accordance with the corresponding KCAS operation.
If the KCAS descriptor's \textit{status} is \textit{Succeeded}, then all helpers will try to change addresses to their respective new values.
The value contained in a memory address \textit{logically} changes when the \textit{status} of the descriptor changes to \textit{Succeeded}, and a successful KCAS is linearized then.
If the \textit{status} is \textit{Failed}, helpers will try to revert addresses to their old values.
A failed KCAS is linearized when it saw a value that did not match the address' old value.
Changing an address from a descriptor pointer to a value conceptually \textit{unlocks} it.

The authors implemented lock-free DCSS in software, using CAS and \textit{DCSS descriptor} objects to facilitate helping.
There is no need to pierce the atomic DCSS abstraction in our work, except to mention that DCSS descriptors exist.

\subsection{Semantics}

Whereas KCAS takes all of the addresses to be modified (and their respective old and new values) as explicit arguments, the PathCAS interface is closer to transactional memory.
In the following, we say a node $n$ has been \textit{visited} (resp., an address $addr$ has been \textit{added}) if there has been an invocation of \textit{visit(n)} (resp., \textit{add(addr, ...)}) since the last invocation of \textit{start()}.
PathCAS offers operations to:

\begin{itemize}
\item \textit{start()} gathering arguments for a PathCAS,
\item \textit{read(addr)} an address that might be modified via PathCAS,
\item \textit{add(addr, old, new)} an address \textit{addr} to be changed atomically from \textit{old} to \textit{new},
\item \textit{visit(n)} a node \textit{n}, %
and
\item \textit{validate()} to check whether any visited nodes have changed since they were visited. \textit{validate} succeeds and returns \textit{true} \textit{only if} no such change has occurred.
To allow for implementations with diverse progress properties, \textit{validate} can fail and return \textit{false} \textit{spuriously}. %
\item \textit{exec()} performs a \textit{KCAS} according to the arguments passed to invocations of \textit{add} since the last \textit{start}. That is, if all \textit{added} addresses contain their respective old values, then \textit{exec} succeeds, changing all \textit{added} addresses to their respective new values and returning \textit{true}. Otherwise it returns \textit{false}.
\item \textit{vexec()} performs \textit{exec} \textbf{only if} \textit{validate} would succeed.
\end{itemize} 

Behaviour is undefined if an address is \textit{added} multiple times with conflicting old and new values.
If a node is \textit{visited} multiple times (after a particular invocation of \textit{start}), then any changes to it after the \textit{earliest} such visit will cause \textit{exec} and \textit{validate} to return \textit{false}.

Note that \textit{start}, \textit{add} and \textit{exec} can simply be viewed as syntactic sugar for accumulating arguments to a KCAS operation, and \textit{read} is essentially \textit{KCASRead}.
However, \textit{visit} and \textit{vexec} have no direct analogue in KCAS.
To emulate the behaviour of \textit{visit(n)} and \textit{vexec} using KCAS, one could include \textit{every} address in node \textit{n} in the arguments to the KCAS, ``changing'' each address from its \textit{current} value \textit{v} to \textit{v}. %

\subsection{Implementation}

At a high level, the algorithm differs from HFP %
in the following ways: we implement the syntactic sugar described above for incrementally accumulating arguments, and %
we add a new \textbf{validation} phase wherein visited nodes are inspected to determine whether they have changed since they were visited.
Validation affects progress in subtle ways.

\paragraph{Basic operations: start, read, add, visit}
A PathCAS descriptor consists of a \textit{status} field, a sequence of $\langle addr, old, new \rangle$ triples denoted \textit{entries}, and a sequence of $\langle node, ver \rangle$ pairs denoted \textit{path}.
A \textit{start()} operation creates a new descriptor, and we refer to it as the \textit{thread's descriptor} (until \textit{start} is invoked again). %
Similarly to \textit{KCASRead}, a \textit{read(addr)} reads \textit{addr}, and if it sees a pointer to a descriptor, then it \textit{helps} the corresponding \textit{exec} or \textit{vexec} to complete (more about helping below), and repeats these steps.
If it sees a non-descriptor value, that value is returned.
An \textit{add(addr, old, new)} adds a triple to the thread's descriptor's \textit{entries}.

Version numbers are used to track changes to the data structure's nodes.
More specifically, each node is augmented with a version number \textit{ver} that should be incremented every time the node is changed.
The programmer using PathCAS is responsible for ensuring that s/he increments the version numbers of any node $n$ that s/he modifies using PathCAS.
This simply entails reading $n.ver$ and invoking \textit{add(node.ver, v, v+1)} to increment the value $v$ that was read from $n.ver$.
We discuss the motivation behind the decision further in Section~\ref{a:increment}. %

A \textit{visit(n)} operation reads the version $v$ of node \textit{n} using \textit{read}, saves $\langle \&n.ver, v \rangle$ in the thread's descriptor's \textit{path}, and returns $v$.\footnote{
Since the read-set (i.e., path) is bounded, we should mention what happens if the read-set size is exceeded.
In our code, exceeding the read set size triggers an assertion.
In practice, we imagine that the programmer will either over-allocate a large array for visited addresses, or will implement data structures for which a practical height bound is known. %
}
The use of \textit{read} means \textit{visit(n)} will help any \textit{exec} or \textit{vexec} it encounters that is in the process of modifying \textit{n}.

\paragraph{vexec}
An invocation of \textit{vexec} simply passes the thread's descriptor to a subroutine called \textit{help} and returns the result. %

\begin{algorithm}[tb]
\caption{PathCAS::help(desc)}\label{pathcas:help}
\begin{algorithmic}[1]
  \LineComment{Phase 1: ``lock'' addresses for this PathCAS}
  \If{\textit{desc.state $==$ Undecided}}
    \State \textit{newState $=$ Succeeded}
    \For{\textbf{each} \textit{(addr, old, new)} \textbf{in} \textit{desc.entries}}
    \State \textbf{retry\_dcss:}
      \State \textit{valueSeen $=$ DCSS$(\langle$addr, old, desc$\rangle$, $\langle$desc.state, Undecided$\rangle)$}
      \If{\textit{isDescriptor(valueSeen)}}
        \State \textit{help(valueSeen)} \Comment{DCSS failed because of other PathCAS}
        \State \textbf{goto} retry\_dcss \Comment{retry after helping}
      \ElsIf{\textit{valueSeen $\neq$ old}}
        \State \textit{newState $=$ Failed} \Comment{DCSS failed because old $\notin$ addr}
        \State \textbf{break} \Comment{Stop trying to ``lock'' addresses}
      \EndIf
    \EndFor
    \textcolor{red}{\If{\textit{newState $==$ Succeeded} \textbf{and not} \textit{validateDesc(desc)}}
      \State \textit{newState $=$ Failed} \Comment{If validation fails, fail and release ``locks''}
    \EndIf}
    \State \textit{CAS(desc.state, Undecided, newState)}
  \EndIf
  \LineComment{Phase 2: ``unlock'' addresses to new or old values according to state}
  \State \textit{result $=$ (desc.state $==$ Succeeded)}
  \For{\textbf{each} \textit{(addr, old, new)} \textbf{in} \textit{desc.entries}}
    \State \textit{CAS(addr, desc, (result ? new : old))}
  \EndFor
  \State \textbf{return} \textit{result}
\end{algorithmic}
\end{algorithm}

Consider the set $S$ of addresses \textit{added} to the thread's descriptor \textit{desc} (i.e., the addresses that should be \textit{changed} by this PathCAS operation).
An invocation of \textit{help(desc)} first uses DCSS to change all of the addresses in $S$ from their respective old values to point to the PathCAS descriptor.
If any of these DCSSs fail, then all of the addresses that \textit{were} changed to point to the PathCAS descriptor are reverted to their old values using CAS.
Otherwise, now that all addresses are conceptually locked for this PathCAS operation, we can start \textit{validation}.
The two \textbf{red} lines of code in Algorithm~\ref{pathcas:help} are the only changes from the HFP KCAS algorithm.

\begin{algorithm}[tb]
\caption{PathCAS::validateDesc(desc)}\label{pathcas:validate}
\begin{algorithmic}[1]
  \For{\textbf{each} \textit{(node, visitVer)} \textbf{in} \textit{desc.path}}
    \State \textit{currentVer $=$ node.ver}
    \If{\textit{currentVer $=$ desc}}
      \State \textbf{continue} \Comment{``locked'' for \textit{our} PathCAS}
    \EndIf
    \If{\textit{isDescriptor(currentVer)} \textbf{and} \textit{currentVer $\neq$ desc}}
      \State \textbf{return} \textit{false} \Comment{``locked'' for a \textit{different} PathCAS}
    \EndIf
    \If{\textit{currentVer $\neq$ visitVer} \textbf{or} (visitVer \& 1)}
      \State \textbf{return} \textit{false} \Comment{node's version has been changed or marked}
    \EndIf
  \EndFor
  \State \textbf{return} \textit{true}
\end{algorithmic}
\end{algorithm}

\paragraph{Validation}
To perform validation, \textit{help} invokes a subroutine called \textit{validateDesc(desc)}, which rereads the version number of each \textit{visited} node and checks whether it has changed, We discuss the practical considerations of using version numbers, namely wrapping, in the full version of the paper. %

To simplify and optimize the implementation of data structures that \textit{mark} nodes when removing them, we steal the least-significant bit from each node's version number to indicate whether the node has been marked.
(In data structures without marking, this bit is simply not used.)
Validation succeeds only if all visited nodes are unmarked.
In a data structure that marks nodes, success implies that no visited node has been deleted.
Storing the marked bit in the same word as the version number allows a node to be marked as deleted at the same time as its version number is updated with minimal overhead.
(Note that \textit{visit} returns the mark along with the version number.)

If validation succeeds, none of the visited nodes have changed (or been deleted, in a data structure with marking) since they were \textit{visited}. %
In this case, the PathCAS descriptor's \textit{status} field is changed from \textit{InProgress} to \textit{Succeeded} using CAS.
Otherwise, it is changed from \textit{InProgress} to \textit{Failed} using CAS.
Once the \textit{status} field changes to either \textit{Succeeded} or \textit{Failed} it cannot change again.
Finally, if the \textit{status} is \textit{Succeeded}, the addresses in $S$ are changed to their new values using CAS.
Otherwise, their old values are restored via CAS.

\paragraph{Helping}
As in the related KCAS algorithms, since old values are replaced by descriptors, a special \textit{read()} function (analogous to \textit{KCASRead()})\textbf{ must be used to read any fields that can ever be modified by PathCAS}.
A \textit{read()} function that encounters a descriptor pointer will \textit{help} the corresponding PathCAS operation to complete.
Helpers perform the same sequence of steps as the thread that first invoked \textit{vexec} for this PathCAS.
Note that the validation phase will be performed by all helpers, and slow helpers may fail validation even if a fast helper succeeded.
However, a slow helper that fails validation cannot revert addresses to old values, since it will attempt to do so using CAS, and this CAS will fail if the node no longer points to the same PathCAS descriptor (with the same version).
Moreover, as long as a node points to the PathCAS descriptor, it cannot cause validation to fail.

\paragraph{Progress and helping}
At this point, one might wonder why forward progress is guaranteed even though an operation $O$ can invoke \textit{read} and begin helping another operation $O'$ \textit{before} $O$ has finished invoking \textit{add} on all of its fields: Can this cause $O$ and $O'$ to abort each other?
We note that such helping also occurs the lock-free HFP KCAS (in \textit{KCASRead}).
The key observation is: although $O$ can help another operation before $O$ has finished adding its addresses, the operation \textit{being helped} must have already finished adding all of \textit{its} own addresses.
So, such mutual aborts cannot occur.
Progress is discussed in greater detail below.

\paragraph{exec}
The \textit{exec} operation is just a stripped down version of \textit{vexec} that does not perform validation.
It can be implemented simply by removing all pairs for visited nodes from the thread's descriptor before invoking \textit{help}.
The intention of including \textit{exec} in the interface is to allow nodes to be \textit{visited} during a data structure traversal \textit{in case validation will be needed}, and then to decide \textit{not} to validate (reducing overhead) at the end of the traversal.

\paragraph{validate}
The \textit{validate} operation simply passes the thread's descriptor \textit{desc} to \textit{validateDesc} and returns the result.

\subsection{Correctness and Progress}

\paragraph{Correctness}
The \textit{exec} operation is the same as the linearizable lock-free HFP KCAS algorithm, and is linearized in the same way.
In other words, for a successful \textit{exec}, we linearize at the change to the descriptor's \textit{status} field, and for a failed \textit{exec}, we linearize at the read (of an unexpected, non-descriptor value) that caused the failure.
Of course, if a \textit{vexec} is performed but \textit{no nodes were visited}, then \textit{vexec} is the same as \textit{exec}, and is linearized the same way.

The case where a \textit{vexec} is performed after some nodes \textit{were visited} is more nuanced.
Recall that many \textit{helper} threads can participate in a single \textit{vexec} operation $O$ by invoking \textit{help(desc)}, where \textit{desc} is $O$'s descriptor.
The helpers will collaborate to first ``lock'' all addresses, then perform validation, then use CAS to change the descriptor's \textit{status}, then ``unlock'' all addresses.
Only one helper will successfully change the descriptor's \textit{status}, and we call that helper the \textit{decider}.
Once the \textit{status} field is changed, the behaviour of all helpers is dictated by its contents. %
Two cases arise.

If no visited node has its version number changed (or marked) between when it was visited and when the decider rereads its version number during validation, then validation succeeds. %
Given that validation succeeds, \textit{vexec} behaves the same as a successful HFP KCAS (matching the PathCAS semantics).
We linearize just before the decider invokes \textit{validate(desc)}, at which point all added addresses are ``locked'' and no visited node had changed.\footnote{Just as in the HFT KCAS algorithm, at this linearization point, since all added addresses are ``locked,'' and threads cannot read their values without first helping, no thread can read one of the added addresses and obtain an old value. Instead, a new value will be obtained (after helping).} %

However, if some visited node has its version \textit{changed} (or marked) between when it was visited, and when the decider rereads its version number during validation, then \textit{vexec} will behave like a failed HFP KCAS, restoring old values and returning \textit{false} (matching the PathCAS semantics).
We linearize when the value that caused the failure was read.

\paragraph{Progress}
The progress guarantees for PathCAS are subtle.
The \textit{start} and \textit{add} operations are wait-free.
The \textit{visit}, \textit{exec} and \textit{vexec} operations only perform a constant number of steps in addition to an invocation of \textit{help}, but \textit{help} can invoke itself recursively.
The latter is also true in the HFP KCAS algorithm, and it manages to guarantee \textit{lock-free} progress with an assumption that the addresses passed to KCAS are \textit{sorted}.
If we make the same assumption, then it is possible to argue that \textit{visit}, \textit{exec} and \textit{vexec} operations are lock-free. %

However, lock-freedom only guarantees that infinitely many operations will \textit{terminate} in an infinite execution---not that any of them will \textit{succeed}.
To see why this could be a problem, consider a data structure with two nodes, $A$ and $B$.
Suppose thread $t_1$ \textit{visits} $A$ and \textit{adds} $B$ (to change $B$'s value), and thread $t_2$ \textit{visits} $B$ and \textit{adds} $A$ (to change $A$'s value).
If $t_1$ and $t_2$ both ``lock'' their respective \textit{added} nodes, then both perform validation, both will fail validation 
and ``unlock,'' \textit{terminating}, and hence satisfying lock-freedom, but perhaps preventing the data structure \textit{using} PathCAS from making progress. %
The problem here is that both \textit{vexec} operations can fail \textbf{spuriously}, even though the non-descriptor values that are \textit{semantically} contained in $A$ and $B$ have not changed.

\subsection{Avoiding spurious failures}

It is impossible to avoid \textit{vexec} failures altogether.
One can always invoke \textit{vexec} after \textit{adding} addresses with unreasonable old values that they have \textit{never contained}.
However, the above implementation allows every \textit{vexec} to fail \textit{spuriously}, simply because a \textit{visited} node contained a descriptor pointer.
To be able to implement lock-free data structures using PathCAS, we need to change this. %
Without loss of generality, in the following, we focus on \textit{vexec} operations (since \textit{exec} operations are just a special case).

We say a thread $t$ invokes a \textbf{reasonable} \textit{add(addr, old, new)} if the \textit{old} value was read from \textit{addr} at some point \textit{since} the last invocation $S_t$ of \textit{start} by $t$.
If a thread invokes \textit{start} followed by a sequence of \textit{reasonable} \textit{add} operations, followed by a \textit{vexec}, then we call the \textit{vexec} \textit{reasonable}.
With a small modification to \textit{vexec}, we can guarantee the following. %

\noindent\textbf{Property P1.}
If each thread $t$ invokes only \textit{reasonable} \textit{vexec} operations, then whenever a \textit{vexec} $V_t$ fails, another \textit{vexec} has \textit{succeeded} since $V_t$'s \textit{start} operation, $S_t$.

\paragraph{Strong vexec}
In the implementation described previously, a \textit{vexec} fails \textit{validation} simply because it sees a descriptor, and ``unlocks'' all of its nodes.
Rather than failing spuriously, \textit{vexec} can fall back to a slower lock-free code path on which it creates a \textit{new copy} of its descriptor with slightly different contents.
This new descriptor contains all of the \textit{added} fields of the old one, but crucially, all of the \textit{visited} $\langle node, ver \rangle$ \textit{pairs} in the old descriptor are converted into \textit{added} $\langle node.ver, ver, ver \rangle$ triples.
These triples are then \textit{sorted}.
Finally, this new descriptor is passed as the argument to an \textit{exec} operation, which will effectively ``lock'' all of the visited nodes' version numbers rather than simply validating them.

In practice, to reduce overhead, before switching to this slow path, \textit{vexec} can repeatedly try again (a bounded number of times) using an exact copy of its descriptor and performing validation as usual.
Since the slow path has high overhead, the number of retries can be tuned to avoid invoking the slow path except where it is really necessary. One can also try contention management strategies such as bounded exponential backoff to further reduce slow path usage.

Note that the choice of \textit{vexec} or strong \textit{vexec} \textbf{does not affect performance} in our experiments, as spurious failures are sufficiently infrequent that there is no need to switch to the slow path.

\paragraph{How strong vexec helps}
Strong \textit{vexec} is not vulnerable to the progress problem described above.
Suppose thread $t_1$ \textit{visits} $A$ and \textit{adds} $B$, and thread $t_2$ \textit{visits} $B$ and \textit{adds} $A$.
If $t_1$ and $t_2$ both ``lock'' their respective \textit{added} nodes, then both perform validation, both will fail validation and ``unlock,'' \textit{but they will not terminate}.
Rather, they will retry.
They can retry only a bounded number of times before executing the slow path.
Once both are executing the slow path, they will each try to lock $A$ \textit{then} $B$ (because of address sorting), and one of them will succeed.

Let us sketch why P1 is satisfied.
A \textit{reasonable} \textit{vexec} $V_t$ does not fail when it fails validation.
Rather, it fails only if (a) one of its \textit{reasonable} \textit{added} addresses contains an unexpected non-descriptor value, or (b) one of its visited nodes' version numbers has been incremented.
(In both cases, $V_t$ might help one or more other \textit{vexec} operations to complete before it can read a non-descriptor value.)
In case (a), since the \textit{added} address contained its reasonable old value at some time since $S_t$, and it can be changed to a different non-descriptor value only by a successful \textit{vexec}, P1 holds.
Similarly, in case (b), the visited node's version number was read since $S_t$, and a node's version number is incremented only when the node is changed by a successful \textit{vexec}, so P1 holds.

\subsection{Optimizing descriptor management}

Arbel-Raviv and Brown~\cite{trevor-recycle} showed how to transform the HFP algorithm to \textit{eliminate} the need to allocate and free descriptors for DCSS and KCAS.
The same transformation can be applied to PathCAS, allowing each thread to reuse one PathCAS descriptor (and \textit{we do this} in our experiments).
The transformation in~\cite{trevor-recycle} is straightforward and mechanical, but it makes the pseudocode much more difficult to read, so we presented \textit{pre-transformation} code.
Similarly, to avoid complicating the code, we treated DCSS as an atomic primitive.
(In reality it is implemented in software as in~\cite{trevor-recycle}.)

\subsection{Optimizing with hardware TM}
On systems with support for hardware TM, the PathCAS algorithm above can be used as a \textit{fallback} code path, and a faster hardware TM based algorithm can be used as a \textit{fast path}.
In other words, we can use a hardware transaction to perform \textit{vexec/exec} atomically without the overhead of synchronizing via DCSS and CAS. %

Our hardware TM based fast path is simply obtained by taking the software algorithm above, wrapping it in a transaction, and then performing a sequence of sequential optimizations (which do not affect correctness because of the atomicity of hardware transactions).

\subsection{Comparison to transactional memory}
PathCAS is most similar to a lock-free, non-opaque, bounded, object-based TM that is compiled directly into the data structure (rather than being compiled as a library).
Such a highly restricted TM implementation could avoid many of the same traditional TM overheads that we also avoid: incremental validation to guarantee opacity, locks per word (instead of version numbers per node), dynamic data structures such as hash tables with intrusive lists (instead of a simple array for our visited nodes), and function call overhead for reads and writes.
However, such a TM would be no easier to use than PathCAS, and to our knowledge no such TM exists.
Moreover, it would be a substantial undertaking to design an efficient TM with these properties.

\subsection{Design Decision: Manual Version Numbers} \label{a:increment}
We contemplated building the incrementing of version numbers into the abstraction, so that it would be automatic.
However, we decided that requiring addresses passed to \textit{add} to be fields of nodes might be overly restrictive.
We do not want to rule out applications wherein PathCAS could be used to atomically validate a set of nodes, and also modify arbitrary fields that do not belong to a data structure node (such as a size variable).
Therefore, we only require nodes that are \textbf{passed to} \textit{visit} to have version numbers to track changes, and leave it to the programmer to manage them.
Note that our interface supports debugging mechanisms to catch errors in managing version numbers.\footnote{%
For example, \textit{visit} can save the address ranges of all visited nodes, and \textit{exec} can then check for intersections between the \textit{visited} nodes and \textit{added} addresses that do not have a corresponding \textit{node.ver} increment. This introduces overhead, but can be enabled only when debugging.}
In an application where it is acceptable to restrict PathCAS so that it only accesses \textit{nodes}, one could easily change \textit{add()} to also take a node pointer in addition to the field pointer, and automate version increments. %

\section{Application: Lock-free Internal BST} \label{section:bst}

In this section we provide a concrete example of how to create a data structure using PathCAS, namely, a concurrent set implemented as an \textit{internal} binary search tree. 

\paragraph{Operations}
The tree supports the following operations.
\textit{contains(key)} returns \textit{true} if \textit{key} is in the tree, and \textit{false} otherwise.
\textit{insert(key, val)} returns \textit{false} if \textit{key} is in the tree.
Otherwise, it inserts \textit{key} and \textit{value} returns \textit{true}.
\textit{delete(key)} returns \textit{false} if \textit{key} is not in the tree.
Otherwise, it deletes \textit{key} and its associated \textit{value} and returns \textit{true}.

\paragraph{Data structures}
Tree nodes have fields for \textit{left} and \textit{right} children, a \textit{key}, a \textit{value}, and a version number \textit{ver} as required by PathCAS.

To avoid special cases, the tree always contains two \textit{sentinel} nodes with keys $-\infty$ and $+\infty$.
Consequently, every node with key $k \in (-\infty, +\infty)$ always has both \textit{predecessor} and \textit{successor} nodes.
The sentinel with key $+\infty$, which we call the \textit{maxRoot}, is the root of the entire tree.
The sentinel with key $-\infty$, which we call the \textit{minRoot}, is the left child of \textit{maxRoot}.
No field of \textit{maxRoot} is ever changed.
All keys in $(-\infty, +\infty)$ are always found in the \textit{right subtree} of \textit{minRoot}.

\paragraph{Implicit read()}
Our pseudocode exemplifies a feature of our PathCAS implementation in C++: implicit \textit{read} invocations.
Whereas \textit{KCASRead()} calls must be explicitly added by the programmer, in C++, templates and operator overloading can be used to invoke PathCAS \textit{read()} calls automatically.
\footnote{The programmer need only annotate the \textit{types} of fields that can be modified by KCAS in the data structure \textit{node} type definition, by \textit{wrapping} each field's type in a special PathCAS template type.
For example, \texttt{int key} becomes \texttt{casword<int> key}.
This requires very little effort, and can even help us catch some types of programmer errors, such as unsafe writes to fields that can be modified with PathCAS.} 
\textit{Thus, in our BST pseudocode we do not explicitly invoke the PathCAS read function, but the reader should note: any field that is ever modified by PathCAS is accessed using read.}
\begin{algorithm}[t]
    \caption{BST::search(\textit{key})}\label{search}
    \begin{algorithmic}[1]
   \While{\textit{true}}
    \State \textit{parent $=$ maxRoot}
    \State \textit{parentVer $=$ visit(parent)}
    \State \textit{curr $=$ minRoot}
    \State \textit{currVer $=$ visit(curr)}
    \While{\textit{curr $\neq$ NIL}}
       \State \textit{currKey $=$ curr.key}
       \If{\textit{key $==$ currKey}}
         \State \textbf{return} \textit{$\langle$true, curr, currVer, parent, parentVer$\rangle$}
       \EndIf
       \State \textit{parent $=$ curr}
       \If{\textit{key $>$ currKey}}
         \textit{curr $=$ curr.right}
       \Else{ \textit{key $<$ currKey}}
         \textit{curr $=$ curr.left}
       \EndIf
       \State \textit{parentVer $=$ currVer}
       \State \textit{currVer $=$ visit(curr)}
    \EndWhile
    \State \textbf{return} \textit{$\langle$false, curr, currVer, parent, parentVer$\rangle$}
    \EndWhile
    \end{algorithmic}
\end{algorithm}

\paragraph{Search}
The \textit{search(key)} procedure (Algorithm \ref{search}) is invoked by \textit{contains}, \textit{insert} and \textit{delete}.
It performs a traditional BST search until it encounters a \textit{NIL} pointer, or finds a node containing \textit{key}.
\textit{search} returns a tuple of five items with types: \textit{$\langle$Boolean, node, version, node, version$\rangle$}.
If the key \textit{key} is found, this tuple contains \textit{true}, followed by the node that contains \textit{key} and its version number (observed during \textit{search}), followed by its parent and its version number. %
If \textit{key} was not found, then search returns \textit{false}, followed by the final node it encountered (before seeing a NIL pointer) and the version number of that node.
The remaining two fields are ignored in this case. %
We return these two nodes (and their versions) to be used by \textit{insert} and \textit{delete}. %
The key difference between this search and a \textit{sequential} BST search is that each node is passed to an invocation of \textit{visit}.

\paragraph{Contains} %
The \textit{contains(key)} operation %
invokes \textit{search(key)}, followed by \textit{validate()}.
If validation succeeds, then the entire search was effectively \textit{atomic} (since the entire path was \textit{visited}), so we return \textit{true} if \textit{search} found \textit{key} and \textit{false} otherwise.
If validation fails, we retry the \textit{contains} from scratch.

\begin{figure}
\vspace{-2mm}
\includegraphics[width=0.8\linewidth]{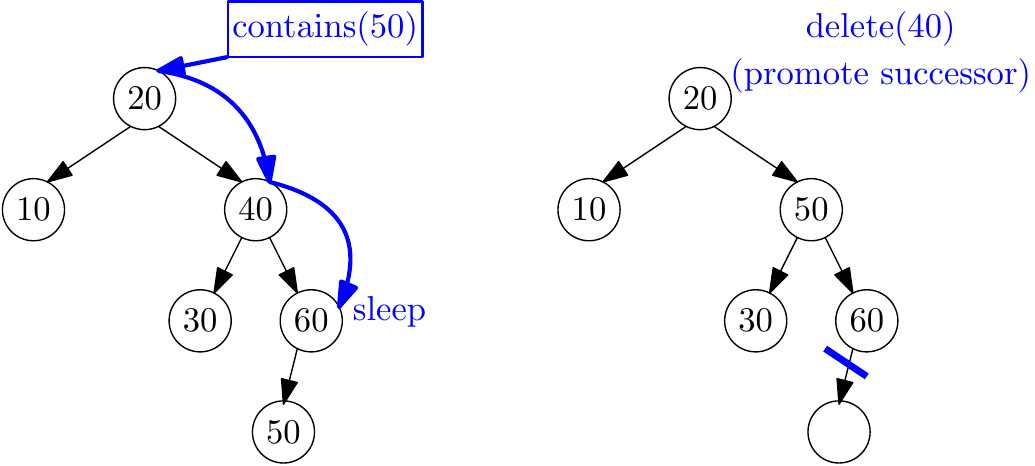}
\vspace{-4mm}
\caption{Error in \textit{contains} operation without validation}
\label{fig:contains-error-without-validation}
\end{figure}

One might wonder why \textit{contains} performs validation.
Figure~\ref{fig:contains-error-without-validation} depicts an error that can occur without validation in an internal BST with atomic updates.
In that execution, \textit{contains(50)} reaches node 60 then sleeps.
Then, \textit{delete(40)} atomically deletes 40, promoting the successor key 50 in its place.
When \textit{contains(50)} wakes up and continues its search, it will conclude that 50 is not in the tree and return \textit{false}.
This is incorrect, as 50 has been in the tree throughout the entirety of the \textit{contains(50)} operation.
Validation would catch this error, since \textit{delete(40)} changes a node that \textit{contains(50)} has already visited.

Validation makes arguing correctness trivial (validated searches are \textit{atomic}), and only incurs a small amount of overhead.
We discuss an optimized implementation that performs \textit{less} validation in Section~\ref{sec:bst-optimizations}.

\paragraph{Insert}
The \textit{insert(key, val)} operation (Algorithm \ref{insert}) first invokes \textit{search(key)} to determine whether \textit{key} is already in the tree, and to locate the \textit{parent} whose child pointer should be changed to insert a new node containing \textit{key} (if \textit{key} is not already in the tree). %

If the \textit{search} finds \textit{key}, then \textit{validate} is invoked to determine whether any the nodes visited by \textit{search} changed since they were visited.
If validation succeeds, it establishes a time $t$ during the \textit{insert} operation when \textit{key} was already in the tree, so \textit{false} is returned (the \textit{insert} is linearized at time $t$).

\begin{algorithm}[t]
    \caption{BST::insert(\textit{key, val})}\label{insert}
    \begin{algorithmic}[1]
        \While{\textit{true}}
        \State \textit{start()}
        \State\textit{$\langle$foundKey, -, -, parent, parentVer$\rangle =$ search(key)}
        \IIf {\textit{foundKey} \textbf{and} \textit{validate()}} \textbf{return} \textit{false};
        \State \textit{newLeaf $=$ createNode(key, val)} \label{line:insert:node-creation}
        \State \textit{parentKey $=$ parent.key}
        \State \textit{ptrToChange $=$ (key $<$ parentKey) ? \&parent.left : \&parent.right}
    	\State \textit{add(ptrToChange, NIL, newLeaf)}
        \State \textit{add(\&parent.ver, parentVer, parentVer + 2)} \Comment{Increment version}
        \If {\textit{vexec()}} \label{line:insert:kcas}
        	 \textbf{return} \textit{true} \label{line:insert:return-true}
    	 \EndIf
        \EndWhile
    \end{algorithmic}
\end{algorithm}

If \textit{search()} does not find \textit{key}, then a new node containing \textit{key} and \textit{val} is created, and \textit{add} is invoked so that the appropriate child pointer of \textit{parent} will be changed (by a subsequent \textit{vexec}) to point to this new node.
Since we are trying to change \textit{parent}, \textit{add} is invoked to cause the parent's version number to be incremented. %

Finally, \textit{vexec} is invoked to (attempt to) atomically change the \textit{added} addresses \textbf{only if} none of the visited nodes have changed since they were visited.
If it succeeds, we linearize at the \textit{vexec}.
Otherwise, we retry the \textit{insert} from scratch.

\paragraph{Delete}
The \textit{delete(key)} operation (Algorithm \ref{delete}) first searches for the key to be deleted, similar to \textit{insert}. 

If \textit{search} does not find \textit{key}, the path followed in \textit{search} is validated.
If this validation is successful, \textit{false} can be returned as a time has been established when the entire path traversed by \textit{search}, which did \textit{not} contain \textit{key}, was atomically contained in the tree.
Thus, the tree did not contain \textit{key} at some time during the \textit{delete}, and we can linearize at that time.
If validation fails, \textit{delete} is retried from scratch.

\begin{algorithm}[tb]
\caption{BST::getSuccessor(\textit{start, startVer})}\label{successor}
\begin{algorithmic}[1]
  \State \textit{succP $=$ start}
  \State \textit{succPVer $=$ startVer}
  \State \textit{succ $=$ startNode.right}
  \State \textit{succVer $=$ visit(succ)}
  \While {\textit{true}}
    \State \textit{next $=$ succ.left}
    \If {next $==$ NIL}
      \textit{\textbf{return} \textit{$\langle$succ, succVer, succP, succPVer$\rangle$}}
    \EndIf
    \State \textit{succP $=$ succ}
    \State \textit{succPVer $=$ succVer}
    \State \textit{succ = next}
    \State \textit{succVer = visit(next)}
  \EndWhile
\end{algorithmic}
\end{algorithm}

If \textit{search} finds \textit{key}, the node \textit{curr} containing \textit{key} is returned, along with its \textit{parent}.
\textit{delete()} then checks whether these nodes are marked (and hence deleted already).
If either is marked, the \textit{delete} is retried from scratch.

Next, \textit{delete} reads \textit{curr.left} and \textit{curr.right} to determine how many children \textit{curr} has.
It does not matter if the number of children is counted incorrectly, for example, because \textit{curr.left} is changed between these two reads.
If \textit{curr} changes, our subsequent \textit{vexec} will fail and the \textit{delete} will retry.
We would not have to consider this possibility at all if we were using opaque transactional memory instead of PathCAS, but we would argue that this reasoning is not onerous to avoid the overheads that come along with opacity.

As in a \textit{sequential} internal BST, three cases arise.
In each case, we use \textit{vexec} to perform the sequential update \textit{atomically}.
If \textit{vexec} succeeds, \textit{delete} returns \textit{true}, and we linearize at the \textit{vexec}.
If \textit{vexec} fails, the \textit{delete} is retried from scratch.

\begin{algorithm}[t]
    \caption{BST::delete(\textit{key})}\label{delete}
    \begin{algorithmic}[1]
      \While{true}
        \State \textit{start()}
        \State \textit{$\langle$foundKey, curr, currVer, parent, parentVer$\rangle =$ search(key)}
        \If {\textbf{not} \textit{foundKey}}
            \IIf{validate()} \textbf{return} \textit{false};
            \ElseI\ \textbf{continue}
        \EndIf
        \IIf {\textit{currVer \& 1} \textbf{or} \textit{parentVer \& 1}} \textbf{continue} \Comment{if \textbf{marked}} \label{delete:markedcheck}
        \State \textit{currLeft $=$ curr.left}
        \State \textit{currRight $=$ curr.right}
        \If {\textit{currLeft $==$ NIL} \textbf{and} \textit{currRight} $==$ NIL} \Comment{\textbf{Leaf} deletion}
          \State \textit{ptrToChange $=$ (curr $==$ parent.left) ? \&parent.left : \&parent.right}
          \State \textit{add(ptrToChange, curr, NIL)}
          \State \textit{add(\&parent.ver, parentVer, parentVer + 2)}
          \State \textit{add(\&curr.ver, currVer, currVer + 1)} \Comment{\textbf{mark} curr}
          \If {\textit{vexec()}} \textbf{return} \textit{true} \EndIf
        \ElsIf {\textit{currLeft $==$ NIL \textbf{or} currRight $==$ NIL}} \Comment{\textbf{One child}}
          \State \textit{childToKeep $=$ (currLeft $==$ NIL) ? currRight : currLeft}
          \State \textit{ptrToChange $=$ (curr $==$ parent.left) ? \&parent.left : \&parent.right}
          \State \textit{add(ptrToChange, curr, childToKeep)}
          \State \textit{add(\&parent.ver, parentVer, parentVer + 2)}
          \State \textit{add(\&curr.ver, currVer, currVer + 1)} \Comment{\textbf{mark} curr}
          \If {\textit{vexec()}} \textbf{return} \textit{true} \EndIf
        \Else \Comment{\textbf{Two child} deletion}
          \State \textit{$\langle$succ, succVer, succP, succPVer$\rangle =$ getSuccessor(curr, currVer)}
          \IIf {\textit{succ $==$ NIL} \textbf{or} (\textit{succVer \& 1}) \textbf{or} (\textit{succPVer \& 1})}\ \textbf{continue}
          \State \textit{succR $=$ succ.right} \Comment{succ has at most one child}
          \If{\textit{succR $\neq$ NIL}} \Comment{if it does}
            \State \textit{succRVer $=$ visit(succR)}
            \IIf {\textit{succRVer \& 1}} \textbf{continue}
          \EndIf
          \State \textit{ptrToChange $=$ (succP.right $==$ succ) ? \&succP.right : \&succP.left}
          \State \textit{add(ptrToChange, succ, succR)} \label{line:deleteSimpleStart}
          \State \textit{add(\&curr.val, curr.val, succ.val)}
          \State \textit{add(\&curr.key, key, succ.key)}
          \State \textit{add(\&succ.ver, succVer, succVer + 1)} \Comment{\textbf{mark} succ}
          \State \textit{add(\&succP.ver, succPVer, succPVer + 2)}  \label{line:deleteSimpleEnd}
          \If {\textit{succP $\neq$ curr}} %
            \textit{add(\&curr.ver, currVer, currVer + 2)}
          \EndIf
          \If {\textit{vexec()}} \label{line:deleteTwo:kcas}
             \textbf{return} \textit{true} \label{line:delete:two-return-success}
          \EndIf
        \EndIf
      \EndWhile
    \end{algorithmic}
\end{algorithm}

\paragraph{Leaf deletion}
If \textit{curr} has no children, \textit{vexec} is invoked to unlink and mark it, and to increment the \textit{parent.ver}.

\paragraph{One child deletion}
If \textit{curr} only has a single child, \textit{vexec} is invoked to replace \textit{curr} by its child, marking \textit{curr} and incrementing \textit{parent.var}.

\paragraph{Two child deletion}
If \textit{curr} has two children, we will try to replace its key and value with those of its \textit{successor} \textit{succ}, then delete the node \textit{succ} (exactly as one does in a sequential internal BST).
We first locate the successor \textit{succ} and its parent \textit{succP} using \textit{getSuccessor}, which \textit{visits} each node it traverses and returns the version numbers it saw.
Note that the successor cannot have a left child (or else it is not the successor).
So, \textit{succ} has at most one child, which means it can be deleted using one of the previous two cases.

If it has a child, \textit{succR}, then we change the appropriate pointer in the parent \textit{succP} to \textit{succR}.
If it has no children, \textit{succR} is NIL, so changing the appropriate pointer in \textit{succP} to \textit{succR} simply unlinks \textit{succR}.
We \textit{mark} \textit{succR} since it is being removed, and \textit{increment} the versions of \textit{curr} and \textit{succP} since they are being changed.
(If the successor happens to be the right child of \textit{curr}, then \textit{succP} and \textit{curr} are the same node, so we only need to increment one of \textit{succP} and \textit{curr}.)
Note that the success of \textit{vexec} implies that \textit{succ} actually \textit{is} the successor of \textit{curr} when the \textit{delete} is linearized.

\subsection{Optimizing to reduce validation}\label{sec:bst-optimizations} 

In \textit{contains}, if \textit{foundKey} is \textit{true}, then it is unnecessary to \textit{validate}, because the key can only be found if it was actually in the tree at some time during the contains, and we can linearize the contains at that time.
(If a node was unlinked \textit{before} \textit{contains} \textit{began}, then \textit{contains} cannot reach it.)

Readers familiar with the lazy linked list~\cite{LazyList} might wonder why we do not have to consider the case where a node is \textit{marked} as \textit{logically deleted} \textbf{before} our \textit{contains} began, but not \textit{unlinked} until later.
Note that, unlike typical concurrent data structures with marking, where nodes are marked \textit{before} they are removed, in our tree nodes are unlinked and marked in the \textit{same atomic PathCAS} operation.
Thus, reachability in our tree is \textit{equivalent} to being unmarked (and hence logically contained in the tree).

This optimization can also be applied to \textit{insert} and \textit{delete} operations that return \textit{false}.
Additionally note that validation is not required in the leaf deletion and one child deletion cases, so \textit{exec} can be used instead of \textit{vexec}.
A detailed explanation is deferred to the full version of this paper. %

\vspace{-1mm}

\begin{figure*}[th]
\vspace{-3mm}
{
\newcommand{\plotwidth}{0.32\linewidth}
\newcommand{\legendwidth}{0.6\linewidth}
\begin{tabular}{p{\plotwidth}p{\plotwidth}p{\plotwidth}}
\centering\noindent\textbf{1\% updates} & \centering\noindent\textbf{10\% updates} & \centering\noindent\textbf{100\% updates}
\end{tabular}

\vspace{-0.5mm}
\rotatebox{90}{\textbf{Unbalanced BSTs}}
\includegraphics[width=\plotwidth]{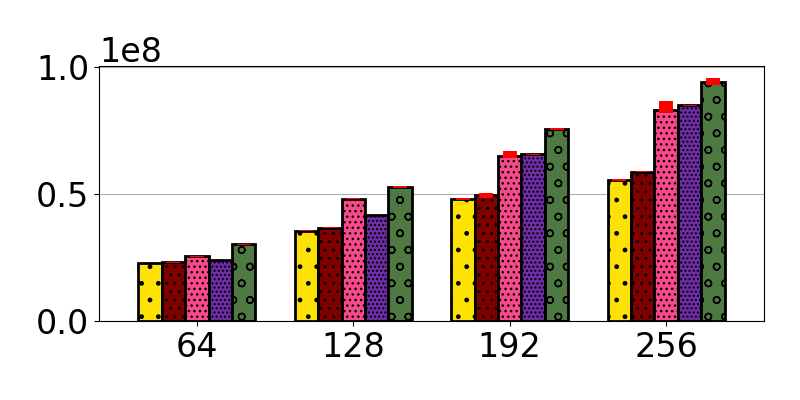}
\includegraphics[width=\plotwidth]{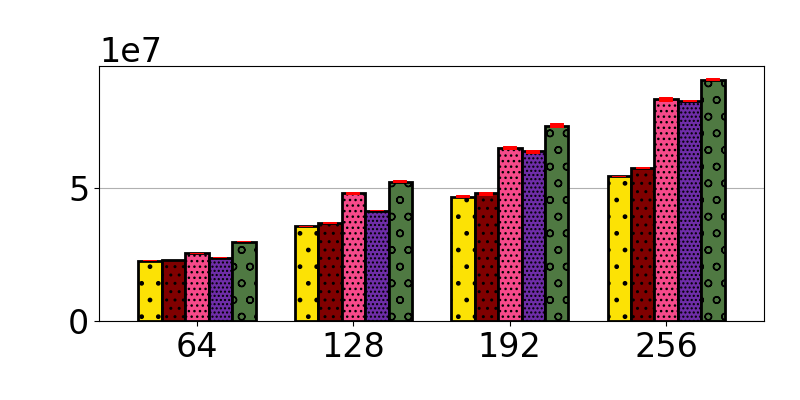}
\includegraphics[width=\plotwidth]{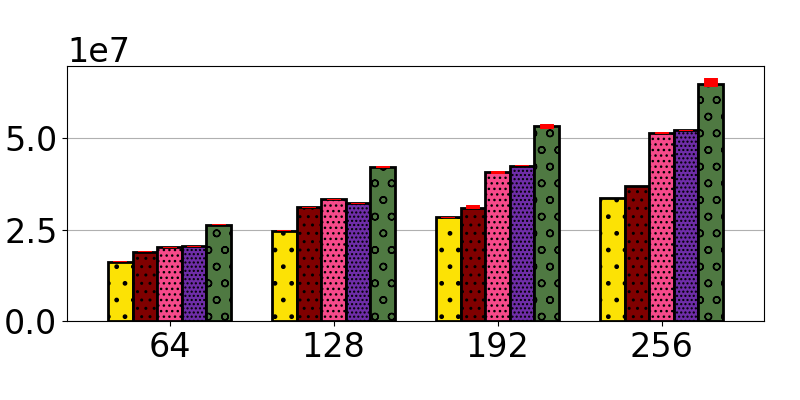}

\vspace{-3mm}
\includegraphics[width=\legendwidth]{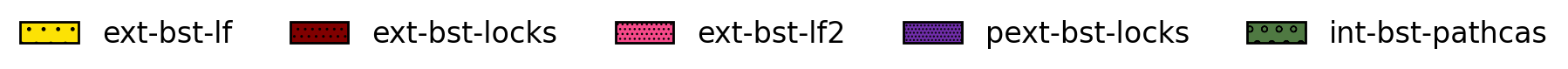}
}

{
\newcommand{\plotwidth}{0.32\linewidth}
\newcommand{\legendwidth}{0.6\linewidth}
\vspace{-2mm}
\rotatebox{90}{\hspace{2mm}\textbf{Balanced BSTs}}
\includegraphics[width=\plotwidth]{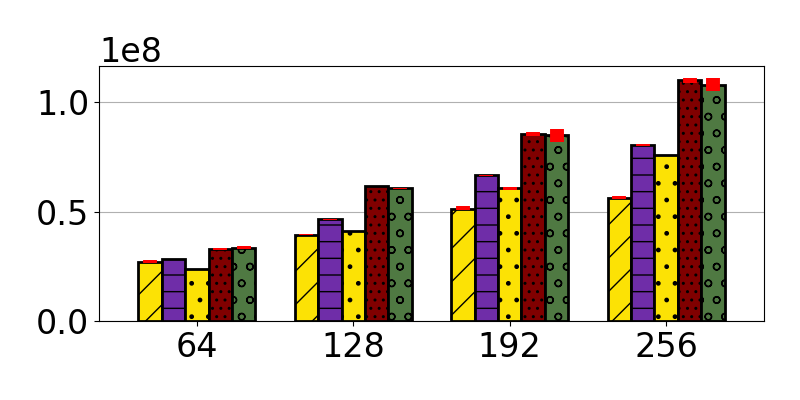}
\includegraphics[width=\plotwidth]{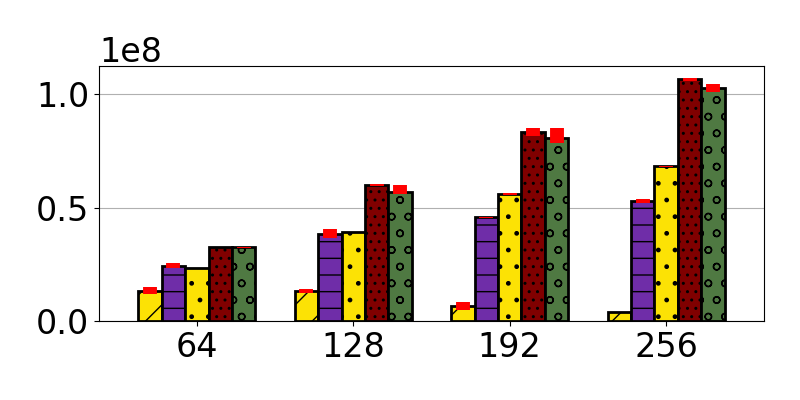}
\includegraphics[width=\plotwidth]{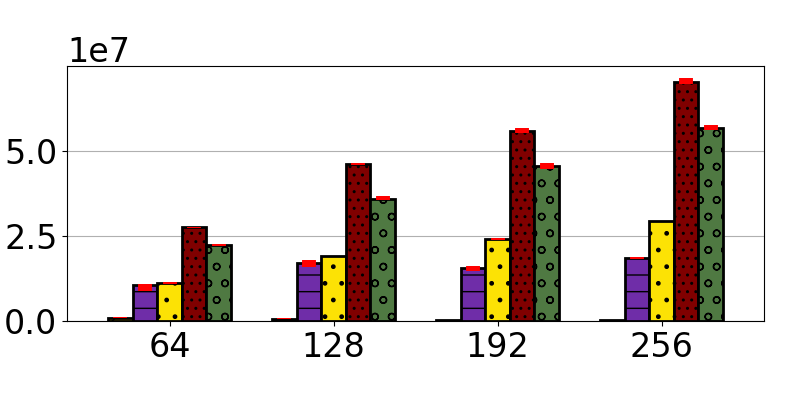}

\vspace{-3mm}
\includegraphics[width=\legendwidth]{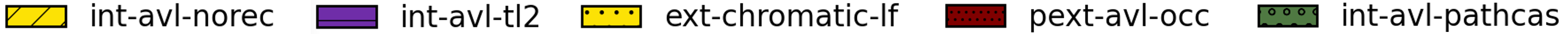}
}

\vspace{-4mm}
\caption{Selected results: \textbf{AMD} system, 10 million keys. %
x-axis = \# of threads. y-axis = operations/sec.
}
\label{fig:nasus-results}
\end{figure*}

\subsection{Extension: AVL Trees with PathCAS}

As a second example application of PathCAS, we extend our internal BST to perform relaxed AVL tree balancing.
Due to lack of space, we only give an overview here.
Complete details appear in the full version of the paper. %

We augment the BST nodes described in the previous section with two new fields: \textit{parent} and \textit{height}.
The former points to the node's parent %
and the latter contains the \textit{logical height} of the node, which can differ from the node's \textit{actual height} when the tree is unbalanced.

In \textit{insert}, we initialize newly created nodes' \textit{parent} pointers to point to the node they are being inserted under.
In \textit{delete}, whenever we perform a \textit{vexec} that removes an internal node, and hence changes the \textit{parent} of a child, we \textit{add} that child's \textit{parent} pointer to the \textit{vexec} as appropriate. %

A balance \textit{violation} exists at node \textit{n} when:
\begin{compactitem}
    \item n.left.height - n.right.height > 2; \textbf{or} 
    \item n.left.height - n.right.height < -2; \textbf{or} 
    \item n.height $\neq$  1 + max(n.left.height, n.right.height)
\end{compactitem}
A \textit{violation} can be created by any operation that causes node to \textit{gain or lose children}. %
Whenever an operation creates a \textit{violation}, it performs \textit{rebalancing steps} to fix the violation.

We implement the relaxed AVL tree rebalancing steps of Boug{\'e}~\cite{bouge}.
Boug{\'e}'s proved that, starting from an arbitrarily unbalanced tree, after performing a bounded number of atomic rebalancing steps (wherever they can be applied in the tree, and in any order), the tree will converge to a balanced state. %
Rebalancing steps are local modifications that affect a small number of nodes, and they do not need to be performed atomically at the same time as a search.
The rebalancing steps, namely \textit{rotateLeft}, \textit{rotateRight}, \textit{rotateLeftRight}, \textit{rotateRightLeft} and \textit{fixHeight}, are very similar to the familiar AVL tree rotations.
Rebalancing steps eliminate violations, or move them towards the root where they will be eliminated.
A tree with no violations is balanced.

Whenever a thread creates a violation at a node, it takes responsibility for repairing that violation, \textit{and any subsequent violations it creates while repairing that violation}.
More specifically, after performing a successful \textit{insert} or \textit{delete}, a thread traverses towards the root, fixing any violations it sees until either: it fixes a violation at the root, it observes a node on the path towards the root that has no violation, or it encounters a \textit{deleted} node on the path to the root (which means another thread has taken responsibility for any violations further along the path to the root). 
This is why we augment nodes with parent pointers: they allow us to easily ``follow'' violations up the tree.

\subsection{Freeing data structure nodes}

In unmanaged languages like C++, PathCAS manages its own memory, but the programmer must still manually reclaim memory for the \textit{data structures they implement} using PathCAS.
Reclaiming nodes that are deleted by a \textit{vexec} (or \textit{exec}) is quite simple using an algorithm such as DEBRA or NBR~\cite{DEBRA, nbr}.
The C++ implementation of DEBRA used in~\cite{IST} offers operations \textit{getGuard()} and \textit{retire(node)}.
The former is invoked at the beginning of each data structure operation.
The latter can be invoked on any \textit{node} after it is unlinked using \textit{vexec} (or \textit{exec}).
This will perform a \textit{delayed free} once no thread has a pointer to \textit{node}.
We use DEBRA to reclaim memory for all data structures in our experiments.

Using DEBRA is so mechanical that %
the necessary invocations of \textit{retire} could even be integrated directly into \textit{vexec} (and \textit{exec}), by having a successful \textit{vexec} \textit{retire} each node whose \textit{version} it \textit{marks} just before returning.

\vspace{-0.5em}
\section{Evaluation} \label{eval} %

\begin{figure}[t]
\vspace{-3.5mm}
   \begin{center}
   \scriptsize
 \begin{tabular}{| c | c | c|} 
\multicolumn{3}{c}{\textbf{Unbalanced BSTs}} \\

 \hline
 ext-bst-lf & External, lock-free, CAS  & \cite{Ellen10} \\ 
 \hline
 ext-bst-lf2 & External, lock-free, CAS and BTS  & \cite{Natarajan14} \\ 
 \hline
   ext-bst-locks & External, ticket locks  & \cite{david-tree} \\ 
 \hline
   pext-bst-locks & Partially-external, locks, logical ordering  & \cite{dana-tree} \\ 
 \hline
    int-bst-pathcas & Internal, lock-free, PathCAS & this work \\ 
 \hline
     int-bst-pathcas+ & Internal, lock-free, PathCAS + HTM fast-path & this work \\ 
 \hline
\multicolumn{3}{c}{\textbf{Balanced BSTs}} \\

 \hline
 ext-chromatic-lf & External, LLX/SCX  & \cite{Brown14} \\ 
 \hline
   pext-avl-occ & Partially-external, locks, logical ordering  & \cite{bronson-tree} \\ 
 \hline
    int-avl-pathcas & Internal, lock-free, PathCAS & this work \\ 
 \hline
     int-avl-pathcas+ & Internal, lock-free, PathCAS + HTM fast-path & this work \\ 
 \hline
\multicolumn{3}{c}{\textbf{Transactional Memory Algorithms}} \\
    \hline
 norec & STM  & \cite{NoREC} \\ 
 \hline
    hynorec & HTM fast-path + STM slow-path  & \cite{HybridNoRec} \\ 
 \hline
 rhnorec & HTM fast-path + HTM/STM slow-path  & \cite{rhnorec} \\ 
 \hline
   tle & HTM + Global Lock fallback  & this work \\ 
 \hline
    tl2 & STM & \cite{tl2} \\ 
 \hline

\end{tabular}
\end{center}
    \vspace{-4.5mm}
    \caption{The list of algorithms in our experiments.}
    \label{fig:algos_table}
\end{figure}

Our experiments follow the methodology of \cite{IST}, and we use the authors' publicly available benchmark, \textit{Setbench}.
We compare against state-of-the-art hand-crafted trees, as well as several TM-based trees (see Figure~\ref{fig:algos_table}).
We experimented with update rates (1\%, 10\% and 100\%) and uniform key ranges ($2\times10^{5}$, $2\times10^{6}$, $2\times10^{7}$).
Each trial pre-filled the data structure to contain half of the keyrange, then ran for 10 seconds.
Data is averaged over six trials with min/max bars shown in red.

Our AMD system has two EPYC 7662 CPUs, each with 64 cores and two hardware threads per core, for a total of 256 hardware threads, and a 256MB shared L3 cache. %
Threads are \textit{pinned} such that thread counts up to 128 run on one socket. %
We used GCC 10.1.0 \texttt{-O3}, the fast allocator jemalloc, and interleaved memory pages across sockets with \textit{numactl}.

Many more experiments can be found in the full version of this paper, including Intel results and additional algorithms and workloads. %

\begin{figure}[t]

\centering
\scriptsize
\begin{tabular}{|l|lllll|} 
\hline
                 & LLC miss & Cycles & Instr. & Avg. Key & Peak Mem.  \\ 
                 & per op & per op & per op & Depth & Usage  \\ 
\hline
ext-bst-lf       & 26.6            & 25872   & 2047         & 32.9           & 1137 MiB              \\
ext-bst-locks    & 24.6            & 23790   & 1527         & 30.5           & 1038 MiB              \\
ext-bst-lf2      & 15.3            & 15226   & 1739         & 30             & 720 MiB               \\
pext-bst-locks   & 16.3            & 16184   & 1011         & 28.6           & 2047 MiB              \\
\textbf{int-bst-pathcas}  & \textbf{13.7 }           & \textbf{12371}   & \textbf{2661}         & \textbf{28.7 }          & \textbf{539 MiB }              \\
\hline
int-avl-norec    & 141             & 8677653 & 789316       & 21.7           & 620 MiB               \\
ext-chromatic-lf & 31.5            & 27123   & 3204         & 24.3           & 2441 MiB              \\
int-avl-tl2      & 44.2            & 55514   & 7935         & 21.7           & 3642 MiB              \\
pext-avl-occ     & 11.6            & 11234   & 2398         & 23.2           & 844 MiB               \\
\textbf{int-avl-pathcas}  & \textbf{12.9}            & \textbf{13752}   & \textbf{3976}         & \textbf{21.7}           & \textbf{717 MiB}               \\
\hline
\end{tabular}

\vspace{-4mm}
\caption{Detailed analysis for 100\% updates, 256 threads.} %
    \label{fig:algos_stats}
\vspace{-3mm}
\end{figure}

\paragraph{Comparing unbalanced trees}
The top three plots in Figure~\ref{fig:nasus-results} present results comparing our unbalanced BST (\textit{int-bst-pathcas}) to a variety of leading hand-crafted unbalanced BST implementations.
Our code includes all fixes recommended by Arbel-Raviv et~al. for obtaining reliable BST performance results~\cite{BSTRoot}.
Our PathCAS BST often significantly outperforms its competitors.
We explain why using Figure~\ref{fig:algos_stats}.

As expected, out of the unbalanced BSTs, \textit{int-bst-pathcas} performs the largest number of instructions per tree operation.
However, it has the smallest cycle count per operation, suggesting that its instructions can be pipelined more efficiently. %
Crucially, \textit{int-bst-pathcas} incurs the smallest number of last-level cache misses, because of its low average key depth and peak memory usage.
The \textit{ext-bst-*} BSTs are \textit{external}: the keys that are semantically contained in the dictionary are stored in the \textit{leaves}, and internal nodes contain dummy \textit{routing} keys.
External trees contain more nodes than internal trees and are taller.
\textit{pext-bst-locks} is \textit{partially} external, which means it is somewhat closer to an internal tree.
Despite its low average key depth, its nodes are quite large, containing multiple locks and many pointers, as they participate in both a tree and a doubly linked list.
It underperforms because of its peak memory usage and LLC misses.

\paragraph{Comparing balanced trees}
The bottom three plots in Figure~\ref{fig:nasus-results} compare our AVL tree (\textit{int-avl-pathcas}) with other balanced BSTs, including \textit{pext-avl-occ}~\cite{bronson-tree}, which is known to be the fastest concurrent BST in many workloads~\cite{BSTRoot}.
In read-mostly workloads, \textit{int-avl-pathcas} is competitive with \textit{pext-avl-occ}, and outperforms the other trees. %

In the 100\% update workload, \textit{int-avl-pathcas} is at most 20\% slower than the fastest algorithm.
The fact that \textit{int-avl-pathcas} is not \textit{drastically} outperformed by the highly tuned and intricate \textit{pext-avl-occ} tree is remarkable.\footnote{The \textit{pext-avl-occ} tree is intricate, and carefully engineered, using sophisticated sequence locks that encode whether ongoing rebalancing operations are \textit{shrinking} or \textit{expanding} the key range, and allowing a key that was marked as deleted to be reinserted simply by changing a bit. This may inflate its performance in the types of workloads used in our experiments.}
The \textit{ext-chromatic-lf} tree, which is implemented using the LLX and SCX primitives, does not fare nearly as well.

\paragraph{TM-based trees}
It should be noted that in an attempt to be generous to these TM approaches, in our implementations we compiled each TM in the same compilation unit as the data structure (rather than as a linked library), and force-inlined all TM code, eliminating the overhead of function calls to the TM code from the data structure. This optimization would be unrealistic in practice, however should give the TM implementations the best performance possible for comparison. 
Despite this, the TM based algorithms in Figure~\ref{fig:nasus-results} still suffer from high instruction counts and LLC miss rates (Figure~\ref{fig:algos_stats}).
In particular, the extremely high instruction counts for \textit{int-avl-norec} are due to contention on the global version lock and repeated read set validation to guarantee opacity. %

The results in the introduction (Figure~\ref{fig:example_perf}), from our Intel system, include more algorithms, since the system has hardware transactional memory support.
In those results, our trees outperform the next fastest algorithm, TLE, by nearly 2x.
Moreover, those results are ``generous'' to TLE, since its global locking fallback code path degrades performance dramatically in workloads with more updates.

\subsection{Comparison with MCMS}
To compare PathCAS with MCMS, we extended Timnat's original C++ code for the MCMS linked list.
We note that we found some bugs in his implementation of MCMS, one of which only affects the source code, and one of which affects the algorithm in the MCMS paper.
Details are to appear in the full version of this paper.
We fixed these bugs, and applied the same lock-free descriptor optimizations that we use, and implemented an internal BST using MCMS to validate the entire search path (similarly to how we validate the entire search path using PathCAS).
The implementation is optimized to the best of our ability: for instance, it avoids performing MCMS in cases where searches return true or inserts return false.
Moreover, deletes that return true perform their modifications in small MCMS operations that do not include the search path.
The hardware TM code path in our MCMS implementation is faithful to the MCMS paper in that it does not write to nodes on the search path unless the operation is an update that intends to modify that node.

\begin{figure}[t]

        \scriptsize
        \noindent\begin{tabular}{|l|lll|lll|} 
            \hline
            & \multicolumn{3}{c|}{\textbf{100\% update}} & \multicolumn{3}{c|}{\textbf{100\% search}} \\
            \hline
            \# threads & PathCAS & MCMS+ & MCMS- & PathCAS & MCMS+ & MCMS- \\ 
            \hline
            1	&2.50	&0.99	&1.12   &3.19	&1.25	&1.54\\
            2	&4.42	&0.69	&0.70	&6.33	&0.78	&0.77\\
            4	&7.94	&0.56	&0.54	&12.44	&0.63	&0.62\\
            8	&13.97	&0.50	&0.50	&24.07	&0.55	&0.54\\
            48	&46.67	&0.35	&0.31	&77.23	&0.40	&0.31\\
            190	&79.71	&0.04	&0.03	&263.00	&$<0.1$	&$<0.1$\\
            \hline
        \end{tabular}

    \vspace{-4mm}
    \caption{Intel results for an internal BST implemented with PathCAS, MCMS with HTM (MCMS+), and MCMS without HTM (MCMS-), respectively. Tree initially contains 100,000 keys. Results are in millions of operations per second.} %
        \label{fig:exp-mcms}
    \vspace{-3mm}
\end{figure}

Results appear in Figure~\ref{fig:exp-mcms}.
We find that the resulting tree is orders of magnitude slower than our PathCAS tree.
For example, for 100\% searches and 100,000 keys, on a system with hardware TM, the throughput of the MCMS tree is 0.4M at 48 threads, whereas the throughput of PathCAS is 77M.
According to \texttt{perf stat -e tx\_commit,tx\_abort}, half of MCMS transactions abort at 8 threads, 84\% abort at 48 threads, and there are many capacity aborts even with a single thread.
As we suspected, even transient aborts in MCMS can quickly turn into cascading aborts and software path executions.
At 190 threads, PathCAS is thousands of times faster than MCMS.

\subsection{Comparison with Elastic TM}

\begin{figure}[t]
        \centering
        \small
        \noindent\begin{tabular}{|l|lllll|} 
            \hline
            \# threads: & 24 & 48 & 96 & 144 & 190 \\
            \hline
            \textit{ext-bst-elastic}    &17.24	&29.54	&52.25	&71.59	&90.43\\ 
            \textit{ext-bst-lf2}        &29.28	&53.07	&129.02	&195.71	&267.44\\
            \hline
        \end{tabular}

    \vspace{-4mm}
    \caption{Intel results for elastic transactions with 1\% updates and 99\% searches. The trees initially contain 10,000,000 keys. Results are in millions of operations per second.}
        \label{fig:exp-elastic}
    \vspace{-3mm}
\end{figure}

We also compared PathCAS with \textit{elastic transactions}, which use sophisticated synchronization techniques to split transactions into smaller atomic pieces to improve performance. %
Specifically, we compared with \textit{ext-bst-elastic}, the ``speculation-friendly BST'' from Synchrobench~\cite{synchrobench} that is built using elastic transactions. %
We were unable to port the Elastic STM library that \textit{ext-bst-elastic} depends on into Setbench, as it would require overhauling Setbench to support the background rebalancing threads used by \textit{ext-bst-elastic}, and would require us to completely redesign its memory reclamation. %

So, we obtained performance numbers for \textit{ext-bst-elastic} using Synchrobench, instead of Setbench.
We also obtained performance numbers for \textit{ext-bst-lf2} on the same workload, as it is included in Synchrobench as well as setbench.
Results appear in Figure~\ref{fig:exp-elastic}.
This gives us a sort of limited point of comparison between our results in Setbench and our results in Synchrobench.
Although the results do not allow for a rigorous comparison between \textit{ext-bst-elastic} and the other trees in our experiments, we note that it is \textit{much} slower than \textit{ext-bst-lf2}, which is in the middle of the pack in our Setbench experiments.
And, we also note that \textit{ext-bst-elastic} is being evaluated in an environment that is presumably most favourable to it, as \textit{ext-bst-elastic} was developed by the authors of Synchrobench, and integrated therein by the authors.
Additionally observe that this 1\% update workload is quite favourable to \textit{ext-bst-elastic}, as its performance degrades faster than ext-bst-lf2 as the update rate increases.

\section{Conclusion}

This paper introduced PathCAS, a primitive used to implement efficient concurrent data structures while maintaining lower complexity than hand-crafted techniques.
PathCAS utilizes an HTM fast path combined with an efficient fallback path that relies on KCAS, version numbers, and search path validation.

We implemented a set of historically difficult data structures using PathCAS, including an internal balanced tree, and have shown them to achieve competitive performance with the best fine-grained variants. 
We compared the performance of our data structures with both TM-based and hand-crafted variants of these structures, showing that PathCAS surpasses TM based approaches, and can rival the state-of-the-art in hand-crafted designs.

While in this work we focus on the two trees presented, we emphasize that one can use PathCAS in a direct way to implement many data structures wherein an operation consists of a read phase followed by a write phase.
The construction is similar to our trees: \textit{visit} each node that will be read or written, then \textit{add} and \textit{exec} the necessary modifications.
Using this approach, we were able to implement the following: (a,b)-trees, lists, hash-tables, hash-lists, stacks, queues, skip-lists and dynamic graph connectivity data structures.
We chose to focus on trees in the paper, since they have traditionally been difficult to implement and prove correct. However, we plan to share more of our implementations as part of future work.

\begin{acks}
This work was supported by: the Natural Sciences and Engineering Research Council of Canada (NSERC) Collaborative Research and Development grant: CRDPJ 539431-19, the Canada Foundation for Innovation John R. Evans Leaders Fund with equal support from the Ontario Research Fund CFI Leaders Opportunity Fund: 38512, Waterloo Huawei Joint Innovation Lab project ``Scalable Infrastructure for Next Generation Data Management Systems'', NSERC Discovery Launch Supplement: DGECR-2019-00048, NSERC Discovery Program under the grants: RGPIN-2019-04227 and RGPIN-04512-2018, and the University of Waterloo. We would also like to thank the reviewers for their insightful comments. %
\end{acks}

\bibliography{ref}

\section{Artifact}

We have provided a docker container in which you can download and start running the experimental benchmark. 
\begin{itemize}
    \item The docker container was prepared on Ubuntu 20.04, which is also the OS the actual container is running
    \item In particular, we build our container using \texttt{Docker version 20.10.2, build 20.10.2-0ubuntu1~20.04.2}. %
    \item Our experiments are run on a machines with 384 GB of RAM and 192 threads total, while possible to run these experiments on a different machines with less memory or threads:
    \begin{itemize}
        \item If you have less memory, you may need to use smaller data structure sizes 
        \item If you have fewer threads, you will simply need to use less threads
    \end{itemize}
\end{itemize}

Inside the docker container, in the \texttt{experiments} folder, you will find a script called \texttt{run.sh} that builds all data structures and runs a simple test (using each data structure).
This script is driven by the experimental configuration described in \texttt{experiments/}{\texttt{\_common.py}.
By default, the contents of \texttt{experiments}\texttt{/\_common.py} %
cause \texttt{run.sh} to reproduce Figure~3.
More details on how to modify this experimental configuration appear below. 

Note that \texttt{run.sh} performs experiments for both transactional memory and non-transactional memory data structures.
This can take quite some time, so you may want to limit the number of trials in \texttt{\_common.py} to a relatively small number.

\subsection{Step By Step Instructions}

\begin{enumerate}
    \item Install docker (ideally the same version we use) on your system with your preferred package manager
    \item Download the docker image from Zenodo (\url{https://zenodo.org/record/5728166}).
    For example, on Linux you might do this by executing:

    \texttt{\$ wget https://zenodo.org/record/5728166/fil}\\
    \texttt{es/ppopp2022pathcas.tar.gz}
    
    \item Add the image to your local Docker images (running in the same directory as step 2):
    
    \texttt{\$ docker load -i ppopp2022pathcas.tar.gz}
    \item Launch the docker container
    
    \texttt{\$ docker run -i -t --privileged ppopp2022pathc}\\
    \texttt{as /bin/bash}
    
    Note that \texttt{privileged} is \textit{required} in order to ascertain the proper thread pinning strategy for the experiments, and to record performance counters for, e.g., cache statistics.
    You may also need to set \texttt{kernel.perf\_event\_} \texttt{paranoid} to \texttt{-1} on Linux.
    
    \item Go to the experiments folder \\ 
    \texttt{\$ cd experiments}
    
    \item The experiments can be configured in \texttt{\_common.py}.
    Be sure to edit the HOST CONFIGURATION and EXPERIMENTAL CONFIGURATION sections of \texttt{experiments/} \texttt{\_common.py} to match the machine you are running on, and to reflect the types of experiments you would like to run.
    Note that the various configuration parameters are described in the comments.
    
    By default, \texttt{\_common.py} is configured to reproduce Figure 3 in the paper.
    Running such experiments can take many hours (at least 12).
    \textbf{If you would like to run a shorter test to get started,} please uncomment the more restrictive \textit{testing} values of \texttt{ins\_del\_fractions}, \texttt{max\_keys}, \texttt{exp\_duration\_millis}, \texttt{thread\_counts} and \texttt{num\_trials} provided alongside the default values in \texttt{\_common.py} before proceeding.
    (Once you have run your small test, you can comment those testing values out again.)

    Additionally, Smaller example test values are provided if you wish to run a quick proof of concept test (commented out next to the actual set values). 

    \item To run the experiments described in \texttt{\_common.py}, simply execute \texttt{\$ ./run.sh}.
    It will run experiments, store the results in a sqlite3 database, process those results using various SQL queries, and produce several text data files and plot images (more on this below).
    
    If the \textit{testing} values in \texttt{\_common.py} are \textbf{uncommented}, this should take less than five minutes.
    
    If the \textit{testing} values are \textbf{commented out}, then \texttt{run.sh} will perform all necessary experiments to reproduce Figure 3 in the paper.
    \textbf{This should take approximately 12 hours.}
    
    While experiments are being performed, the scripts will produce output of the form ``\texttt{step 000001 / 000336...}''. Each such line contains the command being executed, as well as a very rough \textit{estimate of the remaining running time} for all experiments.
    \item
        After the runs are complete, \textbf{PNG format plot images} for the experiments can be found in directories \texttt{experiments/data\_tm} and \texttt{experiments/data\_non\_tm}.
        Note that, by default, a large super set of the plots in the paper are generated.
        
        To more easily view these images, you may want to copy them from the docker container to your host machine using \hyperlink{https://docs.docker.com/engine/reference/commandline/cp/}{docker cp}.
        For example, \textit{while the docker container is running}, you can run the following command \textit{on the host machine}, where \texttt{<CONTAINER\_NAME>} is the name of the running container.\footnote{To obtain the name of the running container, run \texttt{\$ docker container list} and look in the \texttt{NAMES} column.}
        
        \texttt{\$ docker cp <CONTAINER\_NAME>:/root/tmbench/ex}\\
        \texttt{periments .}
        
        This will copy all experimental results to the host machine.
        Alternatively, you can browse the results in text format directly inside the docker container:
        
        \begin{enumerate}
            \item
                A detailed summary of the numerical data can be obtained by starting in the \texttt{experiments} directory and running \texttt{./basic\_info.sh} to produce a pretty-printed table, with columns for update rate, data structure name, key range size, thread count and throughput.
                If you ran both TM and Non-TM experiments, this script will print a table for each type of results.

            \item
                Note that you can also perform your own arbitrary SQL queries on the sqlite database, either by modifying the queries in \texttt{basic\_info.sh}, or by entering an interactive sqlite console in either of the \texttt{data\_*} directories in \texttt{experiments/}:
                    
                \texttt{\$ sqlite3 output\_database.sqlite}
            \item
                If you want to view the complete raw text data (which contains considerably more information than the sqlite database), you can find files for each experimental ``step'' stored in \texttt{experiments/data\_tm} and \texttt{experiments/data\_non\_tm}, with file names of the form: \texttt{data0*<STEP>.txt}.
            
                There are also CSV-format tables of data that are converted into various plots in file names of the form:
                \texttt{\{DATA\_STRUCTURE\_TYPE\}\_\{METRIC\}\_u\{UPDATE RA}\\
                \texttt{TES\}-k\{MAX\_KEY\}.\{FILE\_TYPE\}}
            
                For example, \texttt{data\_tm/bst\_tm\_total\_throughput}\\
                \texttt{\_sec-u50.0\_50.0-k20000000.txt} contains a table of data that is used to produce a throughput comparison plot for TM-based binary search trees, with 50\% inserts and 50\% deletes, and a key range size of 20 million (i.e., initial data structure size of 10 million).
                This example is of particular note, as it corresponds to one of the plots in Figure 3. %
        \end{enumerate}
        
    \item
        Note that, due to the nature of docker containers, \textbf{all data will be lost if you \texttt{exit} the \texttt{docker run}.}
        If you want to save any of your generated data, you can do so using \hyperlink{https://docs.docker.com/engine/reference/commandline/cp/}{docker cp} from a different terminal on the host machine to copy the relevant data from the container to the host.
    
        Alternatively, you can use \hyperlink{https://docs.docker.com/engine/reference/commandline/commit/}{docker commit} to save a new version of the docker image that includes all of the data you've generated.
        You can then launch \textit{that} image in a docker container to access your data again.
    
    \item
        Once you have saved any data you want to keep, you can \textit{exit} the docker container by running \texttt{exit} in the container, or \texttt{docker stop <CONTAINER\_NAME>} on the host.
\end{enumerate}

\paragraph{LLC misses on AMD}
Note the artifact fails to track LLC misses on recent AMD processors.
Instead, they must be tracked manually using \texttt{perf stat}. %

For example, on our machine, we used a command of the form:
\texttt{[usual command prefix up until the binary] perf stat -e l3\_comb\_clstr\_state.request\_miss [act}\\
\texttt{ual binary being run] [args]}.

Also note that will produce a raw number of LLC misses for the entire execution, so one will need to divide by the number of operations completed.
This number will be somewhat inflated because it includes experimental setup and tear-down as well as prefilling.
To get around this, %
use \texttt{perf record -k CLOCK\_MONOTONIC}, which timestamps all cache misses with a clock that is compatible with our benchmark.
Then one can take a pair of timestamps emitted by our benchmark, \texttt{REALTIME\_START\_PERF\_FORMAT} and \texttt{REALTIME\_END\_P}\\
\texttt{ERF\_FORMAT}, and plug them into \texttt{perf report --ns --time <start>,<stop>} where \texttt{<start>} and \texttt{<stop>} are our timestamps. %

\paragraph{Generating Figure~5}
Figure~5 is generated by running the following SQL queries on the results databases (starting in the same directory as \texttt{\_common.py} and \texttt{run.sh}). %
\begin{verbatim}
for exp in tm non_tm ; do                             \
  ../setbench/tools/data_framework/run_experiment.py  \
  _exp_${exp}.py -q "select maxkey, TOTAL_THREADS,    \
  alg, mem_maxresident_kb, PAPI_L3_TCM, PAPI_TOT_CYC, \
  PAPI_TOT_INS, tree_stats_avgKeyDepth from data      \
  where ins_del_frac = '50.0 50.0'                    \
  order by maxkey desc, total_threads desc, alg" ; done
\end{verbatim}

\clearpage
\appendix
\noindent\huge{\textbf{Supplementary Material}}
\normalsize

\section{Additional PathCAS Details}

\section{Hardware TM based PathCAS} \label{a:htm}

\begin{algorithm}[h]
\caption{PathCAS::execute()}\label{kcas:exe}
\begin{algorithmic}[1]
   \State \textit{desc = getDescriptor()} \Comment{Gets the current thread's descriptor}
   \For{tries = 0; tries $<$ 5; tries++}
     \If{((status = \_xbegin()) == \_XBEGIN\_STARTED)}
     \IIf{\textbf{not} \textit{validate(desc)}}  \textit{\_xabort(OLD)} \Comment{Validation}
     \For{\textbf{each} (addr, old, new) \textbf{in} desc.entries}
       \State \textit{val = *addr} \Comment{Check if addresses contain old values}
       \If{\textit{val $\neq$ old}}
         \IIf{\textit{isDescriptor(val)}} \textit{\_xabort(DESCRIPTOR)}
         \ElseI
            \State \textit{\_xabort(OLD)} \Comment 
       \EndIf
     \EndFor
     \For{\textbf{each} (addr, old, new) \textbf{in} desc.entries}
       \State \textit{*addr = new} \Comment{Write new values}
     \EndFor
     \State \textit{\_xend()}
     \State \textbf{return} \textit{true}
     \ElsIf{status $\&$ \_XABORT\_EXPLICIT}
     \IIf{\textit{\_XABORT\_CODE(status) == DESCRIPTOR}} \textbf{break}
     \ElsIIf{\textit{\_XABORT\_CODE(status) == OLD}} \textbf{return}  \textit{false}
     \EndIf
   \EndFor
   \State \textbf{return} \textit{help(desc)} \Comment{Fallback code path}
\end{algorithmic}
\end{algorithm}

The HTM implementation (Algorithm \ref{kcas:exe}) first checks all the fields added to the KCAS descriptor to ensure that all the fields hold the old value from the descriptor. If any of these fields contain the incorrect old value, the transaction is explicitly aborted. If the incorrect value read was a KCAS descriptor, the slow path is taken, we cannot simply return false, as we do not know the logical value of this field. If the value was not a KCAS descriptor, then the KCAS can simply return false, as it contained a value other than the one in the descriptor. 

If all the fields are read to contain the old values in the descriptor, they are now in the \textit{read set} of the transaction. Any change to any of these fields will result in an abort, and the transaction will retry up to a fixed number of attempts. If the transaction succeeds in writing all the new values to the fields with no conflicts, the transaction will commit, and \textit{true} can be returned.

\section{Lock-freedom of PathCAS}

Without loss of generality, in the following, we focus on \textit{vexec} operations (since \textit{exec} operations are just a special case).

The \textit{start} and \textit{add} operations are wait-free.
The \textit{read}, \textit{visit} and \textit{vexec} operations only perform a bounded number of steps in addition to an invocation of \textit{help}, but \textit{help} can invoke itself recursively.
We prove that these operations are lock-free.
Suppose, to obtain a contradiction, that threads take infinitely many steps in \textit{read}, \textit{visit} or \textit{vexec} operations, but only finitely many such operations terminate.
Clearly, threads that take infinitely many steps in such an operation must take infinitely many steps in \textit{help}.

The only way a thread can take infinitely many steps in \textit{help} is if it continues performing new recursive invocations of help at line~8 of the pseudocode for \textit{help}.
As in the HFP algorithm, (1) a \textit{vexec} can only be helped at line~8 if it has ``locked'' some node, and (2) because all addresses are sorted and ``locked'' in order, if a \textit{vexec} $O_1$ is helping another \textit{vexec} $O_2$, then all of $O_2$'s ``locked'' nodes come strictly after all of $O_1$'s ``locked'' nodes in the sort order.
Since there are infinitely many invocations of \textit{help} and only finitely many invocations of \textit{vexec}, eventually some thread's call stack must contain a \textit{helping cycle}: an invocation of \textit{help(desc1)} that (possibly indirectly) invokes \textit{help(desc2)} that (possibly indirectly) invokes \textit{help(desc1)}.
This implies that the nodes ``locked'' for \textit{desc1} come \textit{strictly after themselves} in the sort order---a contradiction.
So, no such cycle can exist, which implies that \textit{help} can only be invoked finitely many times at line~8 of the pseudocode for \textit{help}.
Therefore, \textit{read}, \textit{visit} and \textit{vexec} must be lock-free.

\subsection{Design Decision: Manual Incrementing of Version Numbers} \label{a:increment}
We contemplated building the incrementing of version numbers into the abstraction, so that it would be automatic.
However, we decided that requiring addresses passed to \textit{add} to be fields of nodes might be overly restrictive.
We do not want to rule out applications wherein PathCAS could be used to atomically validate a set of nodes, and also modify arbitrary fields that do not belong to a data structure node (such as a size variable).
Therefore, we only require nodes that are \textbf{passed to} \textit{visit} to have version numbers to track changes, and leave it to the programmer to manage them.
Note that our interface supports debugging mechanisms to catch errors in managing version numbers.\footnote{%
For example, \textit{visit} can save the address ranges of all visited nodes, and \textit{exec} can then check for intersections between the \textit{visited} nodes and \textit{added} addresses that do not have a corresponding \textit{node.ver} increment. This introduces overhead, but can be enabled only when debugging.}
In an application where it is acceptable to restrict PathCAS so that it only accesses \textit{nodes}, one could easily change \textit{add()} to also take a node pointer in addition to the field pointer, and automate version increments. %

\subsection{Using Version Numbers} \label{a:wrap}

Version number based validation can theoretically cause ABA problems if version numbers wrap around. However, the probability of an actual failure is extremely low. Suppose version numbers are 64 bits long, and a slow helper is attempting to use DCSS to change an address from $\langle old, version = 17 \rangle$ to $\langle new, 18 \rangle$. For an ABA problem to cause this DCSS to erroneously succeed, the helper must sleep for exactly the right amount of time for this specific address to undergo \textit{precisely} $k \cdot 2^{64}$ changes for $k \in \mathbb{Z}^+$ before waking up and attempting the DCSS.
\section{AVL Extension} \label{a:avl}
Within the main body of this work, we emulated a relatively {simple example} of using PathCAS by outlining a internal, unbalanced, BST.
However PathCAS can also be leveraged for the creation of much more complex structures, and one such example is a similar tree, but augmented to be self-balancing. 
Rebalancing steps are based on the relaxed AVL tree of Boug{\'e}~\cite{bouge}.

\begin{figure}[H]
    \includegraphics[width=.15\textwidth]{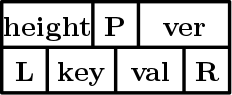}
\centering
\caption{AVL Node Structure}
\label{fig:node}
\end{figure}

In order to facilitate rebalancing, the structure of the node must change in two key ways: by tracking the \textit{logical height} of the node, and containing a pointer to its parent.
Node store \textit{logical} heights (stored in the \textit{height} field), and also have an \textit{actual} height (the number of nodes below this node within the tree to the farthest leaf + 1), which can differ. 
We say a node is \textbf{balanced} if the logical heights of its children differ by less than 2. 

These two new fields can be see within Figure \ref{fig:node}, where \textit{P} is the parent pointer and \textit{height} is the height field. The other fields are shared with the BST shown in the main body of this work.

Five rebalancing steps are proposed: \textit{rotateLeft(node)}, \textit{rotateRight(node)}, \textit{rotateLeftRight(node)}, \textit{rotateRightLeft(node)}, and \textit{fixHeight(node)}. 
These rotations are very similar to sequential AVL tree rotations, and those outlined by \cite{bouge}.
\textit{fixHeight(node)} updates the \textit{height} field of a node
to $1+max(hL,hR)$, where $hL$ (resp., $hR$) is the \textit{height} field of its left (resp., right) child (with the eventual goal of propagating accurate height throughout the tree).
Boug\'e proves that applying these rebalancing steps in a tree, wherever they apply, and in any order, until no more \textit{can be applied}, yields a strict AVL tree.

The operations from the previous section carry over with two changes:
After a successful insertion just before line~\ref{line:insert:return-true} of \textit{insertIfAbsent}, we invoke \textit{rebalance(p)} (Algorithm \ref{rebalance}) and just before line~\ref{line:delete:two-return-success} of \textit{delete} we invoke \textit{rebalance(sp)}.
\textbf{Generally, after any tree modification, \textit{rebalance} is invoked on any node that (may have) gained or lost children.}

The \textit{rebalance} operation determines whether to perform a rotation on \textit{n}, update its height, or do nothing based on its \textit{apparent balance}.
\textit{n}'s apparent balance is calculated by checking the heights of its children (Line 12). 
If \textit{n} requires rebalancing (the apparent balance is $\geq +2$ or is $\leq -2$), a direction is determined: a positive apparent balance indicates that \textit{n}'s left subtree is larger and requires a rotation in the opposite direction, and vice-versa. 
Depending on the balance of \textit{n}'s children, a single rotation may be insufficient to repair the imbalance of \textit{n}. If this is the case, a double rotation is used, which involves applying a single rotation to one of \textit{n}'s children, and another one to \textit{n}. 
For performance reasons, we combine double rotations into a single large PathCAS. (By doing this we can also avoid updating version numbers, etc., twice.)
A visualization of tree states that lead to the possible rotation is shown in Figure \ref{fig:rotations}. 
\begin{figure}[H]
    \includegraphics[width=.45\textwidth]{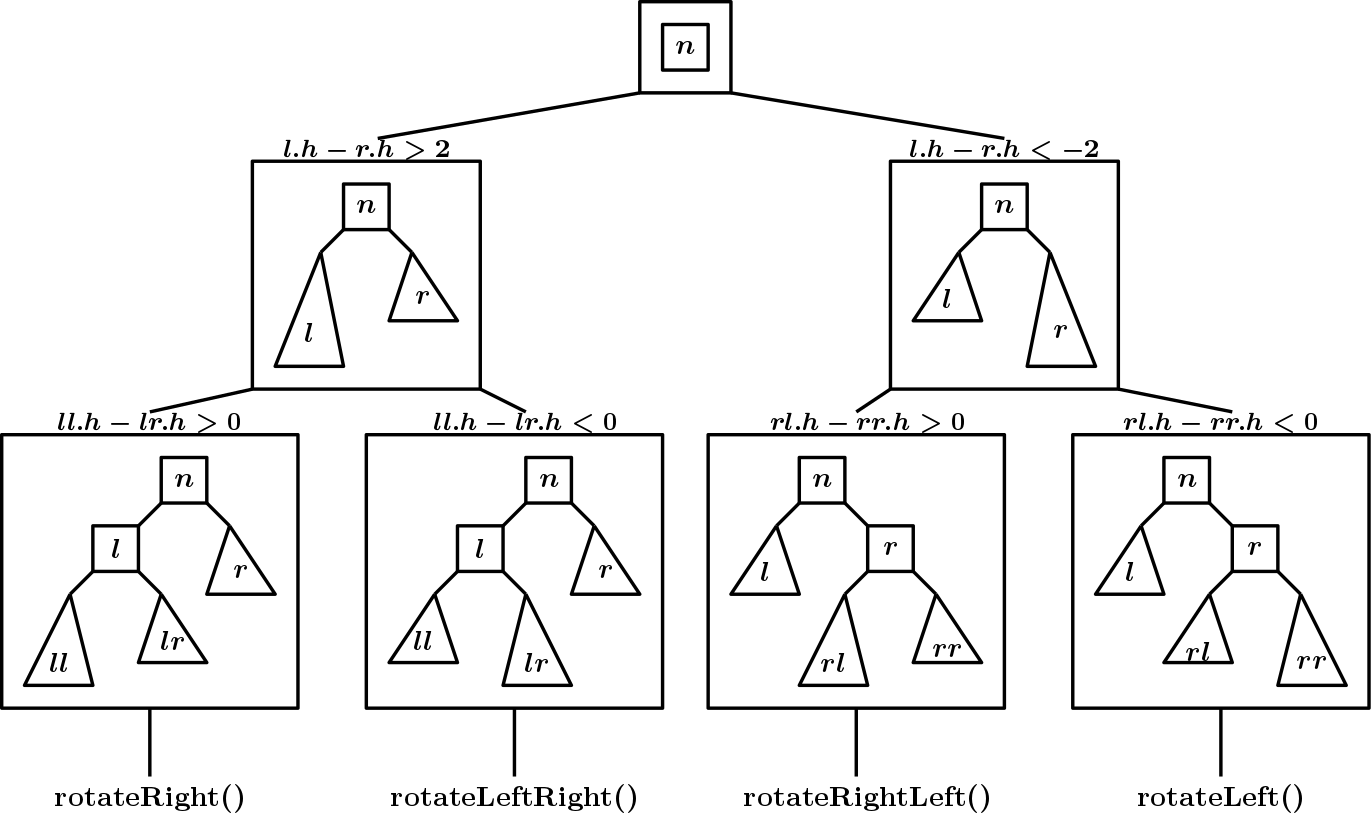}
\centering
\caption{Possible Rotations}
\label{fig:rotations}
\end{figure}

The simple case in which only a single rotation is required (Line 24) is addressed by \textit{rotateRight} (Algorithm \ref{rotateRight}, Figure \ref{fig:rrbst}) or \textit{rotateLeft} (symmetric).
In our example, \textit{rotateRight}, we move \textit{n}'s left child, \textit{l}, to \textit{n}'s position. 
Additionally, \textit{n} replaces its left child pointer to \textit{l} with \textit{l}'s right child, \textit{lr}.  
New heights for all nodes involved are calculated based on this rotation (except for \textit{parent}, as it will be checked for imbalance after this operation anyways) and are updated as part of the PathCAS. 
Note that in Figure \ref{fig:rrbst} the version numbers of \textit{ll} and \textit{r} are blue as they nodes that are part of the sequential rotation but are not modified, but are validated as part of \textit{vexec}. 
This is because the heights of these nodes are used to calculate the new heights for other nodes, and changes to these nodes could result in a rotation that does not improve the balance of the tree. 
If \textit{n} does not need rebalancing, we call \textit{fixHeight} to ensure its height is accurate with respect to its children.

Note that rotations may move nodes off of this parent pointer path towards the root, potentially making threads unable to reach nodes they need to rebalance. 
Thus, after a rotation is successful, \textit{rebalance} is called recursively on every node that had its \textit{left} or \textit{right} fields changed.

\begin{figure}[h]
\centering
    \includegraphics[width=.9\linewidth]{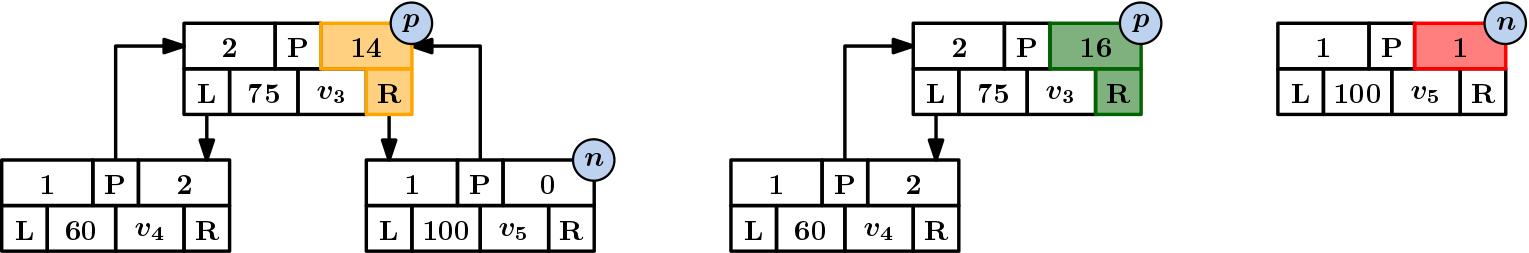}
\centering
\caption{AVL delete Leaf}
\label{fig:bstdelete0}
\end{figure}

\begin{figure}[h]

    \includegraphics[width=.9\linewidth]{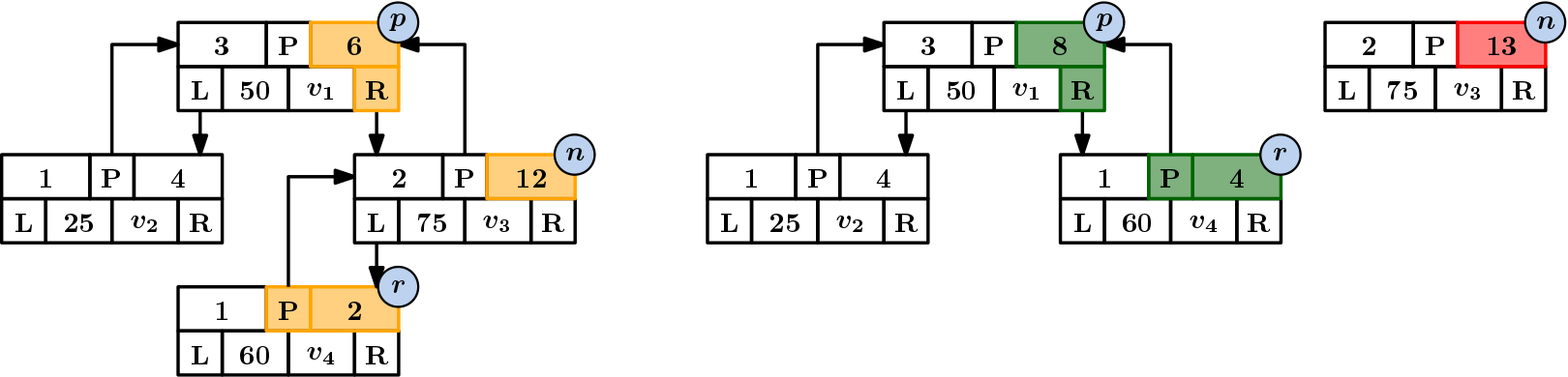}
\centering
\caption{AVL delete one child}
\label{fig:bstdelete1}
\end{figure}

\begin{figure}[h]

    \includegraphics[width=.9\linewidth]{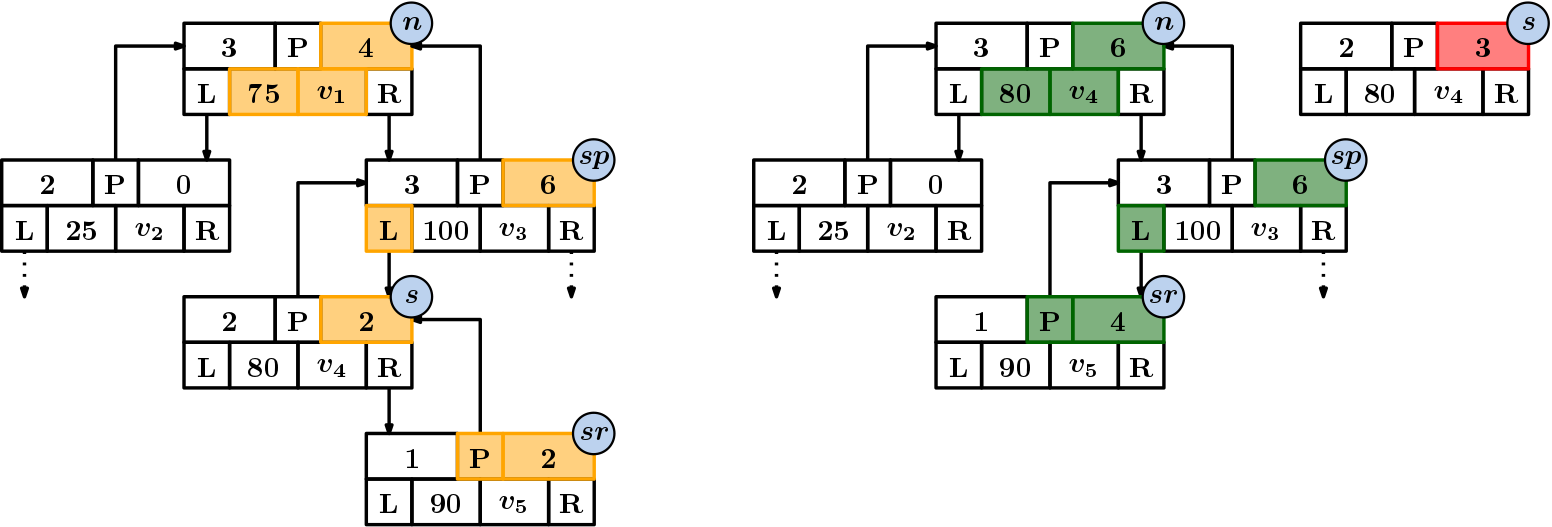}
\centering
\caption{AVL delete two children}
\label{fig:bstdelete2}
\end{figure}

\begin{figure}[h]

    \includegraphics[width=.9\linewidth]{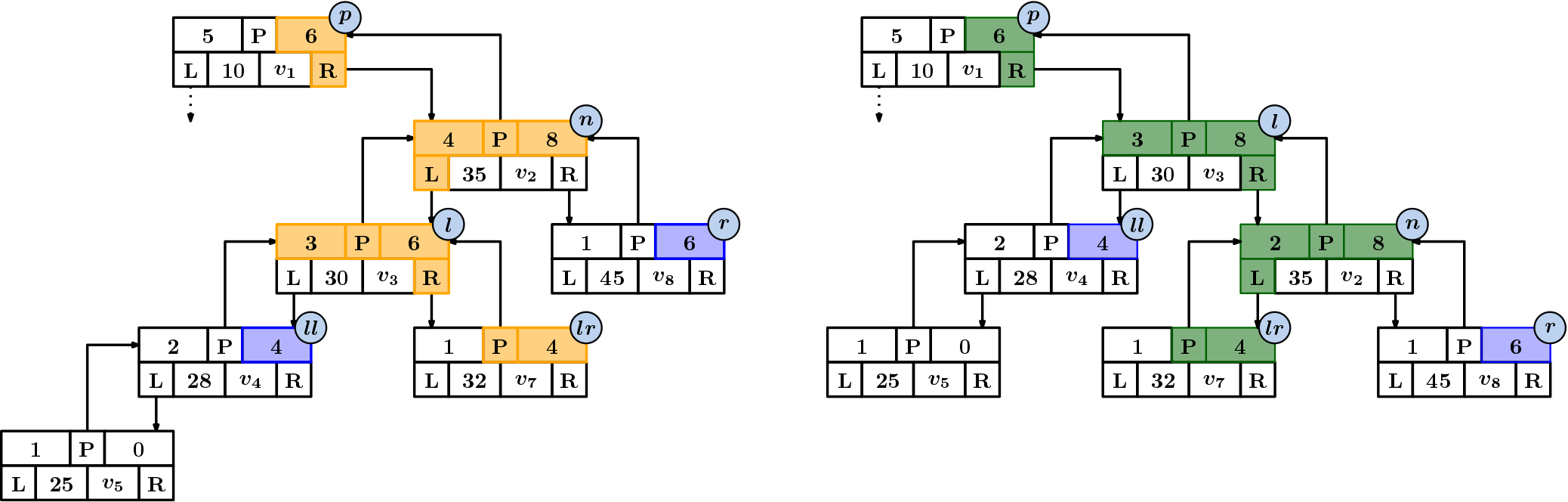}
\centering
\caption{AVL right rotation}
\label{fig:rrbst}
\end{figure}

\begin{algorithm}[h]
\caption{fixHeight(\textit{n, nVer})}\label{fixHeight}
\begin{algorithmic}[1]
\State \textit{l = n.left}
\State \textit{r = n.right}

\State \textit{rVer = r.ver}

\If{\textit{l $\neq$  NIL}}
    \State \textit{lVer = visit(l)}
	\State \textit{add($\&$l.ver, lVer, lVer)}
\EndIf
\If {\textit{r $\neq$ NIL}}
    \State \textit{rVer = visit(r)}
    \State \textit{add($\&$l.ver, rVer, rVer)}
\EndIf

\State \textit{oldHeight = n.height}
\State \textit{newHeight = 1 + max(l.height, r.height)}

\If{\textit{oldHeight == newHeight}} \Comment{height is correct}
    \If {\textit{n.ver == nVer} \textbf{and} \textit{(l == NIL} \textbf{or} \textit{l.ver == lVer)} \textbf{and} \textit{(r == NIL} \textbf{or} \textit{r.ver == rVer)}}
    \State \textbf{return} \textit{UNNECESSARY}
    \Else\ \textbf{return} \textit{FAILURE}
    \EndIf
\EndIf

\State \textit{add($\&$n.height, oldHeight, newHeight)}
\State \textit{add($\&$n.ver, nVer, nVer + 2)}

\IIf {\textit{vexec()}} \textbf{return} \textit{SUCCESS}
\State \textbf{return} \textit{FAILURE}
\end{algorithmic}
\end{algorithm}

\begin{figure}[t]
    \includegraphics[width=0.45\textwidth]{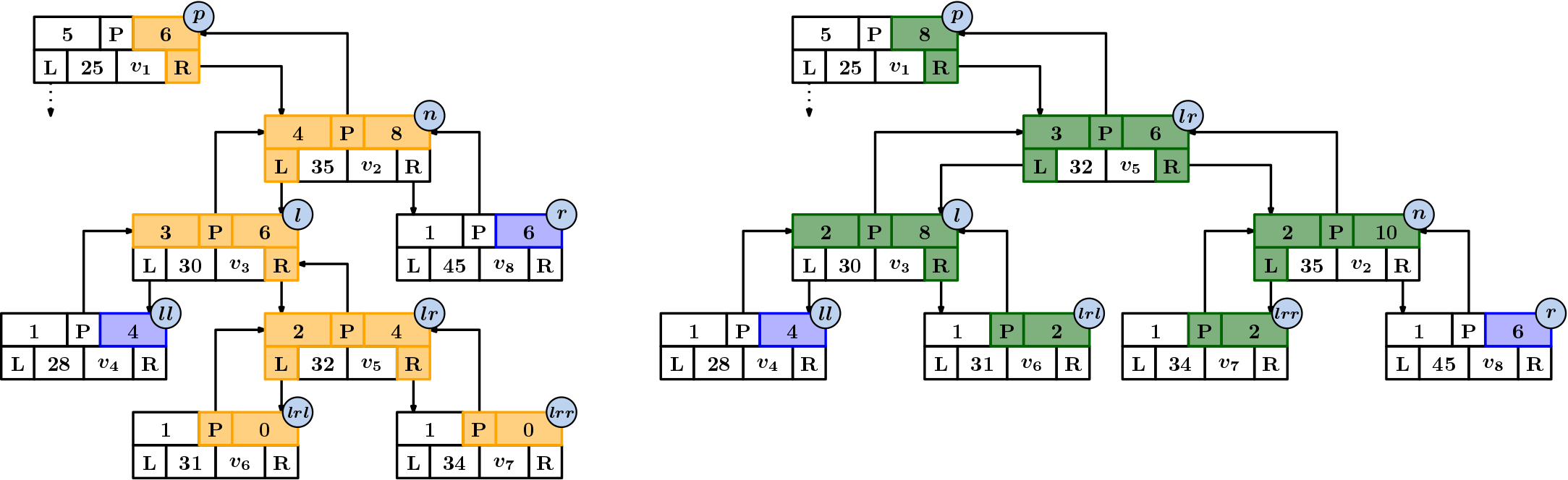}
\centering
\caption{Left Right Rotation BST}
\label{fig:rlrbst}
\end{figure}

\begin{algorithm}

\caption{rotateLeftRight(\textit{p, pVer, n, nVer, l, lVer, r, rVer, lr, lrVer})}\label{rotateLeftRight}
\begin{algorithmic}[1]
\IIf{\textit{p.right = n}} \textit{add($\&$p.right, n, lr)}
\ElsIIf{\textit{p.left = n}} \textit{add($\&$p.left, n, lr)}
\ElseI\ \textbf{return} \textit{RETRY}

\State \textit{lrl =  lr.left}; \textit{lrlHeight =  0};
\If{\textit{lrl $\neq$  NIL}}
\State \textit{lrlVer =  visit(lr)}; \textit{lrlHeight =  lrl.height};
\IIf{\textit{lrlVer \& 1}} \textbf{return} \textit{RETRY}
\State \textit{add($\&$lrl.parent, lr, l,}
\State \textit{add($\&$lrl.ver, lrlVer) lrlVer + 2)}
\EndIf

\State \textit{lrr =  lr.right}; \textit{lrrHeight =  0};
\If{\textit{lrr $\neq$  NIL}}
\State \textit{lrrVer =  visit(llr)}; \textit{lrrHeight =  lrr.height};
\IIf{\textit{lrrVer \& 1}} \textbf{return} \textit{RETRY}
\State \textit{add($\&$lrr.parent, lr, n,}
\State \textit{add($\&$lrr.ver, lrrVer) lrrVer + 2)}
\EndIf

\State \textit{rHeight =  0}
\If{\textit{r $\neq$  NIL}}
\State \textit{rVer =  visit(r)}; \textit{rHeight =  r.height};
\EndIf

\State \textit{ll =  l.left}; \textit{llHeight =  0};
\If{\textit{ll $\neq$  NIL}}
\State \textit{llVer =  visit(ll)}; \textit{llHeight =  ll.height};
\IIf{\textit{llVer \& 1}} \textbf{return} \textit{RETRY}
\EndIf

\State \textit{oldNHeight =  n.height} \Comment{\textbf{Same changes as sequential AVL}}
\State \textit{oldLHeight =  l.height}
\State \textit{oldLRHeight =  lr.height}

\State \textit{newNHeight =  1 + max(lrrHeight, rHeight)}
\State \textit{newLHeight =  1 + max(llHeight, lrlHeight)}
\State \textit{newLRHeight =  1 + max(newNHeight, newLHeight)}

\State \textit{add($\&$lr.parent, l, p)}
\State \textit{add($\&$lr.left, lrl, l)}
\State \textit{add($\&$l.parent, n, lr)}
\State \textit{add($\&$lr.right, lrr, n)}
\State \textit{add($\&$n.parent, p, lr)}
\State \textit{add($\&$l.right, lr, lrl)}
\State \textit{add($\&$n.left, l, lrr)}
\State \textit{add($\&$n.height, oldNHeight, newNHeight)}
\State \textit{add($\&$l.height, oldLHeight, newLHeight)}
\State \textit{add($\&$lr.height, oldLRHeight, newLRHeight)}
\State \textit{add($\&$lr.ver, lrVer, lrVer + 2)} \Comment{\textbf{New changes in PathCAS}}
\State \textit{add($\&$p.ver, pVer, pVer + 2)}
\State \textit{add($\&$n.ver, nVer, nVer + 2)}
\State \textit{add($\&$l.ver, lVer, lVer + 2)}
\IIf {\textit{vexec()}} \textbf{return} \textit{SUCCESS}
\State \textbf{return} RETRY

\end{algorithmic}
\end{algorithm}

\begin{algorithm}[t]
\caption{rebalance(\textit{n})}\label{rebalance}
\begin{algorithmic}[1]
\While{\textit{n $\neq$ minRoot}}
\State \textit{start()}
\State \textit{nV = node.ver}
\IIf {\textit{nV \& 1}} \textbf{return}
\State \textit{p = n.parent}
\State \textit{pV = visit(p)}
\IIf {\textit{pV \& 1}} \textbf{continue} 
\State \textit{l = n.l}
\IIf {\textit{l $\neq$ NIL}} \textit{lV = visit(l)}

\State \textit{r = n.right}
\IIf {\textit{r $\neq$ NIL}} \textit{rV = r.ver }
\IIf{\textit{lV \& 1} \textbf{or} \textit{rV \& 1}} \textbf{continue} 

\State \textit{lh = (left == NIL ? 0 : l.height)}
\State \textit{rh = (right == NIL ? 0 : r.height)}
\State \textit{nBalance = lh - rh}
\If {\textit{nBalance $\geq$ 2}}
\State \textit{ll = l.left}
\IIf {\textit{ll $\neq$ NIL}} \textit{llV = visit(ll)}

\State \textit{lr = l.right}
\IIf {\textit{lr $\neq$ NIL}} \textit{lrV = visit(lr)} 
\IIf{\textit{llV \& 1} \textbf{or} \textit{lrV \& 1}} \textbf{continue} 

\State \textit{llh = (ll == NIL ? 0 : ll.height)}
\State \textit{lrh = (lr == NIL ? 0 : lr.height)}

\State \textit{lBalance = llh - lrh}
\If {\textit{lBalance $<$ 0}}
\If {\textit{rotateLeftRight(p, pV, n, nV, l, lV, r, rV, lr, lrV)}}
    \State \textit{rebalance(n); rebalance(l); rebalance(lr);}
    \State \textit{n = p}
    \EndIf
\ElsIf {\textit{rotateRight(p, pV, n, nV, l, lV,  r, rV)}}
    \State \textit{rebalance(n); rebalance(l);}
    \State \textit{n = p}
\EndIf
\ElsIf {\textit{nBalance $\leq$ -2}}
\State \{...\} \Comment {Reverse case}
\Else 
\State \textit{res = fixHeight(n, nV)}
\IIf {\textit{res == FAILURE}} \textbf{continue}
\ElsIIf{\textit{res == SUCCESS}}  \textit{n = n.parent}
\ElseI\ \textbf{return}
\EndIf

\EndWhile
\end{algorithmic}
\end{algorithm}

\begin{algorithm}[t]

\caption{rotateRight(\textit{p, pV, n, nV, l, lV,  r, rV})}\label{rotateRight}
\begin{algorithmic}[1]

\IIf{\textit{p.right = n}} 
    \textit{add($\&$p.right, n, l)}
\ElsIIf{\textit{p.left = n}} 
    \textit{add($\&$p.left, n, l)}
\ElseI\ \textbf{return} \textit{false}

\State \textit{lr =  l.right}
\State \textit{lrH =  0}
\If{\textit{lr $\neq$  NIL}}
\State \textit{lrV =  visit(lr);} \textit{lrH =  lr.height;}
\IIf{\textit{lrV \& 1}} \textbf{return} \textit{false}
\State \textit{add($\&$lr.parent, l, n)}
\EndIf

\State \textit{ll =  l.left}
\State \textit{llh =  0}
\If{\textit{ll $\neq$  NIL}}
\State \textit{llV =  visit(ll);} \textit{llh =  ll.height;}
\IIf{\textit{llV \& 1}} \textbf{return} \textit{false}

\EndIf

\State \textit{rh =  0}
\If{\textit{r $\neq$  NIL}}
\State \textit{rV =  r.ver;} \textit{rH =  r.height;}

\State \textit{add($\&$r.ver, rV, rV)}
\EndIf

\State \textit{oldNH =  n.height} \Comment{\textbf{Same changes as sequential AVL}}
\State \textit{oldLH =  l.height}
\State \textit{newNH =  1 + max(lrH, rH)}
\State \textit{newLH =  1 + max(llH, newNH)}

\State \textit{add($\&$l.parent, n, p,}
\State \textit{add($\&$n.left, l, lr)}
\State \textit{add($\&$l.right, lr, n)}
\State \textit{add($\&$n.parent, p, l) }
\State \textit{add($\&$n.height, oldNH, newNH)}
\State \textit{add($\&$l.height, oldLH, newLH)}
\State \textit{add($\&$p.ver, pV, pV + 2)} \Comment{\textbf{New changes in PathCAS}}
\State \textit{add($\&$n.ver, nV, nV + 2)}
\State \textit{add($\&$l.ver, lV, lV + 2)}

\IIf{\textit{vexec()}} \textbf{return} \textit{true} 
\State \textbf{return} \textit{false}
\end{algorithmic}
\end{algorithm}

\section{Unbalanced Tree Correctness Proof}  \label{a:avl-proof-correctness}
\begin{defn}
The \textbf{search path} to a key $k$ is the path an atomic search($k$) would traverse.
\end{defn}

\begin{defn}
A node is \textbf{in the data structure} if it is reachable from maxRoot. 
\end{defn}

\begin{lemma} \label{lemma:marked}
No successful PathCAS ever modifies the fields of marked nodes.
\end{lemma}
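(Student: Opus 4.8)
The plan is to combine two facts: (i) the PathCAS success semantics, which guarantee that for a successful \textit{exec}/\textit{vexec} every \textit{added} address holds its expected old value at the linearization point (and, for \textit{vexec}, that \textit{validateDesc} returned \textit{true}); and (ii) a syntactic invariant of the BST code, namely that whenever an operation \textit{adds} an entry modifying a field of a node $n$, the same descriptor also contains a version-increment entry \textit{add}$(\&n.ver, v, v')$, and $n$ was observed to be unmarked before that entry was created. I would first establish that the marked bit is \emph{permanent}: once a node's version is odd it stays odd. A node's version changes only through a successful PathCAS carrying an entry \textit{add}$(\&n.ver, v, v')$; its parity flips only on the $+1$ (marking) increments, and by inspection every such increment is emitted only after an explicit \textit{(ver \& 1)} check, so its expected old value $v$ is even. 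Since success forces $n.ver = v$ at the linearization point, a $+1$ increment always takes an even version to an odd one, while all other increments are $+2$ and preserve parity. Hence the parity of $n.ver$ never changes from odd to even, so marking is permanent.

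Next I would argue the core claim: a successful PathCAS $P$ that modifies a field of $n$ forces $n$ to be unmarked at $P$'s linearization point. By the syntactic invariant, $P$'s descriptor contains a version entry for $n$, and $n$ was vetted as unmarked by one of two mechanisms, which I would verify by a short case analysis over the operations. In \textit{delete} (and in the rotation and \textit{fixHeight} steps of the AVL extension), the code performs explicit \textit{(ver \& 1)} checks on every node it intends to modify and retries if any is set; consequently the version-increment entries use even expected old values, and PathCAS success (all old values match) directly yields that $n$ is unmarked at linearization. In \textit{insert}, the parent is not explicitly mark-checked, but it was \textit{visited} during \textit{search} and is therefore in \textit{desc.path}; since a modifying \textit{insert} uses \textit{vexec}, \textit{validateDesc} must have returned \textit{true}, and its clauses reject any visited node whose recorded \textit{visitVer} is odd or whose current version differs from \textit{visitVer}. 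Thus the parent was unmarked when visited and unchanged through the linearization point, so it is unmarked there as well.

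Finally I would combine permanence with the core claim. If $n$ is marked from some time $t$ onward, then by permanence it is marked at every later time; but any successful PathCAS modifying a field of $n$ must linearize at a point where $n$ is unmarked, hence strictly before $t$. Therefore no successful PathCAS modifies a field of $n$ once $n$ is marked, which is exactly the statement (the marking PathCAS itself only transitions $n$ from unmarked to marked, so it does not contradict the lemma).

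I expect the main obstacle to be the \textit{insert} case: unlike \textit{delete}, \textit{insert} never tests the parent's marked bit, so the argument cannot rely on an even expected version in the version-increment entry and must instead route through validation. The delicate point is to confirm that the modified parent is always present in the \textit{visited} set and that the two clauses of \textit{validateDesc}---the \textit{(visitVer \& 1)} test and the \textit{currentVer $\neq$ visitVer} test---together rule out both ``marked when first seen'' and ``marked after being seen,'' so that \textit{vexec} cannot succeed while the parent is marked. Establishing the permanence of the marked bit rigorously (closing the possibility of an odd-to-even flip through some unexpected increment) is the secondary technical step.
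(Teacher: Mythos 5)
Your proposal is correct and follows essentially the same route as the paper's proof, which is a short code-inspection argument resting on the same three facts: the version number is read before any other field of a node, the node is vetted as unmarked (explicitly or via validation), and the version-increment entry in the descriptor forces the PathCAS to fail if the node is marked in the interim. Your write-up is more rigorous than the paper's one-paragraph argument --- in particular, the explicit routing of the \textit{insert} case through the \textit{(visitVer \& 1)} and \textit{currentVer $\neq$ visitVer} clauses of \textit{validateDesc}, and the separate permanence-of-marking step, address details the paper's proof glosses over with the blanket claim that ``this version number is checked to ensure the node is not marked.''
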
{}
\begin{proof}
This follows from inspection of the code. Before reading any other field of a node, the version number of this node is read. If this node is to be involved in the PathCAS, this version number is checked to ensure the node is not marked, and the version number is incremented as part of the PathCAS. If the node is marked, the operation is retried. If the node is not marked when it is checked, but is marked between this check and the PathCAS execution, the PathCAS will fail (the marking of a node directly involves the changing of its version number). 
\end{proof}{}
\begin{lemma} \label{lem:avl}
Our implementation of an unbalanced binary search tree satisfies the following claims. 
\begin{enumerate}
\item The node \textit{maxRoot} always has \textit{minRoot} as its left child, and NIL as its right child. The node \textit{minRoot} has NIL as its left child. (Remark: the right child points to the rest of the tree.) \label{lem:avl:root}
\item Consider any search, where \textit{$r_{1}...r_{l}$} is the sequence of nodes visited by it so far. For each $r_{i}$ in $r_{1}...r_{l}$, there is a time during the search when $r_{i}$ is in the data structure. \label{lem:avl:search-nodes}
\item Consider an invocation $I$ of \textit{PathCAS::validate()} in an \textit{insertIfAbsent(key, val)}, \textit{delete(key)} or \textit{contains(key)}. If $I$ returns true, then \textit{path} is the search path to $k$ just before $I$.\label{lem:avl:validate}
\item
    \begin{enumerate}[leftmargin=2\parindent]
        \item The tree rooted at the right child of \textit{minRoot} is always a valid binary searh tree. \label{lem:avl:is-relaxed-avl}
        \item Any \textit{insertIfAbsent} or \textit{delete} operation that performs a successful PathCAS returns the same value it would if it were performed atomically at its linearization point (the successful PathCAS). \label{lem:avl:lin-kcas}
        \item Any \textit{insertIfAbsent} or \textit{delete} operation that terminates without performing a successful PathCAS returns the same value it would if it were performed atomically at its linearization point. \label{lem:avl:lin-no-kcas}
    \end{enumerate}

\end{enumerate}
\end{lemma}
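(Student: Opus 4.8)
The plan is to prove the four claims together by induction over the execution, treating claims~(\ref{lem:avl:root}) and (\ref{lem:avl:is-relaxed-avl}) as the structural invariant maintained across the whole run, and claims~(\ref{lem:avl:search-nodes}), (\ref{lem:avl:validate}), (\ref{lem:avl:lin-kcas}), (\ref{lem:avl:lin-no-kcas}) as per-operation consequences. First I would dispatch claim~(\ref{lem:avl:root}) by pure code inspection: no line of \textit{insert}, \textit{delete}, or any rotation ever \textit{adds} a field of \textit{maxRoot} or \textit{minRoot.left} to a descriptor, so these pointers are never arguments to a successful PathCAS and hence never change from their initial values. This gives the root invariant unconditionally.

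For claim~(\ref{lem:avl:search-nodes}) I would induct on the position $i$ in the visited sequence. The base case is $r_1 = \textit{maxRoot}$, always in the data structure by claim~(\ref{lem:avl:root}). For the step, the search obtains $r_{i+1}$ by reading $r_i.\textit{child}$ at some time $t$ after \textit{visit}$(r_i)$. If $r_i$ is in the data structure at $t$, then its child $r_{i+1}$ is too. Otherwise $r_i$ was unlinked before $t$; since nodes are unlinked and marked in the same PathCAS, $r_i$ is marked by then, so by Lemma~\ref{lemma:marked} its child fields are frozen thereafter, meaning $r_i.\textit{child} = r_{i+1}$ already held at the instant $r_i$ was removed --- an instant at which $r_i$, and therefore its child $r_{i+1}$, was still in the data structure.

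The core of the argument is claim~(\ref{lem:avl:validate}). I would use two facts: version numbers are monotonically non-decreasing (every modification increments, never decrements), and the code guarantees that every change to a node's child/key/value fields is accompanied by an increment of that node's version. Given these, if \textit{validate()} returns \textit{true} then for each visited $r_i$ the reread version equals the visited version; by monotonicity the version --- and hence every field of $r_i$, in particular the child pointer the search followed --- was constant on the entire interval from \textit{visit}$(r_i)$ to its reread. The instant just before $I$ lies inside all of these intervals (it is after every \textit{visit} and before every reread), so at that single instant each $r_i.\textit{child} = r_{i+1}$ with the branch direction dictated by the key comparison, every $r_i$ is unmarked, and $r_1 = \textit{maxRoot}$. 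Chaining these equalities shows that an atomic search for $k$ launched at that instant traverses exactly $r_1, \dots, r_l$, i.e.\ \textit{path} is the search path to $k$.

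Finally, for claim~(\ref{lem:avl:is-relaxed-avl}) I would induct on successful PathCAS operations. Each successful update is one of the three sequential internal-BST updates, and its \textit{vexec} validates the entire visited path and checks the old values of every \textit{added} field; by claim~(\ref{lem:avl:validate}) the path is the genuine search path at the linearization instant and the added fields hold their expected values, so the update is applied to a consistent BST state and, by sequential internal-BST correctness, yields a BST. The delicate case is the two-child delete, where I must use that the \textit{added} version checks along the \textit{curr}-to-successor segment force \textit{succ} to actually be the successor of \textit{curr} at the linearization point. Claims~(\ref{lem:avl:lin-kcas}) and (\ref{lem:avl:lin-no-kcas}) then follow by linearizing each operation at its successful \textit{vexec}, or, for operations returning \textit{false}, at the instant just before their successful \textit{validate()}; in both cases claim~(\ref{lem:avl:validate}) certifies that the search observed the true membership of $k$ at the linearization instant, so the returned value matches an atomic execution. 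The main obstacle is claim~(\ref{lem:avl:validate}): pinning down a single instant at which the entire recorded path is simultaneously coherent, which hinges on combining version monotonicity with the invariant that field changes always bump versions.
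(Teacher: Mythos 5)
Your proposal is correct and follows essentially the same route as the paper's proof: a joint induction over the execution, with claim~(1) by code inspection, claim~(2) via the fact that unlinking and marking happen in the same PathCAS so a removed node's child pointers are frozen, claim~(3) by arguing that matching version numbers pin down an atomic snapshot of the path at the instant just before validation, and claim~(4) by case analysis on the successful \textit{vexec} using claim~(3). The only (harmless) differences are that you make the version-monotonicity assumption explicit where the paper relegates wrap-around to an appendix remark, and you linearize a false-returning \textit{insert} just before its successful \textit{validate} rather than at the time from claim~(2); both choices are valid.
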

\begin{proof}
Consider an arbitrary execution $E$.
We prove these claims together by induction on the sequence of steps $s_1, s_2, ...$ (which can be shared memory reads, atomic KCASRead operations, or atomic KCAS operations) in $E$.

Base case: Before any PathCAS is successful, the tree is in its initial state where two nodes exist: \textit{minRoot} and \textit{maxRoot}. \textit{maxRoot} has the key +$\infty$, its left child is \textit{minRoot}, and its right child is NIL. \textit{minRoot} has the key -$\infty$, and both its children are NIL. 

Inductive step: suppose the claims all hold before step $s$. We prove they hold after step $s$.

\noindent \textsc{Claim \ref{lem:avl:root}}. This configuration is the initial state of the tree, except the right child of \textit{minRoot} could no longer be NIL. No operation can modify this state. \textit{minRoot} and \textit{maxRoot} contain keys that will never be part of any operation: no operation will search for, attempt to remove, or attempt to insert these keys. Additionally, these nodes will never be rebalanced: it is explicit in the code that rebalancing stops when it reaches the \textit{minRoot}, which also means \textit{maxRoot} will also never be rebalanced. \\

\noindent \textsc{Claim~\ref{lem:avl:search-nodes}}. The only step that can impact this claim is a KCASRead in search that traverses to a new node, by reading a pointer from $r_{l}$ to some $r_{l+1}$. (This pointer is then added it to \textit{path[]}, the sequence of nodes visited so far.)
So, $s$ is a KCASRead reading the left or right pointers of a node to move to another. %

By the inductive hypothesis $r_{l}$ was in the data structure at some time before $s$ during the search. If $r_{l}$ still points to $r_{l+1}$ at step $s$, then since \textit{$r_{l}$} is in the data structure and points to $r_{l+1}$, so is $r_{l+1}$.
Otherwise, \textit{$r_{l}$} was deleted before $s$ during the search, and right before this deletion $r_{l}$ pointed to $r_{l+1}$ (Lemma \ref{lemma:marked}).
Therefore, $r_{l+1}$ was in the data structure at the time of this deletion (which was during the search). \\

\noindent \textsc{Claim \ref{lem:avl:validate}}. Only \textit{successful} PathCAS operations can affect this claim, as reads do not modify the structure of the tree. A search reads and stores the version numbers of all nodes encountered during a search. If any of these nodes change between when they were read as part of the search and when they are validated as part of \textit{validatePath}, \textit{validatePath} will return false. If \textit{validatePath} returns true, it is guaranteed that no modification occurred to \textbf{any} node along the path between when it was read and when it was validated as part of \textit{validatePath}. At the time \textit{t} just before \textit{I}, all nodes were read as part of the search but are yet to be validated. Hence, if \textit{I} returns true, all the nodes still had the same version numbers read as part of the search at \textit{t} and this path was an \textit{atomic snapshot} of the search path to \textit{k}, just before the invocation of \textit{I} (before any nodes were validated).
\\

\noindent \textsc{Claim \ref{lem:avl:is-relaxed-avl}}. Only \textit{successful} PathCAS operations that change the layout of the tree can affect this claim. KCAS operations are performed by \textit{insertIfAbsent} and \textit{delete}.%
We proceed by cases.

\textit{Case 1}: Suppose \textit{s} is a successful KCAS at line~\ref{line:insert:kcas} of \textit{insertIfAbsent}.
This KCAS will only be completed if a successful path validation occurred.
Let \textit{t} be the time just before the successful invocation of \textit{validateDesc} started.
Claim \ref{lem:avl:validate} means that this path \textbf{is} the search path to \textit{k} at time \textit{t}, and no node has changed since they were read during the operation.
The node we modify to insert this new node is the final node on the search path at time \textit{t}, denoted by $\textit{last}_t$. 
In order for the search path to \textit{k} to change (meaning $\textit{last}_t$ to no longer the correct parent for a new node containing \textit{k}) a concurrent operation must modify at least one node along the search path to \textit{k}.
However, if one of these nodes were to be modified, the KCAS would fail: validation would not have succeeded.

\textit{Case 2:} Suppose \textit{s} is a successful vexec at line \ref{line:deleteTwo:kcas} of \textit{delete} (we omit the PathCAS of the single or no child delete as it is strictly easier). 
This proof is similar to the previous case where \textit{s} is a PathCAS within \textit{insertIfAbsent}.
Let \textit{t} be the time just before the successful invocation of \textit{validatePath} started.
In \textit{delete}, search locates a node \textit{n} that contains the key \textit{k}, and the version number of \textit{n} is added to the PathCAS to be incremented. 
\textit{getSuccessor} locates the successor of \textit{k} by traversing the tree from \textit{n} and validating the path taken. 
Because the the invocation of \textit{validateDesc()} in \textit{vexec()} must return \textit{true} for this operation to be successful, the node returned by \textit{getSuccessor} is the successor of \textit{k} at \textit{t}, by traversing the tree from \textit{n} and validating both the path taken to \textit{k} and the path taken to the successor. 
In order for this operation to be invalid, either \textit{n} must not contain \textit{k}, or the successor returned by \textit{getSuccessor} must be incorrect.
However, for this to be the case, a concurrent operation must modify at least \textit{n} or a node along the path from \textit{n} to the successor found.
\textit{n} cannot be modified, as it is protected by the PathCAS.
If one of the nodes along the path is modified, the PathCAS would fail: validation would not have succeeded.

\noindent \textsc{Claim \ref{lem:avl:lin-kcas}}. In Claim \ref{lem:avl:is-relaxed-avl} we actually proved that all operations that execute a successful PathCAS are atomic at time \textit{s}, which is the time when the PathCAS was executed. \\

\noindent \textsc{Claim \ref{lem:avl:lin-no-kcas}}. \textit{insertIfAbsent} only returns without performing a PathCAS if \textit{search} locates a node that has the key \textit{k}. From Claim \ref{lem:avl:search-nodes} the key \textit{k} that we are trying to insert was in the data structure at some time \textit{t} during the search. \textit{t} is during \textit{insertIfAbsent}, so we linearize this operation at \textit{t}, returning false.

\textit{delete} only returns without performing a PathCAS if \textit{search} does not locate a node that has the key \textit{k}. By Claim \ref{lem:avl:validate}, the key \textit{k} that we are trying to remove was not in the data structure at some time \textit{t} during the search. \textit{t} is during \textit{delete}, so we linearize this operation at \textit{t}, returning false. \\

The linearization points for the operations are as follows:
\begin{itemize}
    \item \textit{insertIfAbsent}
        \begin{itemize}
        \item returning \textit{true}: at the successful PathCAS at line \ref{line:insert:kcas} of \textit{insertIfAbsent}
        \item returning \textit{false}: the time \textit{t} during search where \textit{k} was in the data structure, from Claim \ref{lem:avl:search-nodes}
        \end{itemize}
    \item \textit{delete}
      \begin{itemize}
        \item returning \textit{true}: at the successful PathCAS of at line \ref{line:delete:two-return-success} of \textit{delete} or at line successful PathCAS for the single or no child deletion in \textit{delete} (whichever is used)
        \item returning \textit{false}: the time \textit{t} just before the invocation of \textit{validatePath} in search, from Claim \ref{lem:avl:validate}
        \end{itemize}
    \item \textit{contains} 
         \begin{itemize}
        \item returning \textit{true}: the time \textit{t} during \textit{search} where \textit{k} was in the data structure, from Claim \ref{lem:avl:search-nodes}
        \item returning \textit{false}: the time \textit{t} just before the invocation of \textit{validatePath} in search, from Claim \ref{lem:avl:validate}
        \end{itemize}
\end{itemize}

\end{proof}

\begin{theorem}
Our binary search tree implements a linearizable dictionary.
\end{theorem}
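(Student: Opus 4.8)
The plan is to show that the linearization points already assigned at the end of the proof of Lemma~\ref{lem:avl} witness linearizability, so that the theorem is essentially a corollary of that lemma. First I would fix the sequential specification: the abstract state is a dictionary (a set of keys with associated values), and \textit{contains}, \textit{insert} and \textit{delete} behave as the usual set operations. I would then define the abstract state at each instant of an arbitrary execution $E$ to be the set of keys stored in nodes that are reachable from \textit{maxRoot} and unmarked. By the observation in Section~\ref{sec:bst-optimizations} that reachability is equivalent to being unmarked---since nodes are unlinked and marked in the \emph{same} PathCAS---this abstract set is well defined at every configuration, and by Claim~\ref{lem:avl:is-relaxed-avl} the reachable structure is always a valid BST, so each key appears at most once.

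Next I would argue that every successful update changes the abstract state atomically at a single step. By Claim~\ref{lem:avl:lin-kcas}, an \textit{insert} or \textit{delete} that performs a successful PathCAS has exactly the effect of its sequential counterpart applied at the PathCAS step $s$; and since that PathCAS is a single atomic step (Lemma~\ref{lemma:marked} guarantees it never touches marked nodes, and marking and unlinking happen simultaneously), the abstract set gains or loses exactly the key $k$ precisely at $s$. Thus the abstract state, viewed as a function of time, is a step function whose jumps coincide exactly with the successful-PathCAS linearization points of update operations.

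Then I would verify that each operation's assigned linearization point lies inside its execution interval and that its return value matches the abstract state there. For updates returning \textit{true}, the point is the successful PathCAS performed by the operation itself, hence inside the interval, and Claim~\ref{lem:avl:lin-kcas} gives the correct return value. For \textit{insert}/\textit{delete}/\textit{contains} returning \textit{false}, and for \textit{contains} returning \textit{true}, the point is an instant $t$ strictly during the \textit{search} of that same invocation---supplied by Claim~\ref{lem:avl:search-nodes} for the ``key present'' cases and by Claim~\ref{lem:avl:validate} together with Claim~\ref{lem:avl:lin-no-kcas} for the ``key absent'' cases---so it too lies in the interval, and the corresponding claim shows the returned boolean agrees with whether $k$ is in the abstract set at $t$. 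Ordering all operations by their linearization points (which are totally ordered consistently with real time in $E$) therefore yields a sequential history in which every response equals the value produced by the abstract dictionary at the corresponding instant, i.e.\ a legal sequential dictionary execution, establishing linearizability.

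The main obstacle is the consistency argument just sketched: showing that the value each operation ``would return if performed atomically at its linearization point'' (as phrased in Claims~\ref{lem:avl:lin-kcas} and~\ref{lem:avl:lin-no-kcas}) coincides with the value dictated by a sequential replay of all earlier-linearized operations. This reduces to the fact that the abstract state at any instant $\tau$ equals the result of applying exactly those updates whose linearization points precede $\tau$, which is precisely the step-function property established in the second paragraph. Once that property is in hand, no further reasoning about concurrent interleavings is required, and the theorem follows from Lemma~\ref{lem:avl}.
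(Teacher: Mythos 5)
Your proposal is correct and follows essentially the same route as the paper, which proves the theorem as a direct corollary of Lemma~\ref{lem:avl} using the linearization points assigned there. The paper's own proof is only a two-sentence appeal to that lemma; your version simply makes explicit the abstract-state definition and the step-function consistency argument that the paper leaves implicit.
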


\begin{proof}
Lemma \ref{lem:avl} proves that the all the operations on the tree are atomic and do not violate any relaxed binary search tree properties. Searches are equivalent to an atomic search at some time \textit{t} during \textit{search}. \textit{contains} is simply linearized at this time \textit{t} during \textit{search}. 
\end{proof}

\section{Motivation: The Hardness of Proving Fine-Grained Data Structures Correct}
\subsection{Overview} \label{a:motivation}
Correctness bugs exist even in peer reviewed works, going undetected for long periods of time. In this section we provide an overview of previously discovered issues in other work, and one new bug we found during our investigation. 

Michael and Scott \cite{MSTech} discovered two race conditions in the lock-free concurrent queue by Valois \cite{ValoisThesis} which lead to incorrect memory reclamation. These issues could corrupt the data structure in two ways: by freeing the same nodes multiple times, or freeing nodes that are still reachable.  The memory reclamation scheme attempts to avoid ABA problems by relying on reference counters in data structure nodes, where threads increase the reference counter of a node when they read a pointer to it, and decrement the counter when they no longer require it. The thread that moves this reference counter to 0 after a node has been removed will reclaim this node. There exists a time, however, between when a pointer to a node is read by a thread and that thread increments the reference counter of the node. During this time another thread could move the reference counter of the node to 0 and free it. The thread about to increment the reference counter will be unaware of this, will increment the reference counter of the node to 1, then potentially back to 0, resulting in a double free. A similar issue can also result in nodes being freed despite still being in the data structure. 

Shafiei \cite{ShafieiThesis} found an execution that dereferences a null pointer in the lock-free doubly-linked-list by Sundell and Tsigas \cite{TsigasDLL} by running the Java PathFinder (JPF) model checker on the implementation. She describes how the sheer complexity of the algorithm makes it very difficult to reason about its correctness, and that the proofs provided in the work are insufficient. This correctness issue was fixed in subsequent versions of the same work.

As part of this work, we discovered a new bug in the balanced BST by Drachsler et al. \cite{dana-tree}. This tree uses additional pointers in nodes to track the predecessor and successor of a node, which are used to recover from searches that end up in an incorrect location due to a concurrent rotation. An execution exists, however, wherein searches fail to find keys that are present during the entire search operation. We notified the authors of this work, who confirmed the bug via private communication and plan to release errata on the topic at a later date. A full explanation of the incorrect execution is in Appendix \ref{a:dana}.

There are two existing lock-free internal BSTs in the literature, due to Howley and Jones~\cite{howley} (HJ12) and Ramachandran et~al.~\cite{ram-tree} (RM14).
Publicly available implementations of both trees \textit{fail} our experimental validation checks, which test for consistency between the tree contents at the end of an experiment and the return values of all updates recorded throughout the experiment.
The HJ12 tree has no accompanying proofs, although the authors did a model checker on some very small trees.
We have been unable to resolve the correctness problems in RM14 with the authors, and it is unclear whether the algorithm is correct.
(The paper cites a proof in a technical report, but the technical report does not appear to be available.)
Both papers use highly intricate synchronization mechanisms: e.g., RM14 is described in \textbf{315 lines of pseudocode} \textit{excluding} code comments.
By contrast, our algorithm is arguably short and easy to reason about.
Each operation \textit{visits} every node that it accesses, before accessing any of the node's fields, and performs either a \textit{validate} or \textit{vexec} to establish a time when the entire operation takes effect atomically.

\subsection{Drachsler Tree Bug}\label{a:dana}
\subsubsection{Overview}

Consider the following tree structure: 
\begin{center}
    \includegraphics[width=.45\textwidth]{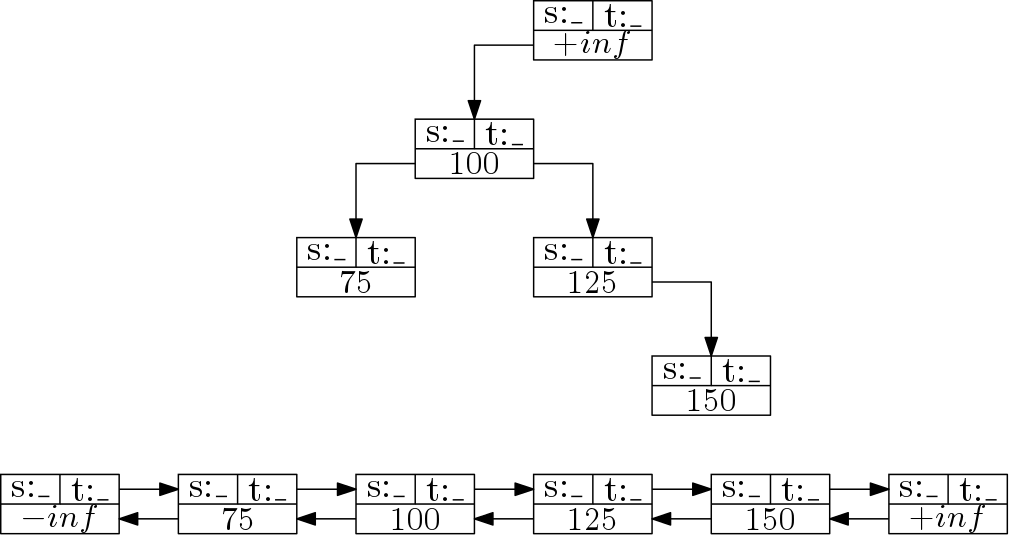}
\end{center}

This shows both the tree and logical ordering structure as shown in the paper. The bottom values are keys, and \textit{s} and \textit{t} fields represent the succLock and treeLocks respectively, a \_ represents no thread holds the lock, but this will be filled in when a thread holds the lock. Assume now that a thread \textit{p} inserts the value 175 to both the logical order and the tree, but has not rebalanced the tree as of yet (i.e. just before line 9 of algorithm 5 in the original paper) leaving the tree in a state as follows:
\begin{center}
    \includegraphics[width=.45\textwidth]{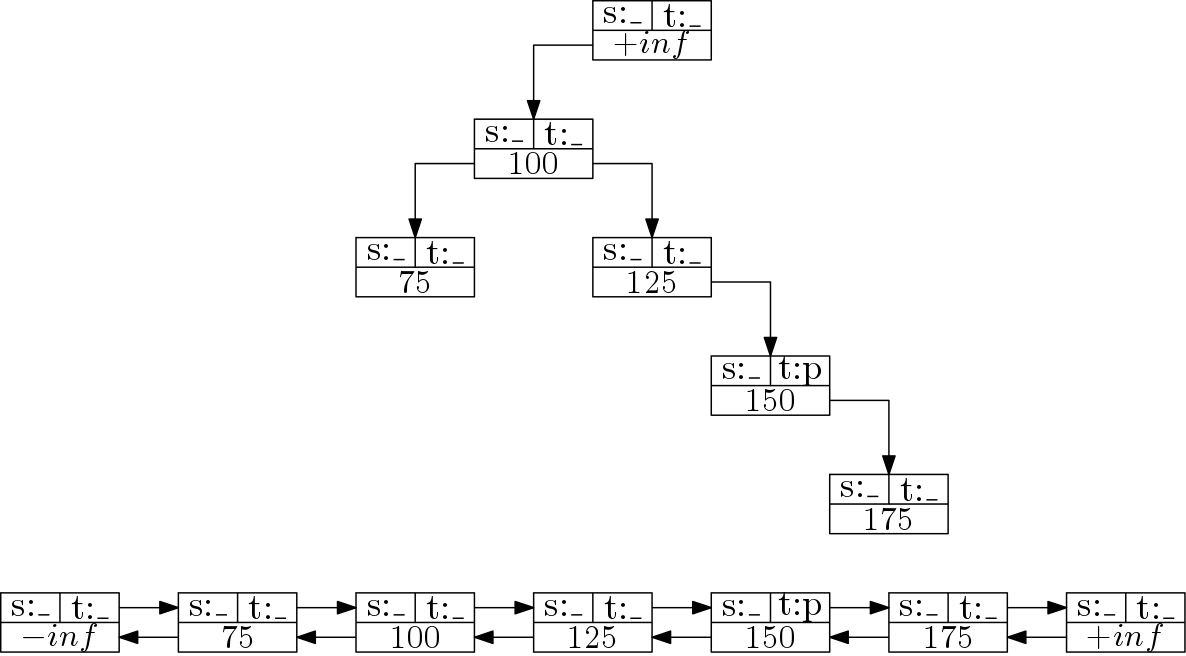}
\end{center}
Thread \textit{p} now goes to sleep. Note that the treeLock for the node 150 is held by thread \textit{p}, but the succLock is released as that operation is completed. A rotation should occur on the node 125. Before this rotation can occur, imagine a thread \textit{q} performs an insertion of 160 up until line 15 of algorithm 3. This can occur because the succLock of 150 has been released by p (only the treeLock is held) and the treeLock of 175 will be the one that is acquired for this operation. This will leave the tree in the following state:
\begin{center}
    \includegraphics[width=.45\textwidth]{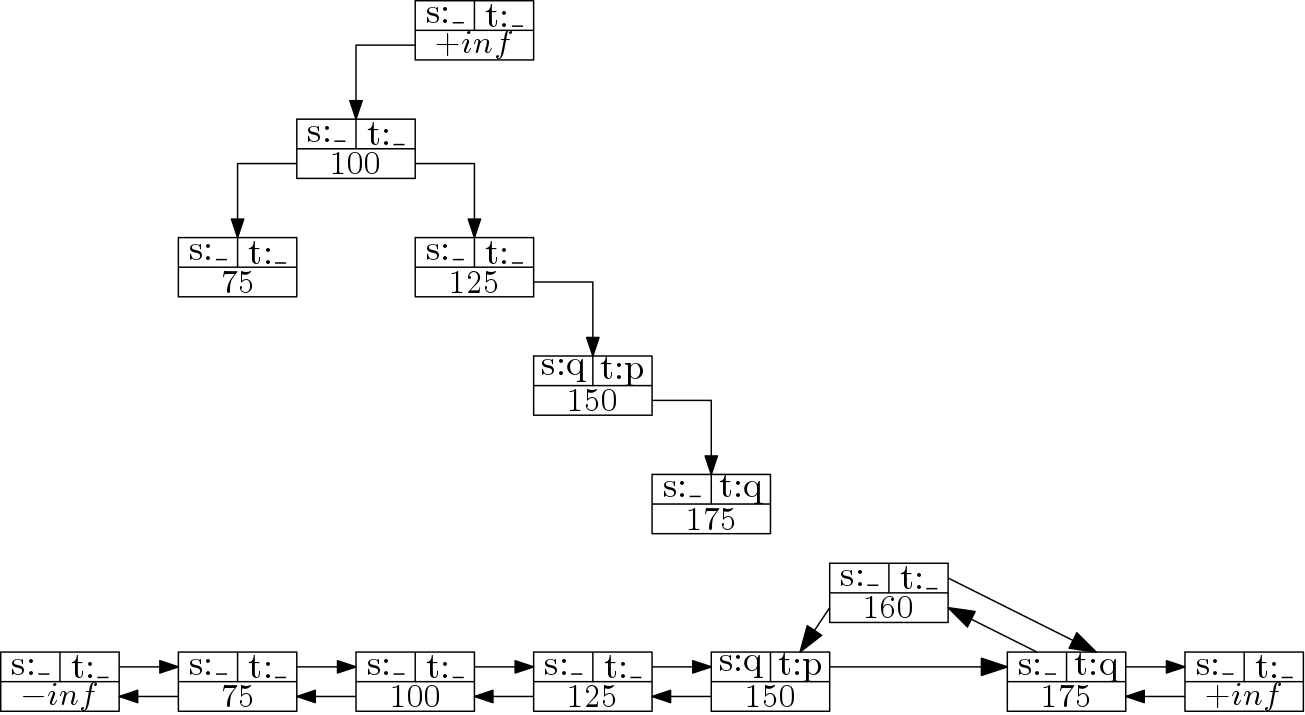}
\end{center}
Thread \textit{q} now goes to sleep. A third thread, \textit{r} does a contains operation for the key 160 to completion. It will search the tree as per algorithm 1, then will backtrack and find 160 as per algorithm 2. This operation will ignore all locks and return true. This implies the insertion of 160 must already have been linearized. Hence, the linearization point for insert described in the paper is incorrect (line 16, where pred.succ is updated) as the change is observed here, it must be before this line (line 15 is logical). \\

Next, a fourth thread, thread \textit{s}, does the same operation as thread \textit{r} (contains(160)), however it goes to sleep after it traverses to node 125.

\begin{center}
    \includegraphics[width=.45\textwidth]{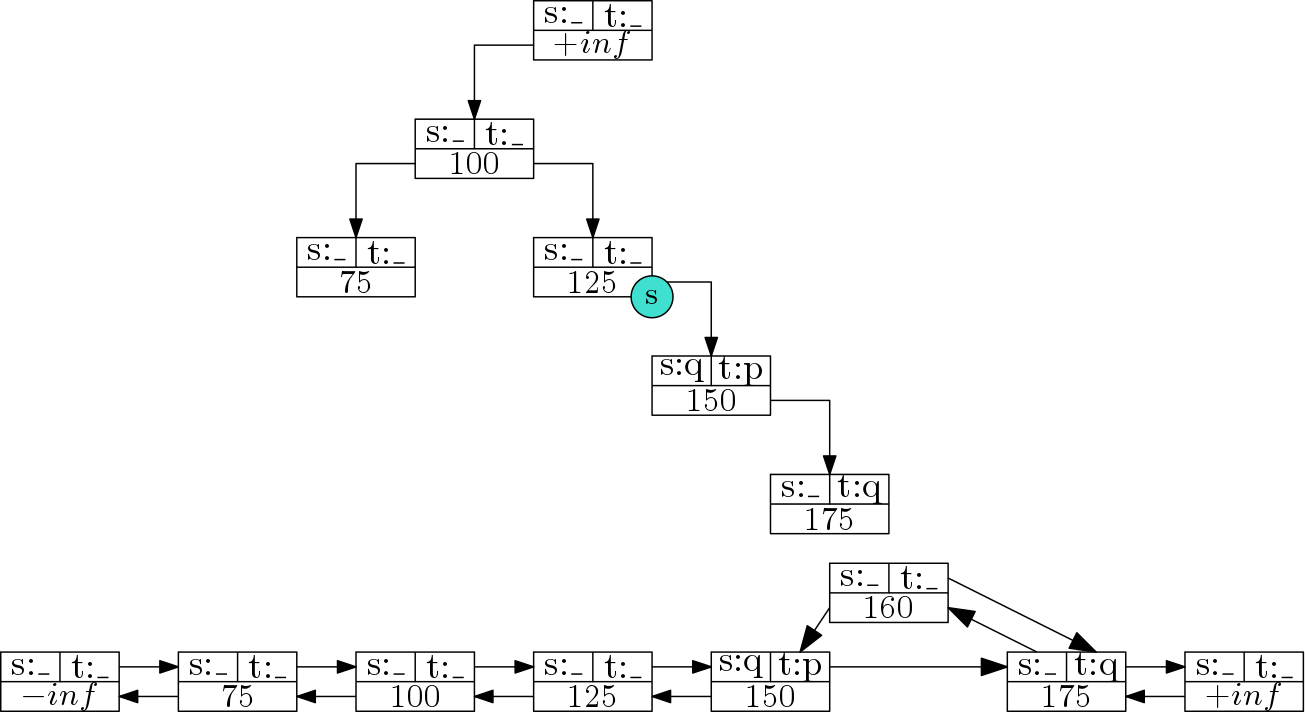}
\end{center}

Thread \textit{p} now wakes up and will start a rotation, this will result in the function rebalance(node, child) being called with the node containing 125 as the ``node'' argument, and the node containing 150 as the ``child'' argument to the \textit{rebalance(node, child)} function. Thread \textit{p} already has the treeLock for 150, and can freely acquire the treeLocks for 125 and 100 (the other ones required for this rotation). Note here that thread \textit{q} only holds the treeLock for 175. No threads hold succLocks at this point. \textit{p} can proceed (is not blocked by any currently held locks).

\begin{center}
    \includegraphics[width=.45\textwidth]{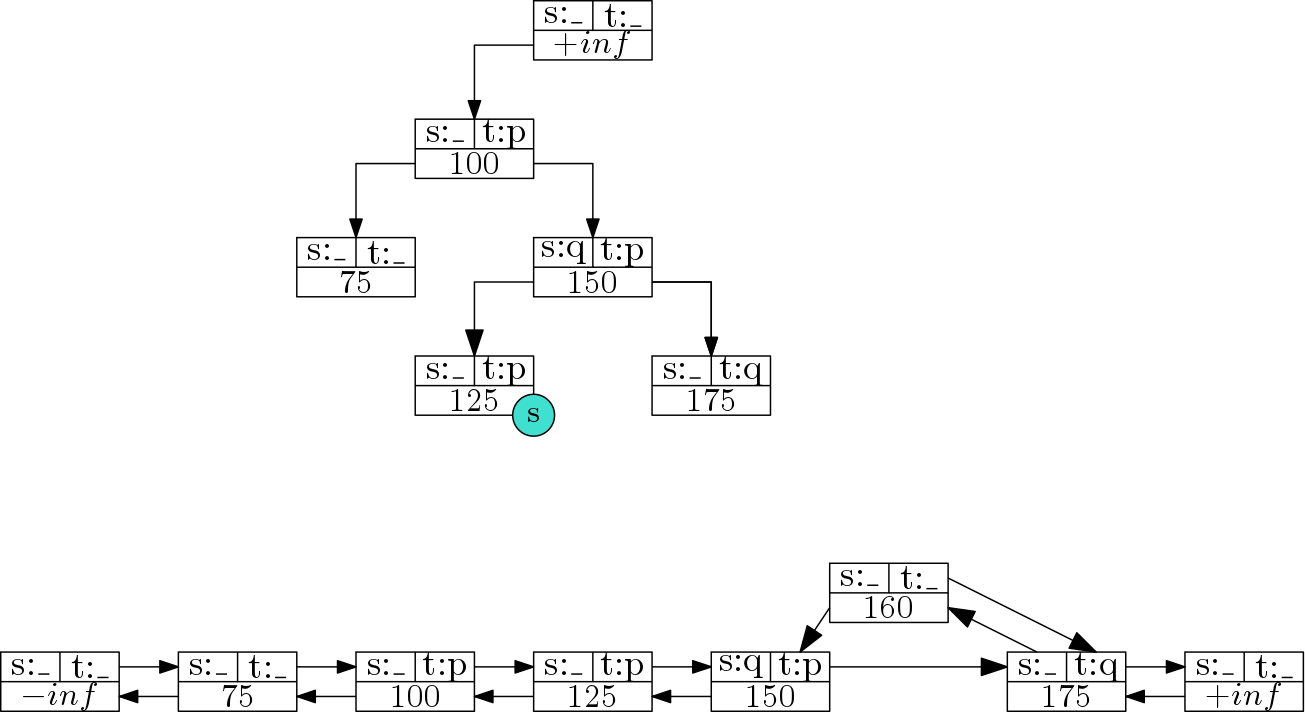}
\end{center}

Thread \textit{s} now wakes up after this rotation, realizes it's at a leaf node and completes its traversal of the tree. From there, it will attempt to follow the logical order to discover if it missed an update. Line 2 of algorithm 2 will not traverse the list (as \textit{node.key $>$ k} is false), but line 3 will traverse (as \textit{node.key $<$ k}) until the node containing 175, see that this key is not 160 and return false. This is invalid, as the linearization point for the insert(160) of thread \textit{q} has passed or else the result of the previous search for 160 must be incorrect. The following thread schedule represents the execution of the threads.  

\begin{center}
    \includegraphics[width=.45\textwidth]{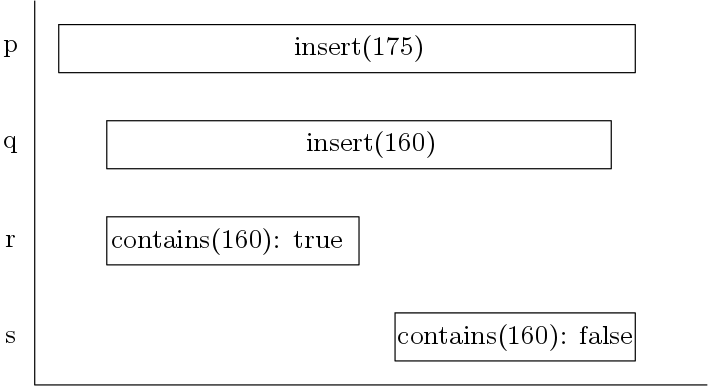}
\end{center}

The two contains cannot be linearized. 

\subsubsection{Solution: Search Direction Swap}

Say we reverse line 2 and 3 from the contains() operation (Algorithm 2 in the original work), making it look like the following:
\begin{algorithm}[H]
\caption{contains(key)}
\begin{algorithmic}[1]

\State node = search(key)
\State \textbf{while} \textit{node.key $<$ k} \textbf{do} node = node.succ
\State \textbf{while} \textit{node.key $>$ k} \textbf{do} node = node.pred
\State \textbf{return} \textit{(node.key == k \textbf{and} !(node.mark))}

\end{algorithmic}
\end{algorithm}

Now consider the example above, up to this state. 

\begin{center}
    \includegraphics[width=.45\textwidth]{images/dana-tree/5.png}
\end{center}

Thread s will now find the key 160, and the insert operation can be linearized. Actually, \textit{regardless of what node you end up at after this search, the contains operation will find 160}. If you are left of the partially inserted node, you will traverse succ pointers until you pass it, then follow a single pred pointer to the node in question. If you are to the right of the partially inserted node, you will follow no succ pointers, but follow pred pointers until you reach the node in question. \\ 

Consider the reverse case, before a right rotation occurs:

\begin{center}
    \includegraphics[width=.5\textwidth]{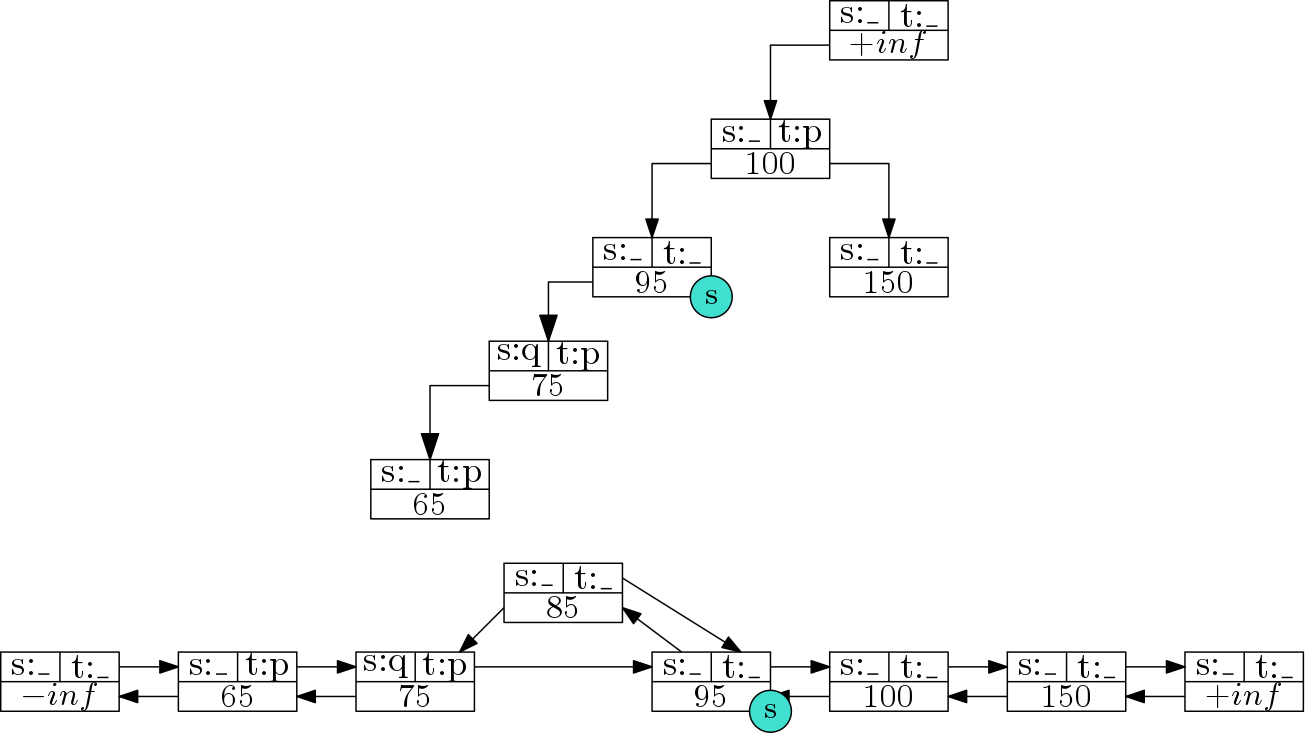}
\end{center}

After the rotation occurs:

\begin{center}
    \includegraphics[width=.45\textwidth]{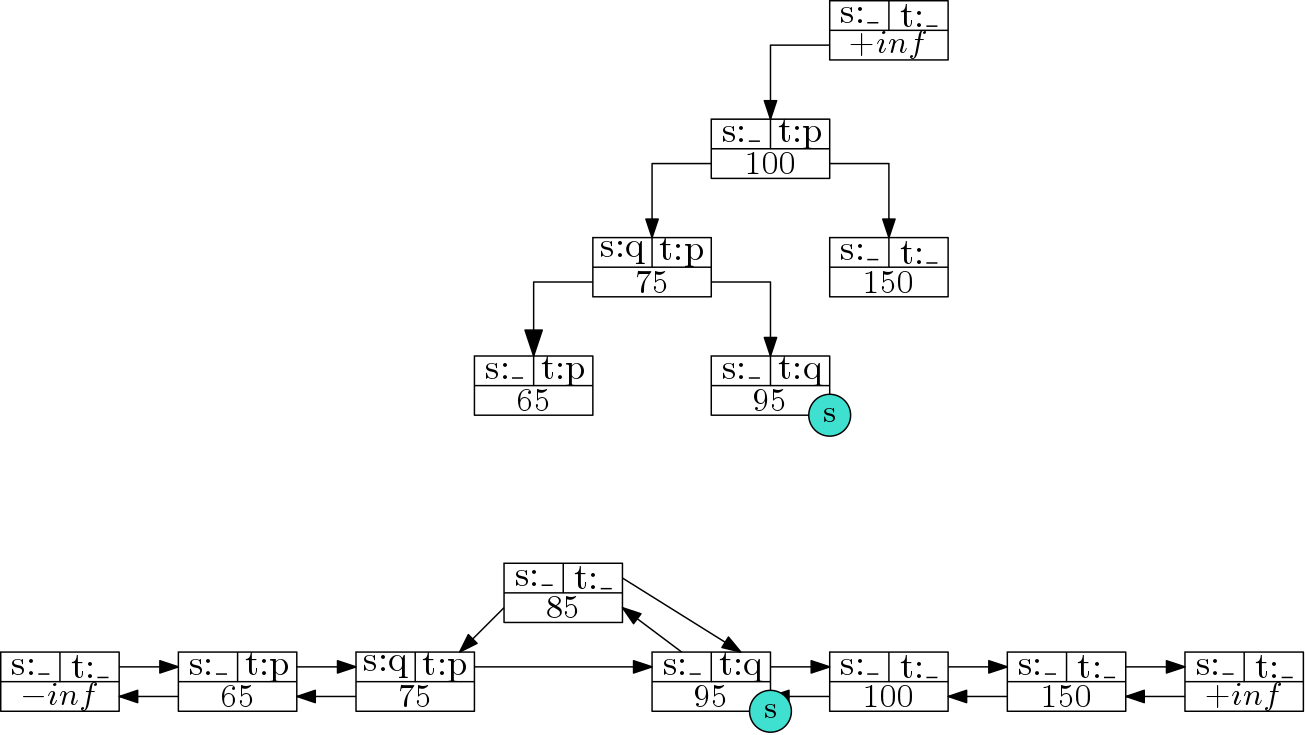}
\end{center}

Here, \textit{s} is searching for 85, arrives at 95 but then gets rotated down. If it searches right then left, it will still find 85. It is the same as the previous example, \textit{regardless of what node you end up at after this search, the contains operation will find 85.}

\section{Full Experimental Results} \label{a:experiments}
Our other system has four Intel Xeon Platinum 8160 CPUs, each with 24 cores and two hardware threads per core, for a total of 192 hardware threads.
Cores on the same CPU share a 33MB L3 cache. %
Threads are \textit{pinned} such that thread counts up to 48 run on one physical CPU, thread count 96 runs on two CPUs, and so on. %
Our code was compiled with GCC 7.4.0-1 using flag \texttt{-O3}.
We used \texttt{numactl} to interleave memory pages evenly across CPUs.
The fast allocator jemalloc 5.0.1-25 was for all algorithms, and memory was reclaimed using DEBRA, a fast epoch-based reclamation algorithm~\cite{DEBRA}.  %

\begin{figure}[H]

\begin{subfigure}{1\linewidth}
        \includegraphics[width=1\linewidth]{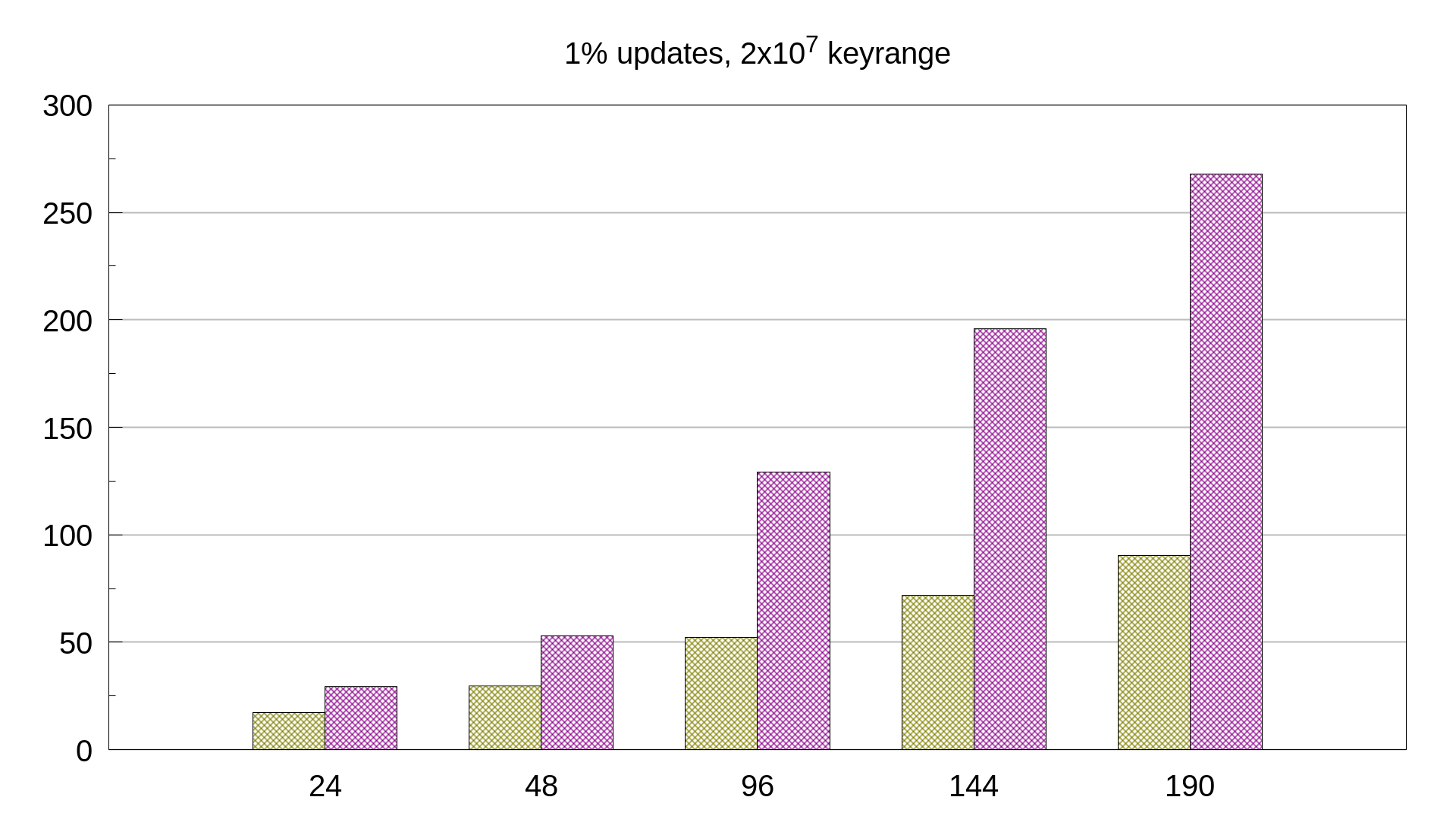}
\centering
\end{subfigure}
\begin{subfigure}{1\linewidth}
        \includegraphics[width=0.6\linewidth]{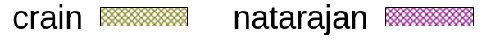}
\centering

\end{subfigure}
\vspace{-4mm}
\caption{\textbf{Throughput} evaluation of the speculation-friendly BST, in synchrobench.  Total throughput in millions of operations per second, full comparison, higher is better }
\label{fig:synchro}
\end{figure}

One other interesting comparison would be with \textit{crain}, the speculation-friendly BST from Synchrobench~\cite{synchrobench}, shown above in Figure \ref{fig:synchro}. While we attempted to port this data structure to our testing suite, due to time constraints we were unable to (as it would require porting the Elastic STM algorithm, integrating support for background threads that perform data structure maintenance asynchronously, and completely reimplementing its memory reclamation).

So, we obtained performance numbers for \textit{crain} using Synchrobench, instead of Setbench. We also obtained performance numbers for \textit{natarajan} in Synchrobench, to include some sort of comparison point between our results in Setbench and our results in Synchrobench. Although the resulting graph does \textit{not} allow for a \textit{rigorous} comparison between \textit{crain} and the other trees, we note that it is \textit{much} slower than \textit{natarajan}, which is in the middle of the pack in our Setbench experiments. And, we also note that \textit{crain} is being evaluated in an environment that is presumably most favourable to it, as \textit{crain} was developed by the authors of Synchrobench, and integrated therein by the authors. We aim to have properly integrated \textit{crain} into Setbench time for the final paper.

\begin{figure*}[H]
\newcommand{\plotwidth}{0.32\linewidth}
\newcommand{\legendwidth}{0.8\linewidth}
\begin{tabular}{p{\plotwidth}p{\plotwidth}p{\plotwidth}}
\centering\noindent\textbf{1\% updates} & \centering\noindent\textbf{10\% updates} & \centering\noindent\textbf{100\% updates}
\end{tabular}
\rotatebox{90}{\hspace{5mm}\textbf{10M keys}}
\includegraphics[width=\plotwidth]{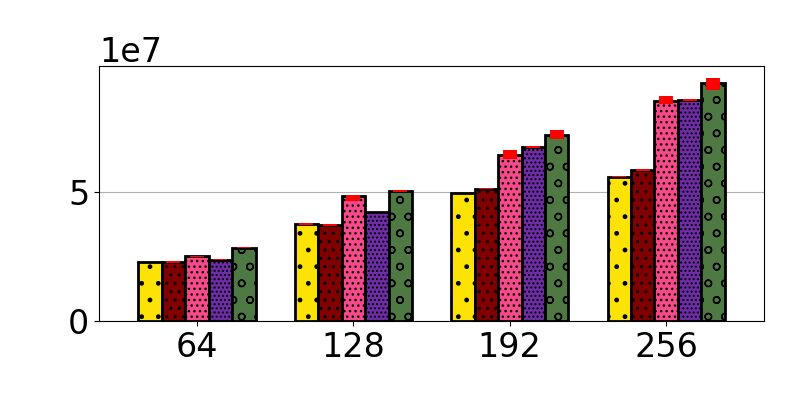}
\includegraphics[width=\plotwidth]{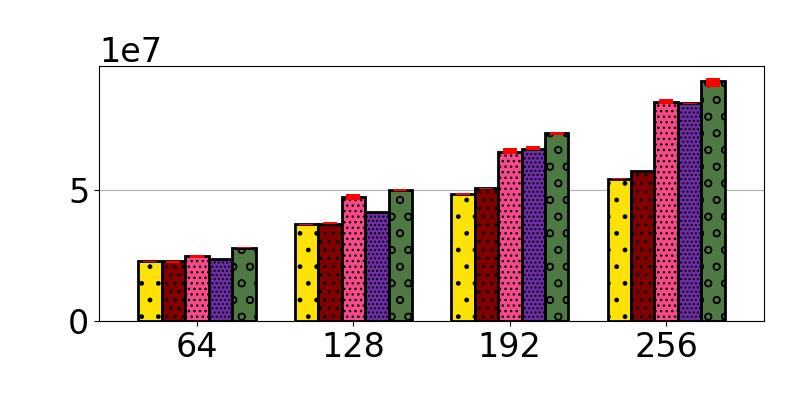}
\includegraphics[width=\plotwidth]{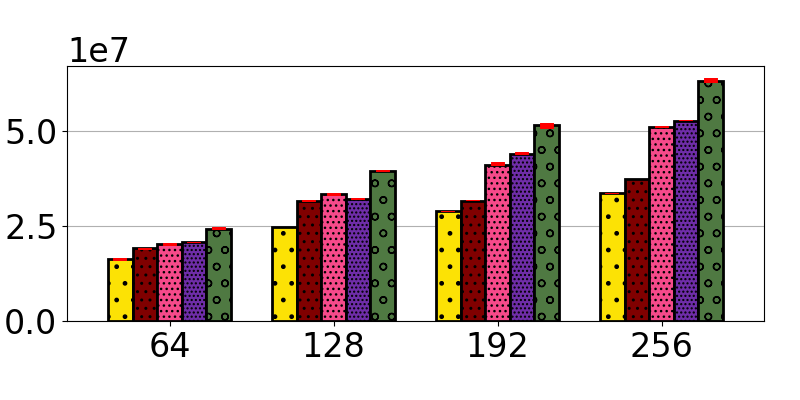}
\rotatebox{90}{\hspace{5mm}\textbf{1M keys}}
\includegraphics[width=\plotwidth]{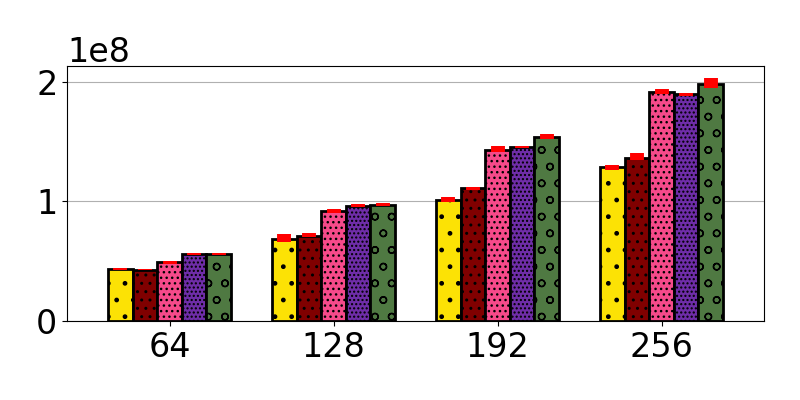}
\includegraphics[width=\plotwidth]{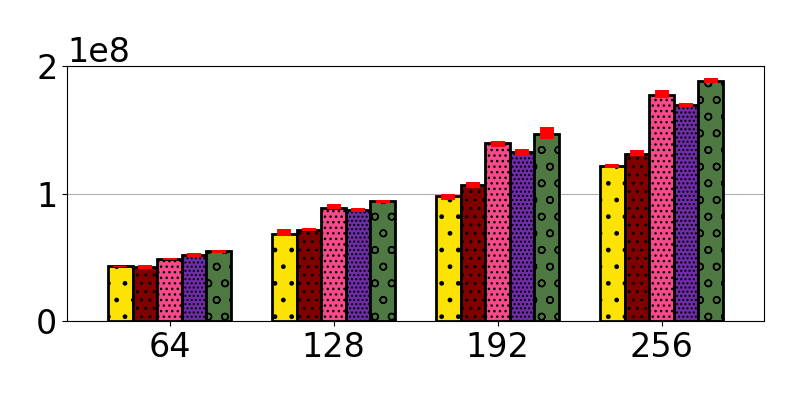}
\includegraphics[width=\plotwidth]{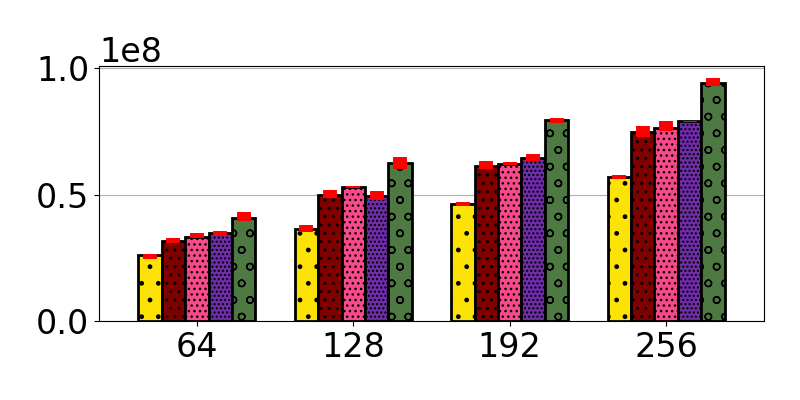}
\rotatebox{90}{\hspace{5mm}\textbf{100k keys}}
\includegraphics[width=\plotwidth]{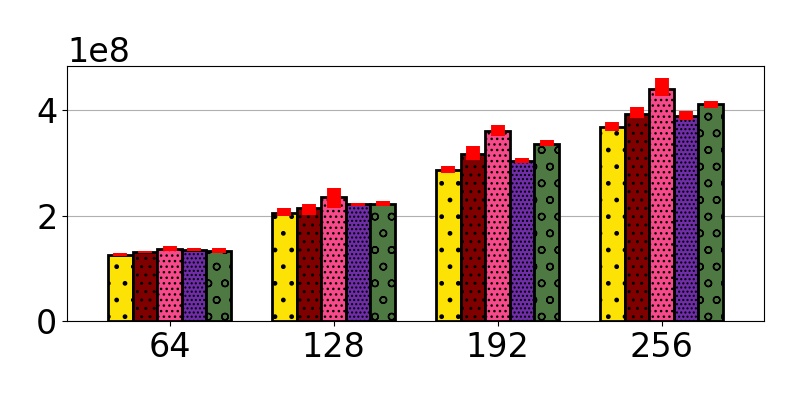}
\includegraphics[width=\plotwidth]{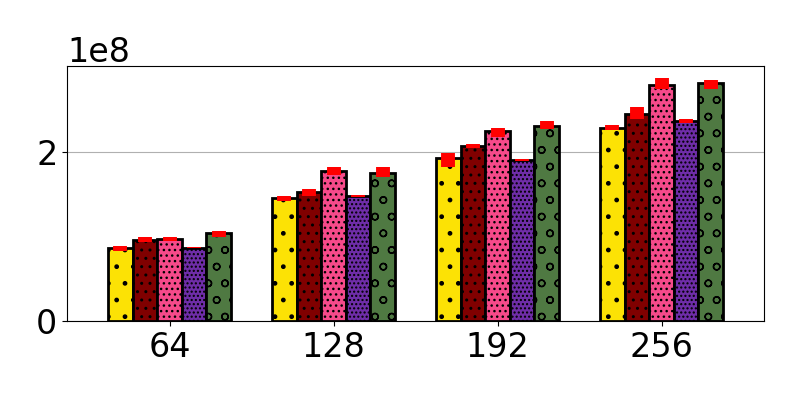}
\includegraphics[width=\plotwidth]{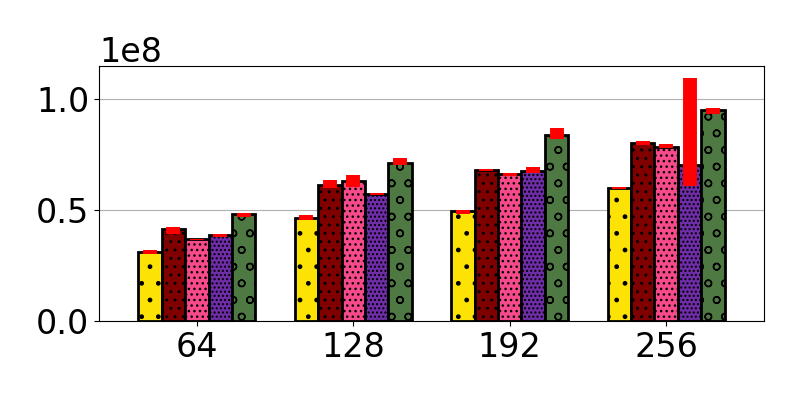}
\includegraphics[width=\legendwidth]{images/ppopp21/nasus_non_tm/bst_hc-legend.png}
\vspace{-4mm}
\caption{Comparing with \textbf{Handcrafted} \textbf{Unbalanced BSTs} on the \textbf{AMD} system. Operations per microsecond vs \# threads.}
\label{fig:nasus-non-tm-bst}
\end{figure*}

\begin{figure*}[H]
\newcommand{\plotwidth}{0.32\linewidth}
\newcommand{\legendwidth}{0.6\linewidth}
\begin{tabular}{p{\plotwidth}p{\plotwidth}p{\plotwidth}}
\centering\noindent\textbf{1\% updates} & \centering\noindent\textbf{10\% updates} & \centering\noindent\textbf{100\% updates}
\end{tabular}
\rotatebox{90}{\hspace{5mm}\textbf{10M keys}}
\includegraphics[width=\plotwidth]{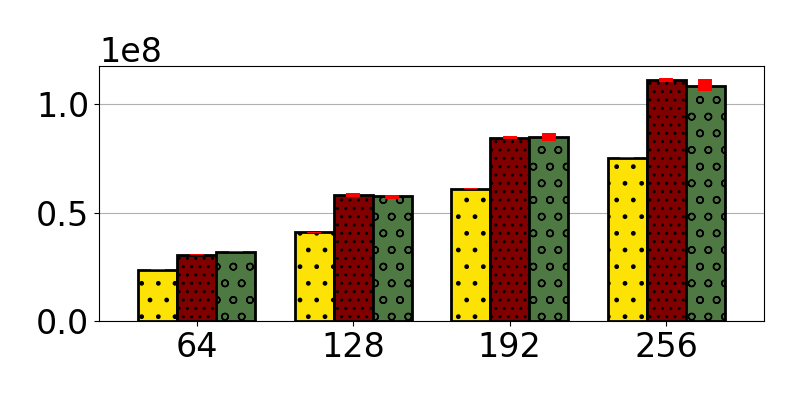}
\includegraphics[width=\plotwidth]{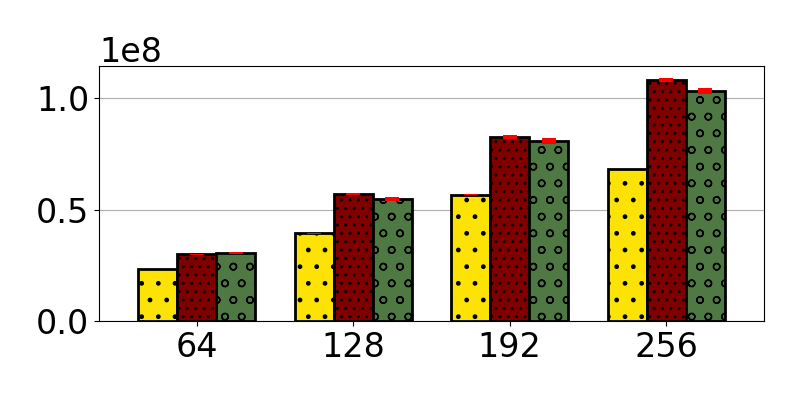}
\includegraphics[width=\plotwidth]{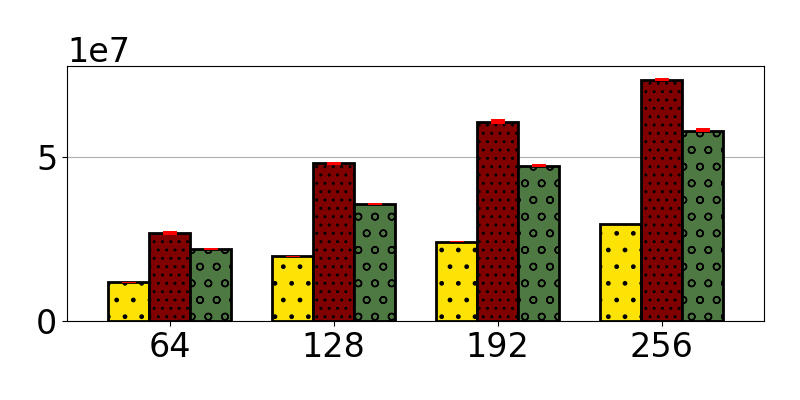}
\rotatebox{90}{\hspace{5mm}\textbf{1M keys}}
\includegraphics[width=\plotwidth]{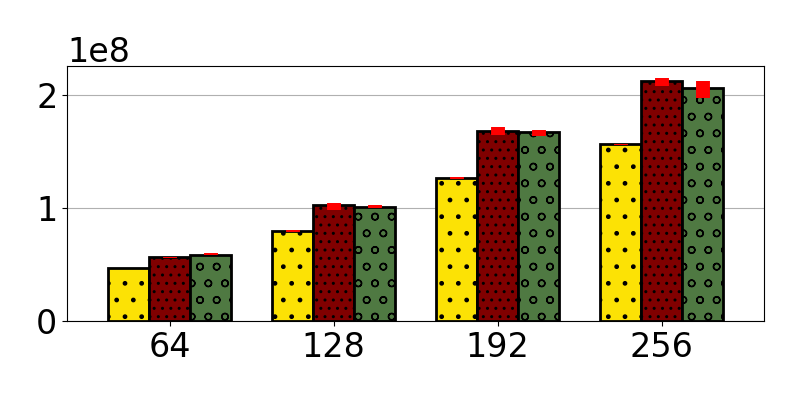}
\includegraphics[width=\plotwidth]{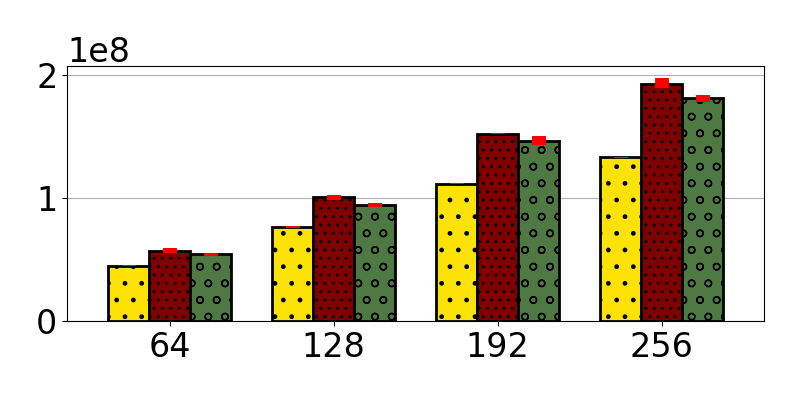}
\includegraphics[width=\plotwidth]{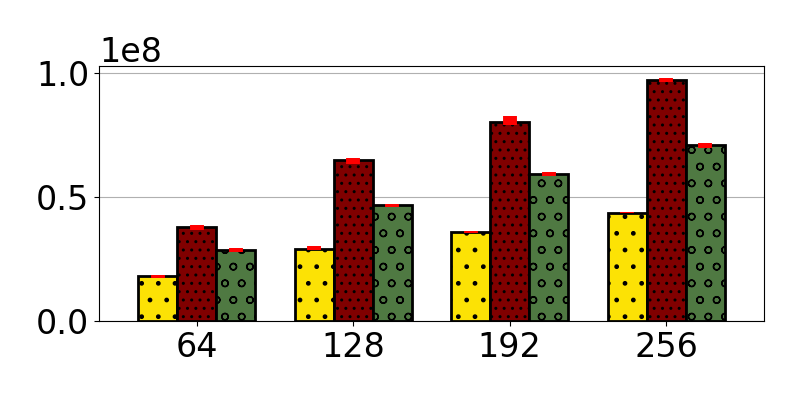}
\rotatebox{90}{\hspace{5mm}\textbf{100k keys}}
\includegraphics[width=\plotwidth]{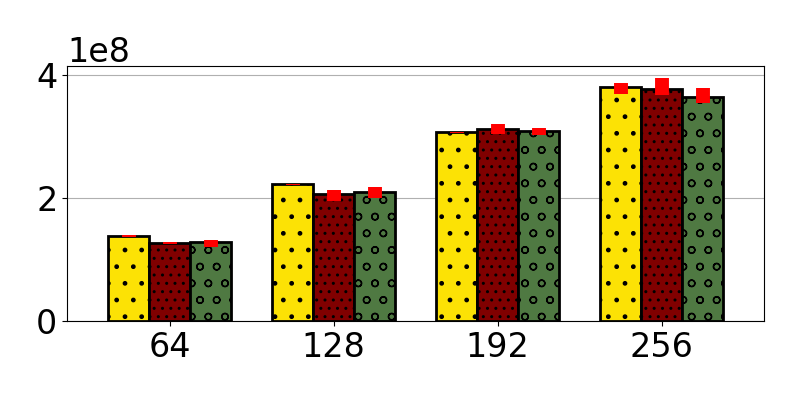}
\includegraphics[width=\plotwidth]{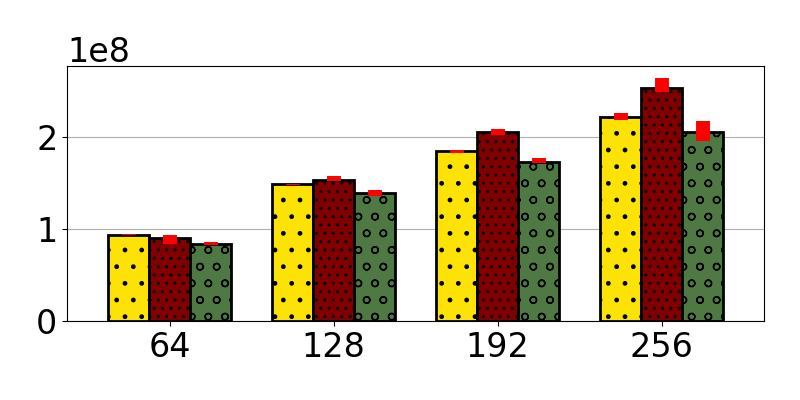}
\includegraphics[width=\plotwidth]{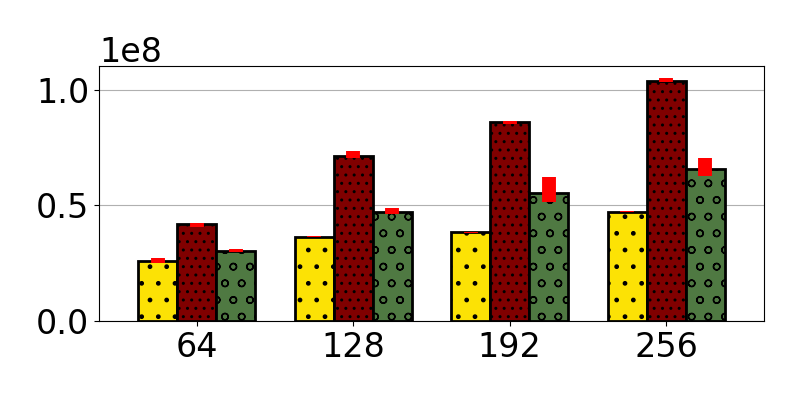}
\includegraphics[width=\legendwidth]{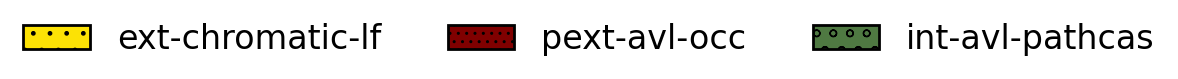}
\vspace{-4mm}
\caption{Comparing with \textbf{Handcrafted} \textbf{Balanced BSTs} on the \textbf{AMD} system. Operations per microsecond vs \# threads.}
\label{fig:nasus-non-tm-avl}
\end{figure*}

\begin{figure*}
\newcommand{\plotwidth}{0.32\linewidth}
\newcommand{\legendwidth}{0.5\linewidth}
\begin{tabular}{p{\plotwidth}p{\plotwidth}p{\plotwidth}}
\centering\noindent\textbf{10M keys, 10\% updates} & \centering\noindent\textbf{1M keys, 10\% updates} & \centering\noindent\textbf{100k keys, 10\% updates}
\end{tabular}
\rotatebox{90}{\hspace{1mm}\textbf{ops/microsecond}}
\includegraphics[width=\plotwidth]{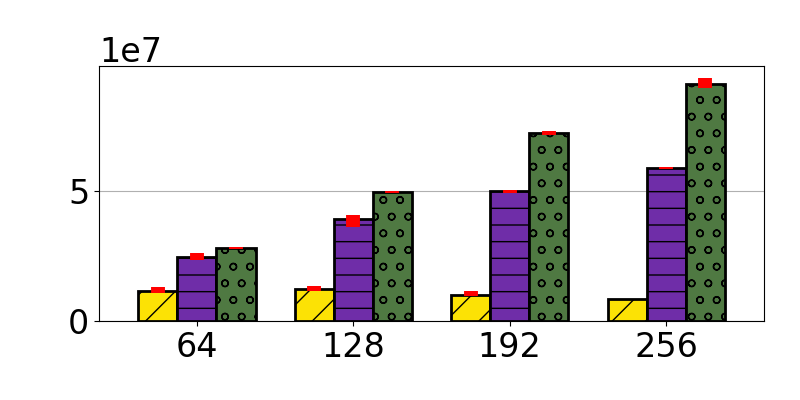}
\includegraphics[width=\plotwidth]{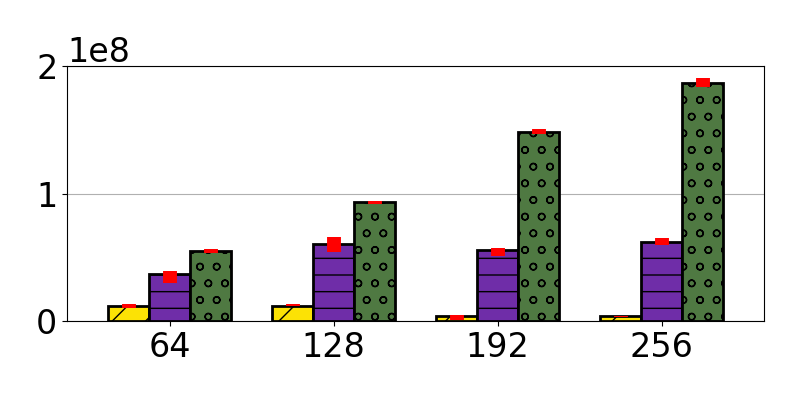}
\includegraphics[width=\plotwidth]{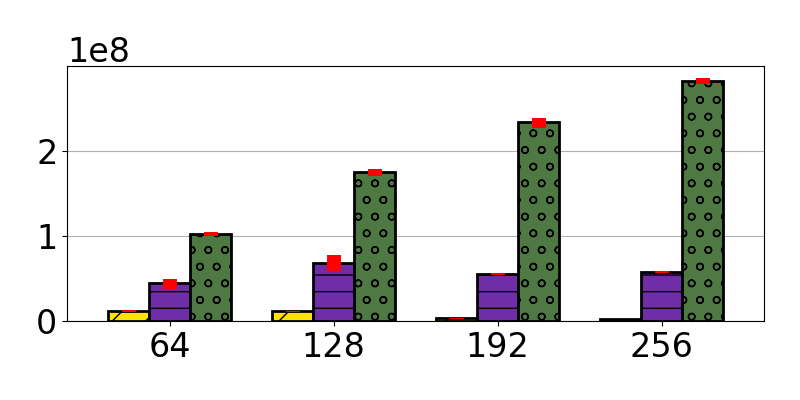}
\rotatebox{90}{\hspace{3mm}\textbf{abort rate (\%)}}
\includegraphics[width=\plotwidth]{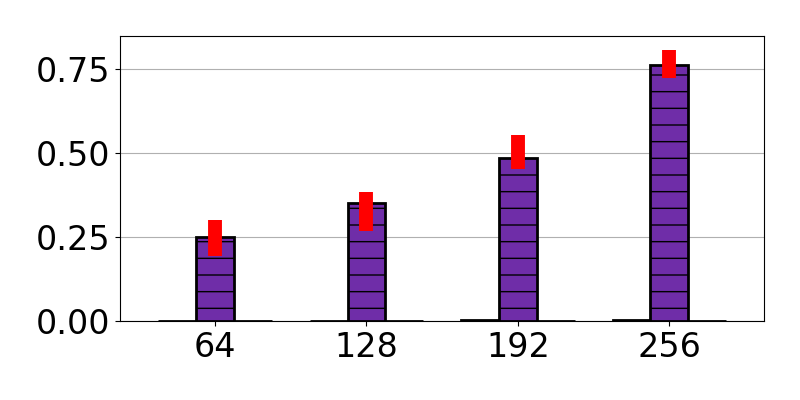}
\includegraphics[width=\plotwidth]{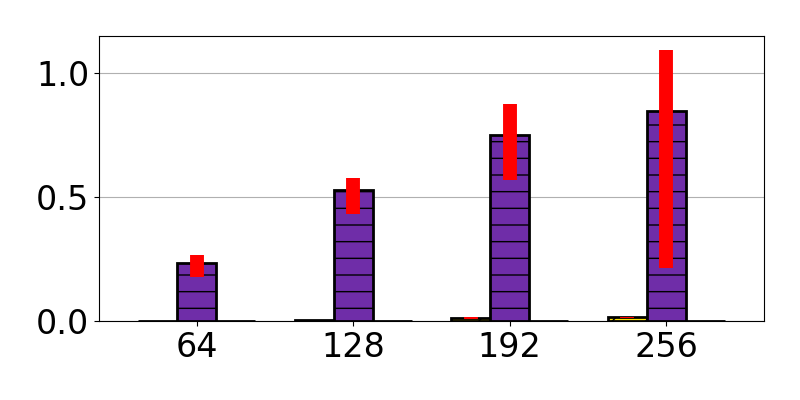}
\includegraphics[width=\plotwidth]{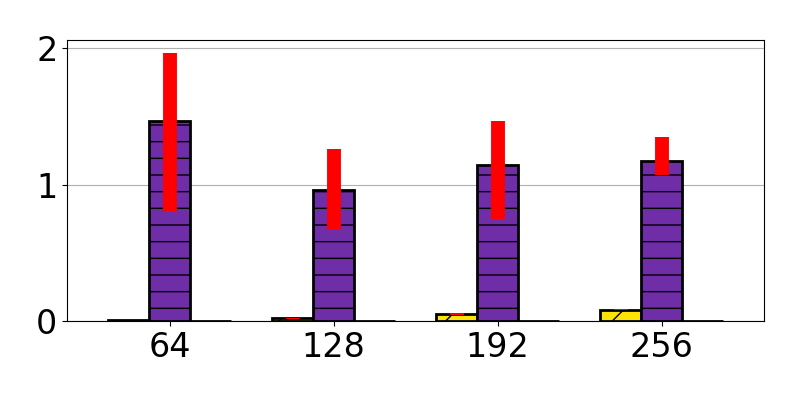}
\rotatebox{90}{\hspace{2mm}\textbf{time locked (s)}}
\includegraphics[width=\plotwidth]{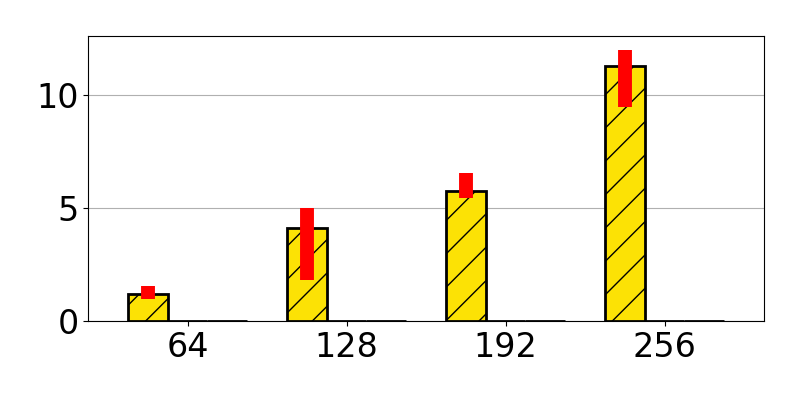}
\includegraphics[width=\plotwidth]{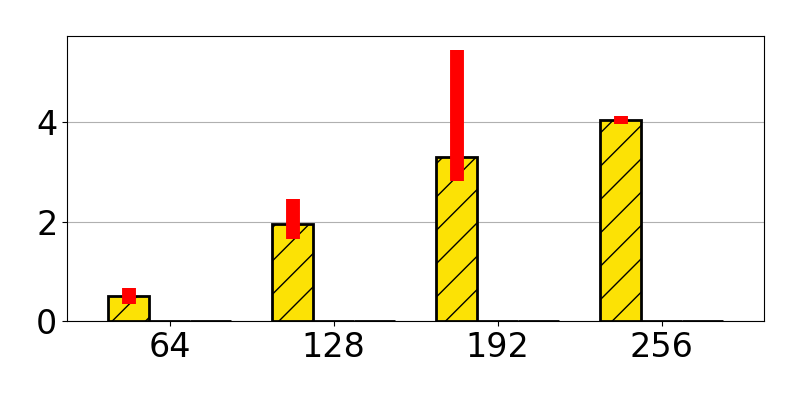}
\includegraphics[width=\plotwidth]{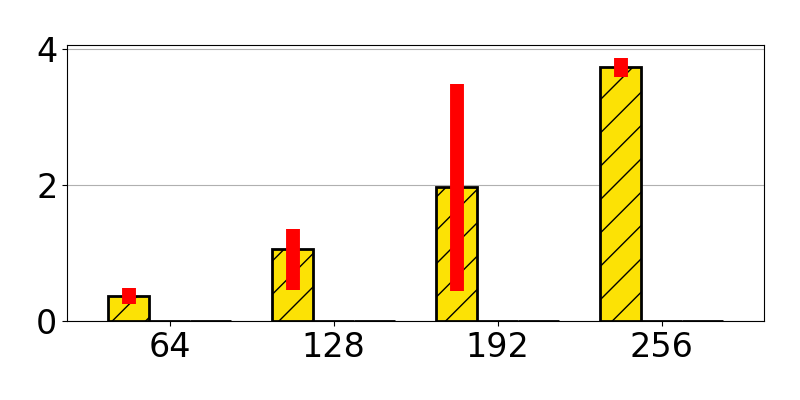}
\includegraphics[width=\legendwidth]{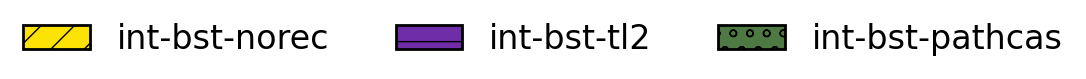}
\vspace{-4mm}
\caption{Comparing with \textbf{TM-based} \textbf{Unbalanced BSTs} on the \textbf{AMD} system. \textbf{Note varying y-axes} (x-axis = \# threads).}
\label{fig:nasus-tm-bst}
\end{figure*}

\begin{figure*}
\newcommand{\plotwidth}{0.32\linewidth}
\newcommand{\legendwidth}{0.5\linewidth}
\begin{tabular}{p{\plotwidth}p{\plotwidth}p{\plotwidth}}
\centering\noindent\textbf{10M keys, 10\% updates} & \centering\noindent\textbf{1M keys, 10\% updates} & \centering\noindent\textbf{100k keys, 10\% updates}
\end{tabular}
\rotatebox{90}{\hspace{1mm}\textbf{ops/microsecond}}
\includegraphics[width=\plotwidth]{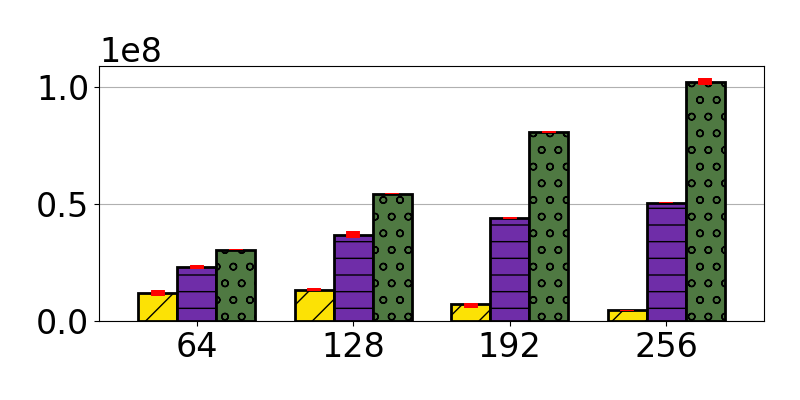}
\includegraphics[width=\plotwidth]{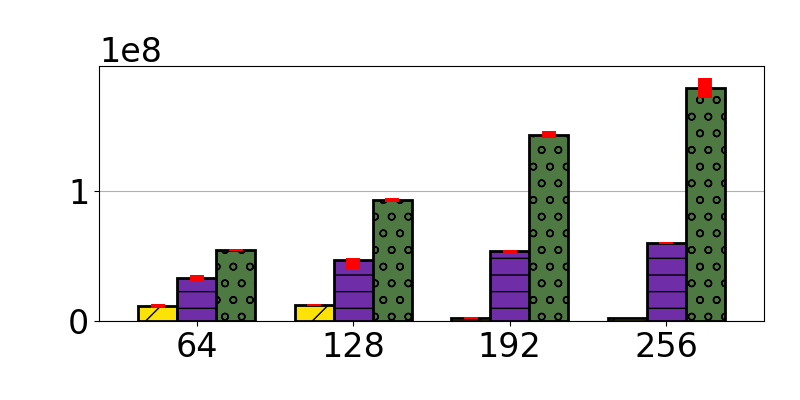}
\includegraphics[width=\plotwidth]{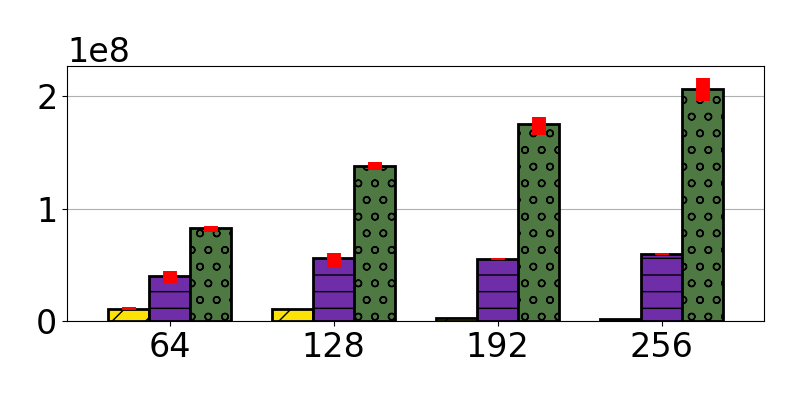}
\rotatebox{90}{\hspace{3mm}\textbf{abort rate (\%)}}
\includegraphics[width=\plotwidth]{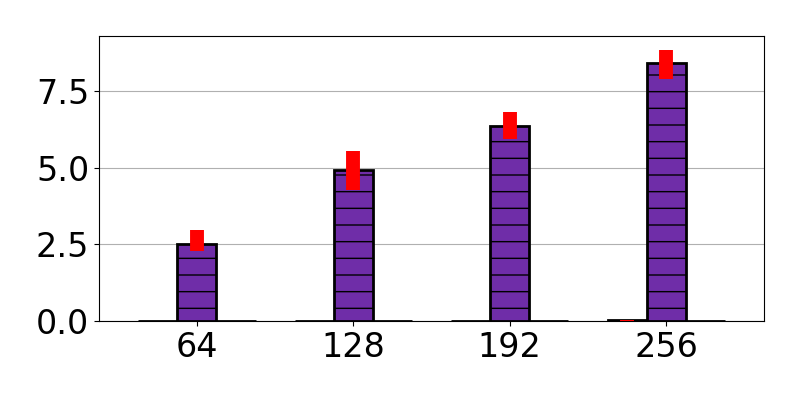}
\includegraphics[width=\plotwidth]{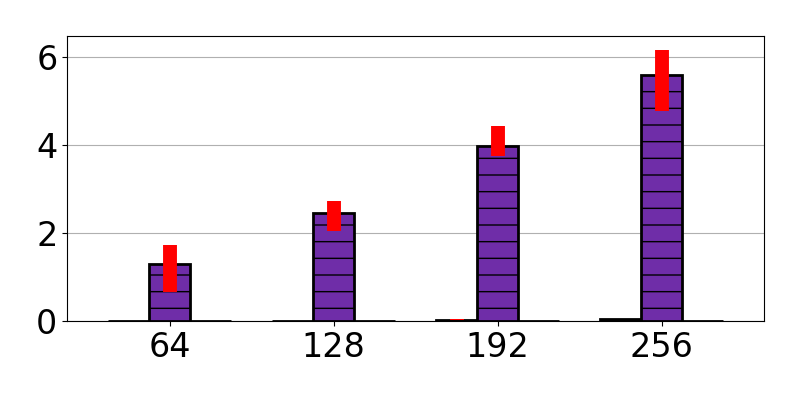}
\includegraphics[width=\plotwidth]{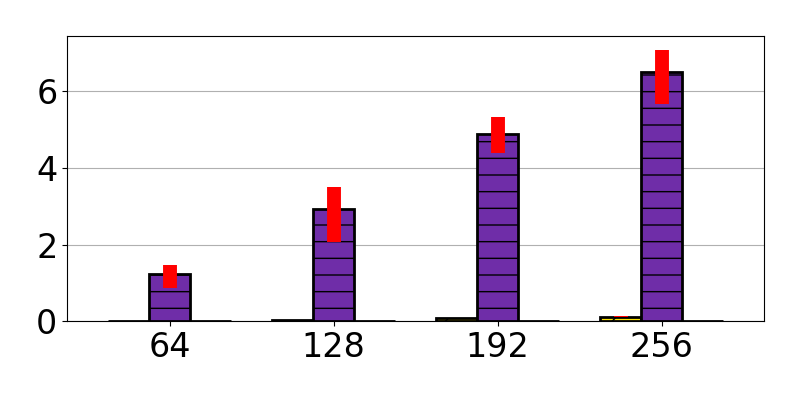}
\rotatebox{90}{\hspace{2mm}\textbf{time locked (s)}}
\includegraphics[width=\plotwidth]{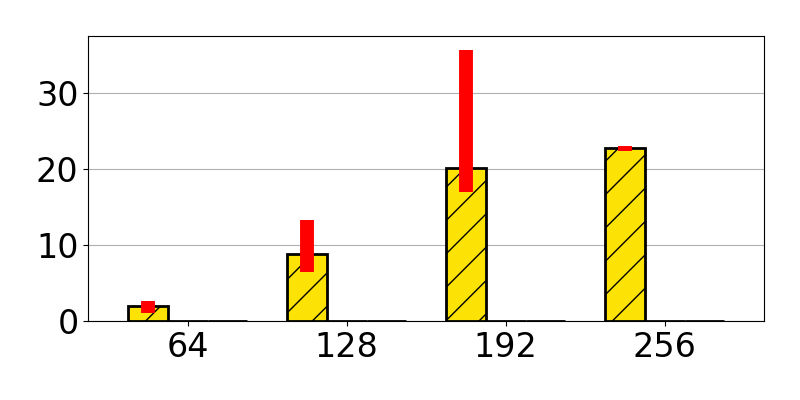}
\includegraphics[width=\plotwidth]{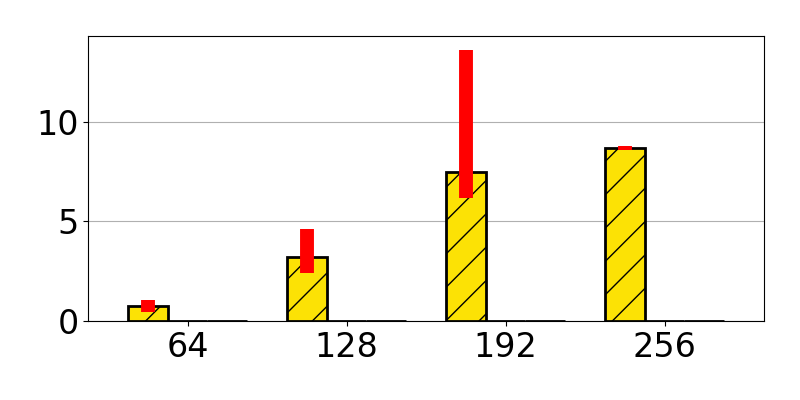}
\includegraphics[width=\plotwidth]{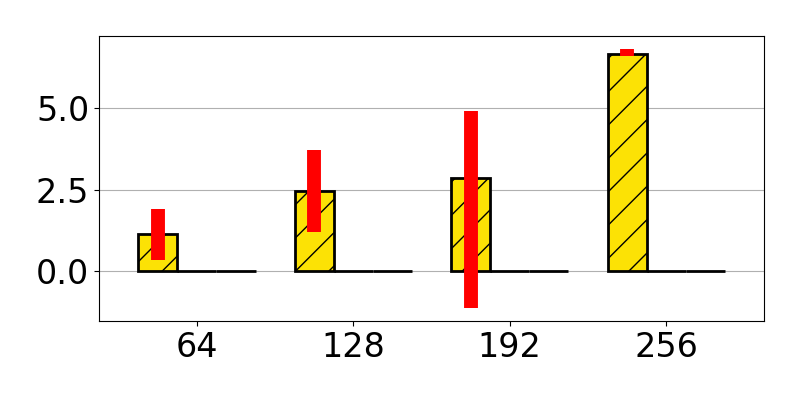}
\includegraphics[width=\legendwidth]{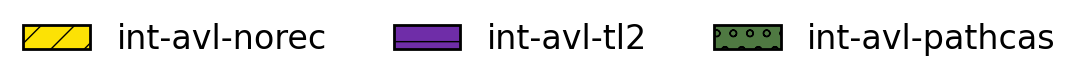}
\vspace{-4mm}
\caption{Comparing with \textbf{TM-based} \textbf{Balanced BSTs} on the \textbf{AMD} system. \textbf{Note varying y-axes} (x-axis = \# threads).}
\label{fig:nasus-tm-avl}
\end{figure*}

\begin{figure*}
\newcommand{\plotwidth}{0.32\linewidth}
\newcommand{\legendwidth}{0.5\linewidth}
\begin{tabular}{p{\plotwidth}p{\plotwidth}p{\plotwidth}}
\centering\noindent\textbf{1\% updates} & \centering\noindent\textbf{10\% updates} & \centering\noindent\textbf{100\% updates}
\end{tabular}
\rotatebox{90}{\hspace{5mm}\textbf{10M keys}}
\includegraphics[width=\plotwidth]{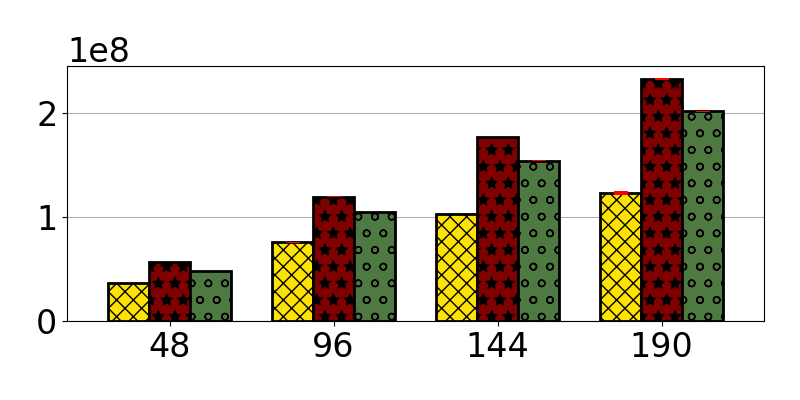}
\includegraphics[width=\plotwidth]{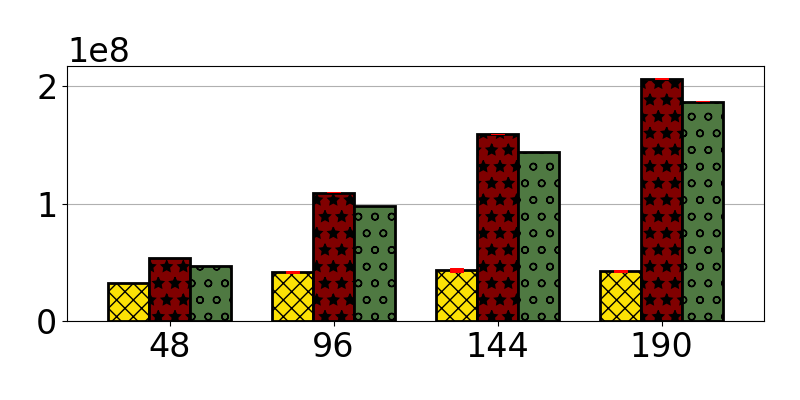}
\includegraphics[width=\plotwidth]{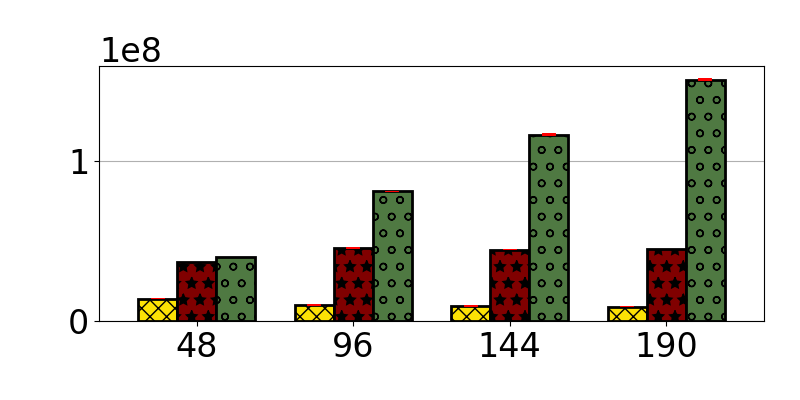}
\rotatebox{90}{\hspace{5mm}\textbf{1M keys}}
\includegraphics[width=\plotwidth]{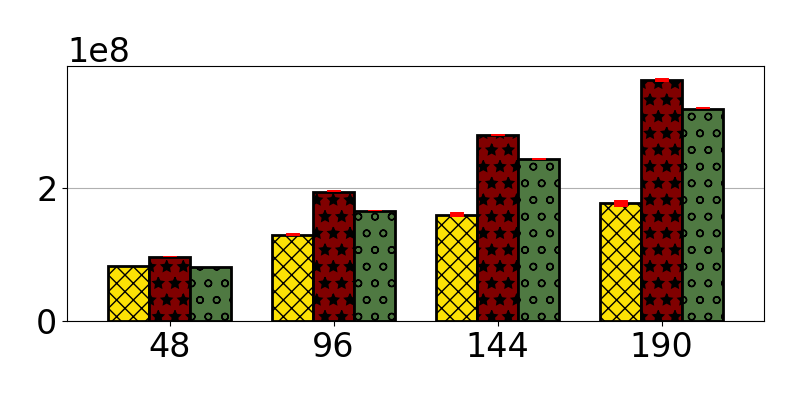}
\includegraphics[width=\plotwidth]{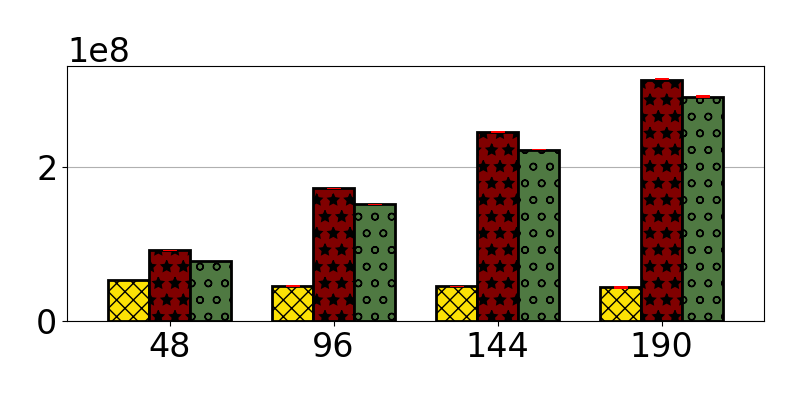}
\includegraphics[width=\plotwidth]{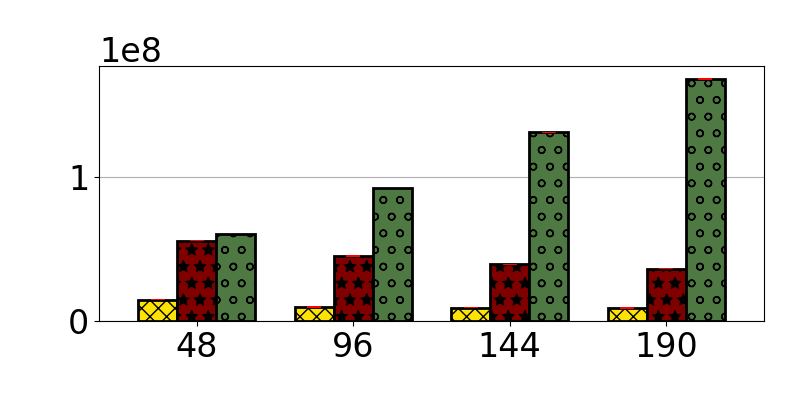}
\rotatebox{90}{\hspace{5mm}\textbf{100k keys}}
\includegraphics[width=\plotwidth]{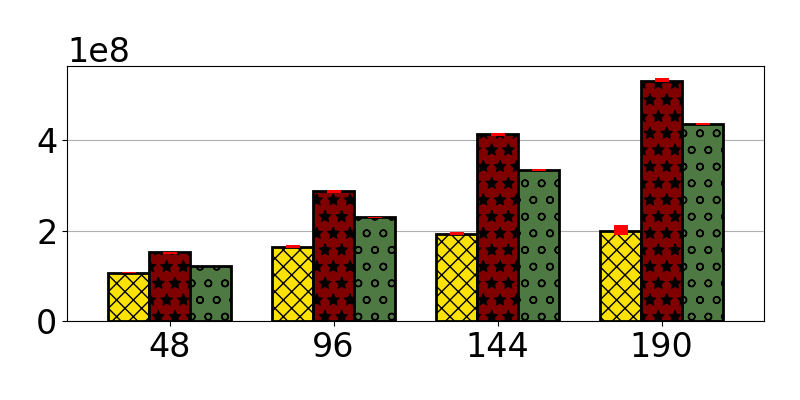}
\includegraphics[width=\plotwidth]{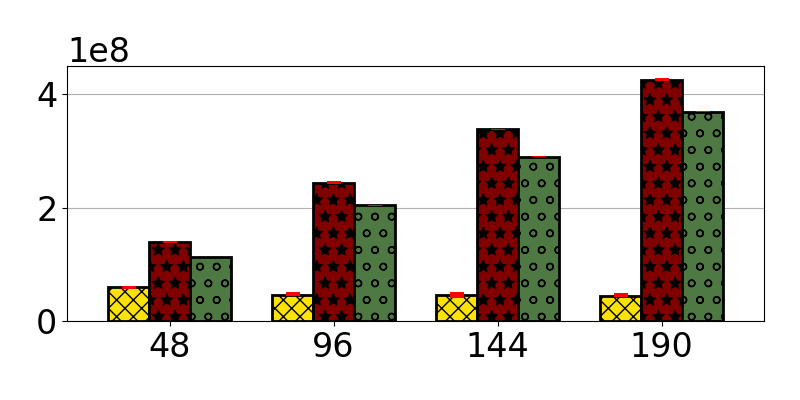}
\includegraphics[width=\plotwidth]{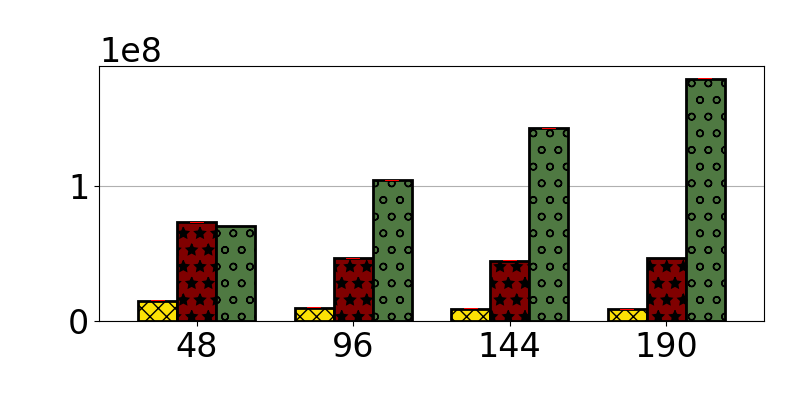}
\includegraphics[width=\legendwidth]{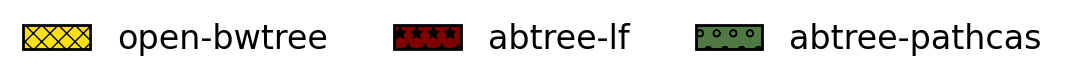}
\vspace{-4mm}
\caption{Comparing with \textbf{Handcrafted} \textbf{B-tree variants} on the \textbf{Intel} system. Operations per microsecond vs \# threads.}
\label{fig:jax-non-tm-abtree}
\end{figure*}

\begin{figure*}
\newcommand{\plotwidth}{0.32\linewidth}
\newcommand{\legendwidth}{0.8\linewidth}
\begin{tabular}{p{\plotwidth}p{\plotwidth}p{\plotwidth}}
\centering\noindent\textbf{1\% updates} & \centering\noindent\textbf{10\% updates} & \centering\noindent\textbf{100\% updates}
\end{tabular}
\rotatebox{90}{\hspace{5mm}\textbf{10M keys}}
\includegraphics[width=\plotwidth]{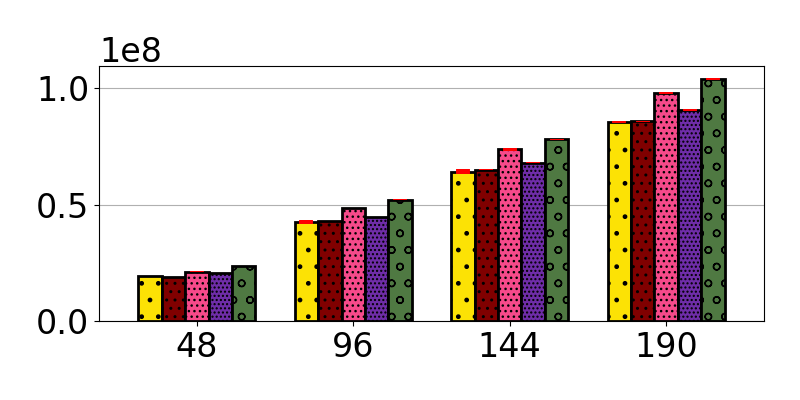}
\includegraphics[width=\plotwidth]{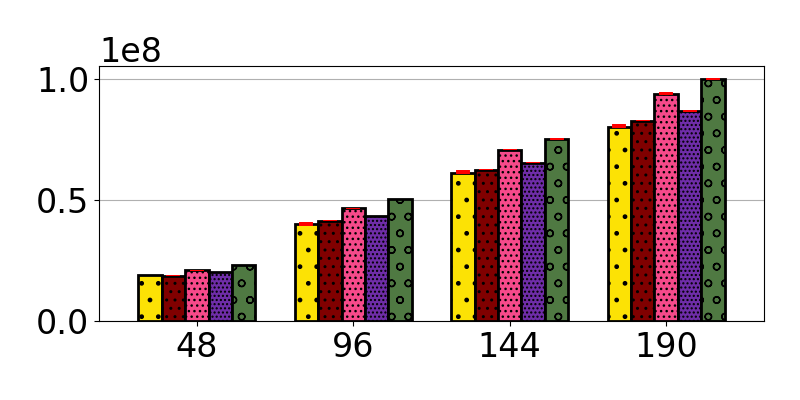}
\includegraphics[width=\plotwidth]{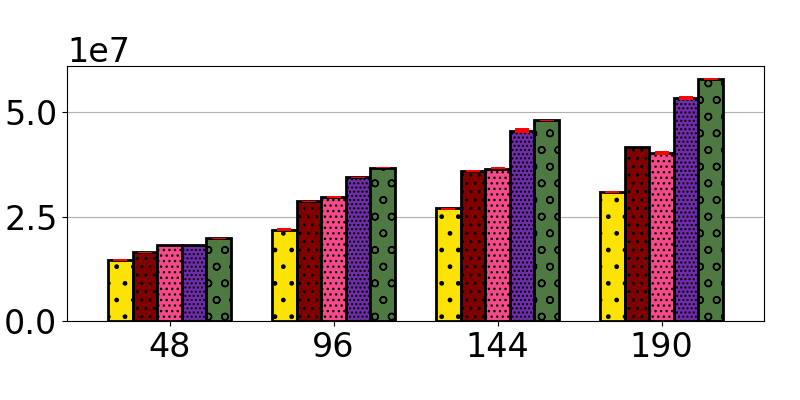}
\rotatebox{90}{\hspace{5mm}\textbf{1M keys}}
\includegraphics[width=\plotwidth]{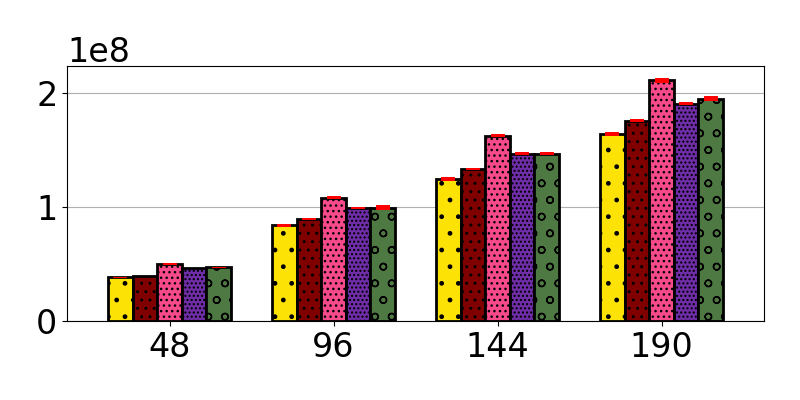}
\includegraphics[width=\plotwidth]{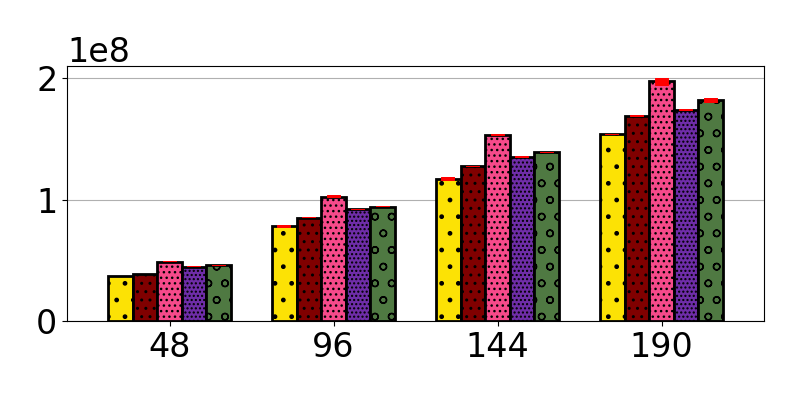}
\includegraphics[width=\plotwidth]{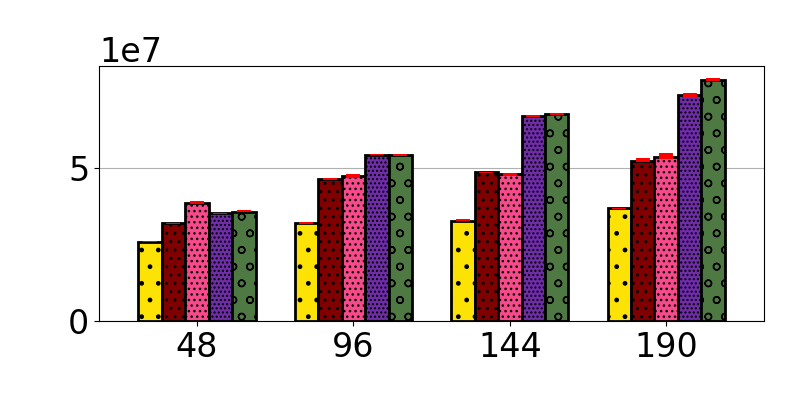}
\rotatebox{90}{\hspace{5mm}\textbf{100k keys}}
\includegraphics[width=\plotwidth]{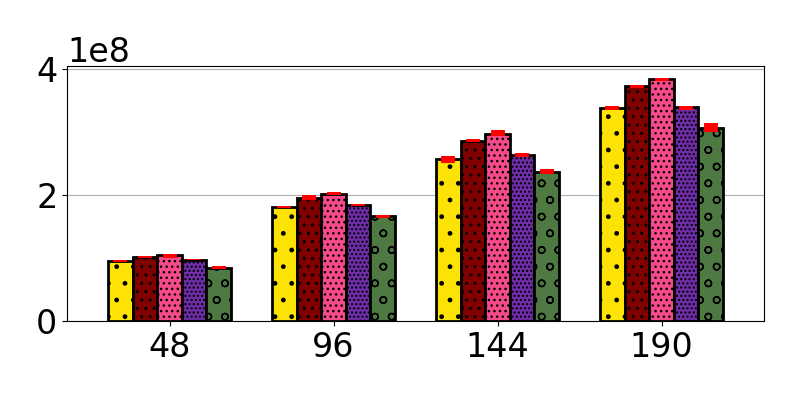}
\includegraphics[width=\plotwidth]{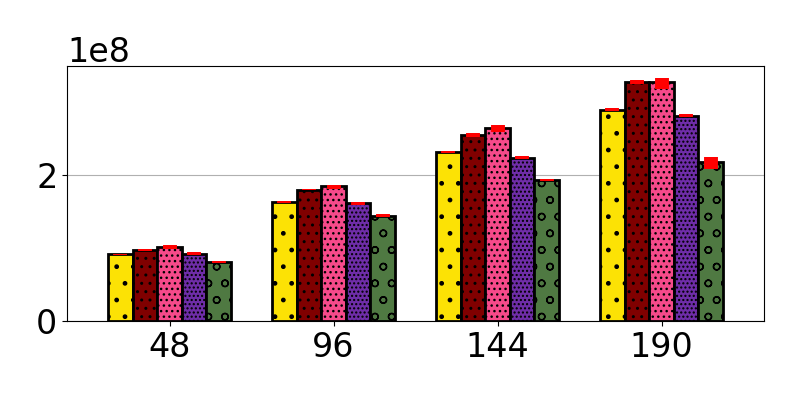}
\includegraphics[width=\plotwidth]{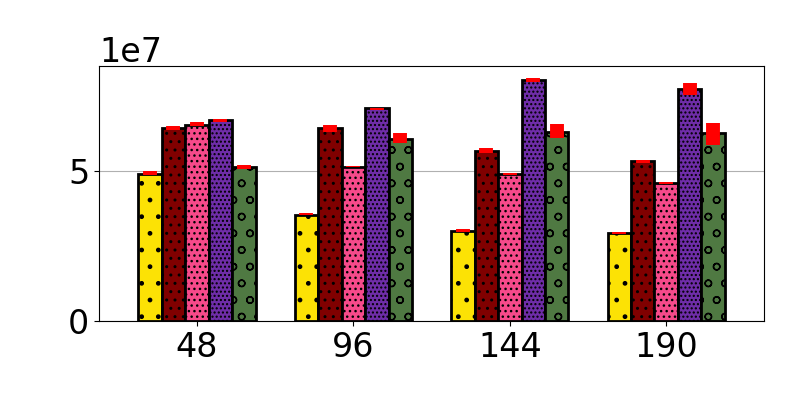}
\includegraphics[width=\legendwidth]{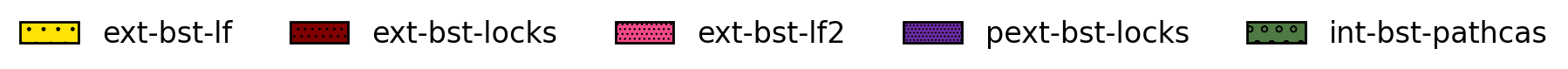}
\vspace{-4mm}
\caption{Comparing with \textbf{Handcrafted} \textbf{Unbalanced BSTs} on the \textbf{Intel} system. Operations per microsecond vs \# threads.}
\label{fig:jax-non-tm-bst}
\end{figure*}

\begin{figure*}
\newcommand{\plotwidth}{0.32\linewidth}
\newcommand{\legendwidth}{0.6\linewidth}
\begin{tabular}{p{\plotwidth}p{\plotwidth}p{\plotwidth}}
\centering\noindent\textbf{1\% updates} & \centering\noindent\textbf{10\% updates} & \centering\noindent\textbf{100\% updates}
\end{tabular}
\rotatebox{90}{\hspace{5mm}\textbf{10M keys}}
\includegraphics[width=\plotwidth]{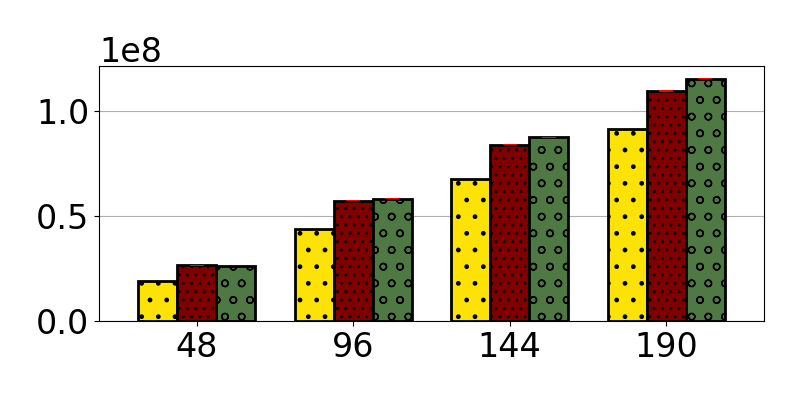}
\includegraphics[width=\plotwidth]{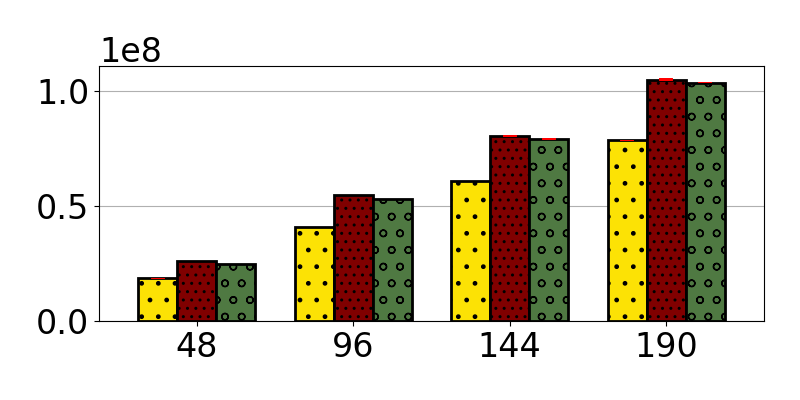}
\includegraphics[width=\plotwidth]{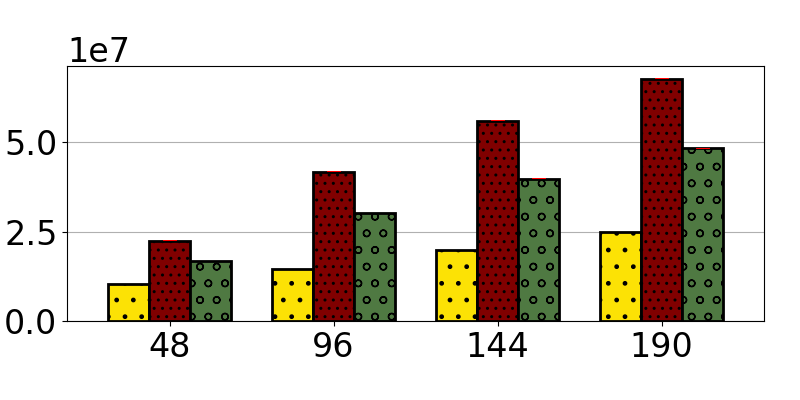}
\rotatebox{90}{\hspace{5mm}\textbf{1M keys}}
\includegraphics[width=\plotwidth]{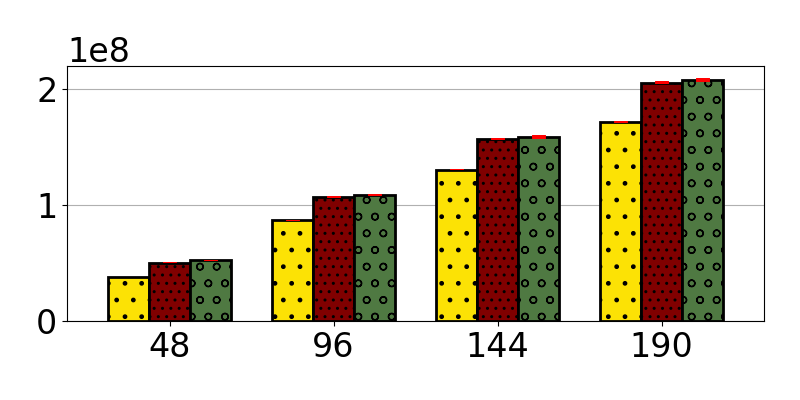}
\includegraphics[width=\plotwidth]{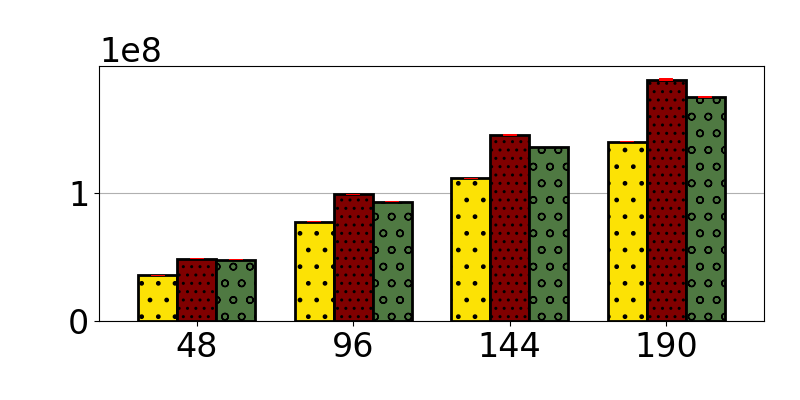}
\includegraphics[width=\plotwidth]{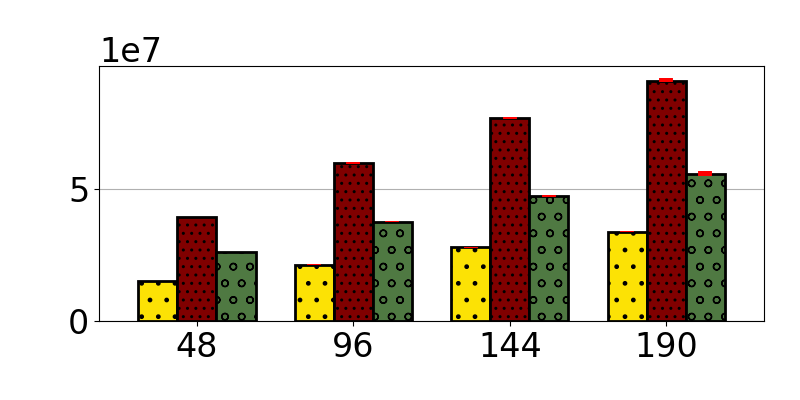}
\rotatebox{90}{\hspace{5mm}\textbf{100k keys}}
\includegraphics[width=\plotwidth]{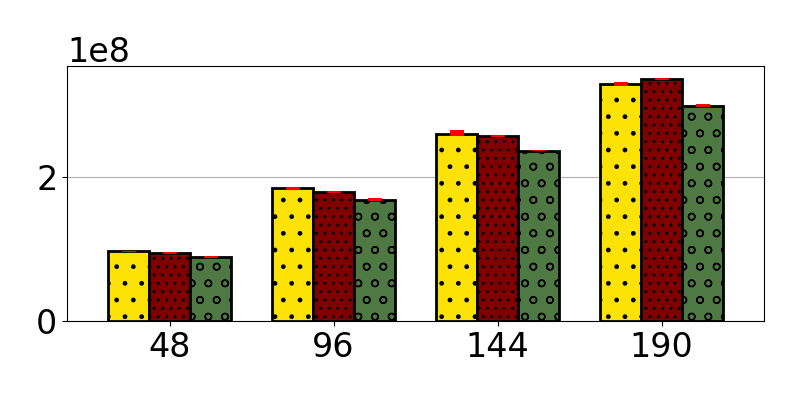}
\includegraphics[width=\plotwidth]{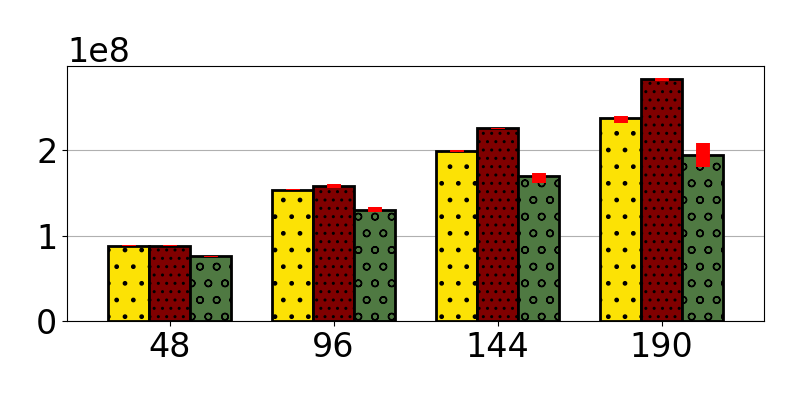}
\includegraphics[width=\plotwidth]{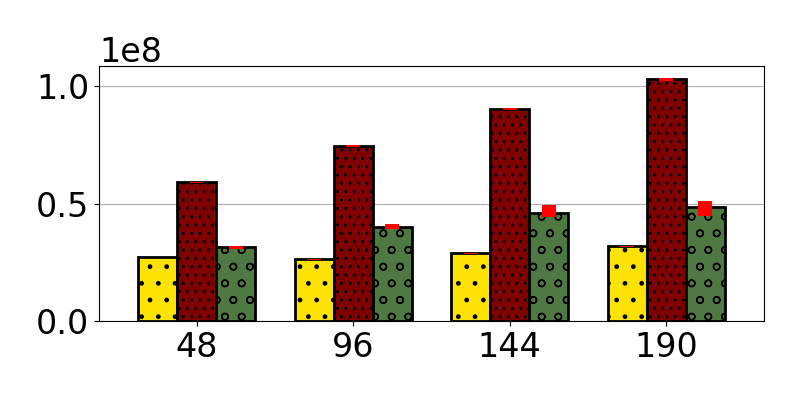}
\includegraphics[width=\legendwidth]{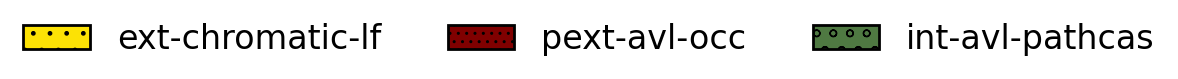}
\vspace{-4mm}
\caption{Comparing with \textbf{Handcrafted} \textbf{Balanced BSTs} on the \textbf{Intel} system. Operations per microsecond vs \# threads.}
\label{fig:jax-non-tm-avl}
\end{figure*}

\begin{figure*}
\newcommand{\plotwidth}{0.32\linewidth}
\newcommand{\legendwidth}{0.6\linewidth}
\begin{tabular}{p{\plotwidth}p{\plotwidth}p{\plotwidth}}
\centering\noindent\textbf{1\% updates} & \centering\noindent\textbf{10\% updates} & \centering\noindent\textbf{100\% updates}
\end{tabular}
\rotatebox{90}{\hspace{5mm}\textbf{10M keys}}
\includegraphics[width=\plotwidth]{images/ppopp21/nasus_non_tm/avl_hc_total_throughput-u0.5_0.5-k20000000.png}
\includegraphics[width=\plotwidth]{images/ppopp21/nasus_non_tm/avl_hc_total_throughput-u5.0_5.0-k20000000.png}
\includegraphics[width=\plotwidth]{images/ppopp21/nasus_non_tm/avl_hc_total_throughput-u50.0_50.0-k20000000.png}
\rotatebox{90}{\hspace{5mm}\textbf{1M keys}}
\includegraphics[width=\plotwidth]{images/ppopp21/nasus_non_tm/avl_hc_total_throughput-u0.5_0.5-k2000000.png}
\includegraphics[width=\plotwidth]{images/ppopp21/nasus_non_tm/avl_hc_total_throughput-u5.0_5.0-k2000000.png}
\includegraphics[width=\plotwidth]{images/ppopp21/nasus_non_tm/avl_hc_total_throughput-u50.0_50.0-k2000000.png}
\rotatebox{90}{\hspace{5mm}\textbf{100k keys}}
\includegraphics[width=\plotwidth]{images/ppopp21/nasus_non_tm/avl_hc_total_throughput-u0.5_0.5-k200000.png}
\includegraphics[width=\plotwidth]{images/ppopp21/nasus_non_tm/avl_hc_total_throughput-u5.0_5.0-k200000.png}
\includegraphics[width=\plotwidth]{images/ppopp21/nasus_non_tm/avl_hc_total_throughput-u50.0_50.0-k200000.png}
\includegraphics[width=\legendwidth]{images/ppopp21/nasus_non_tm/avl_hc-legend.png}
\vspace{-4mm}
\caption{Comparing with \textbf{Handcrafted} \textbf{Balanced BSTs} on the \textbf{AMD} system. Operations per microsecond vs \# threads.}
\label{fig:nasus-non-tm-avl}
\end{figure*}

\begin{figure*}
\newcommand{\plotwidth}{0.32\linewidth}
\newcommand{\legendwidth}{\linewidth}
\begin{tabular}{p{\plotwidth}p{\plotwidth}p{\plotwidth}}
\centering\noindent\textbf{10M keys, 10\% updates} & \centering\noindent\textbf{1M keys, 10\% updates} & \centering\noindent\textbf{100k keys, 10\% updates}
\end{tabular}
\rotatebox{90}{\hspace{1mm}\textbf{ops/microsecond}}
\includegraphics[width=\plotwidth]{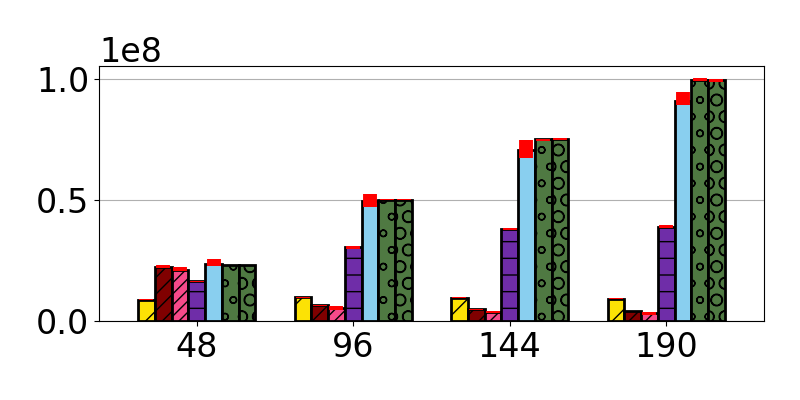}
\includegraphics[width=\plotwidth]{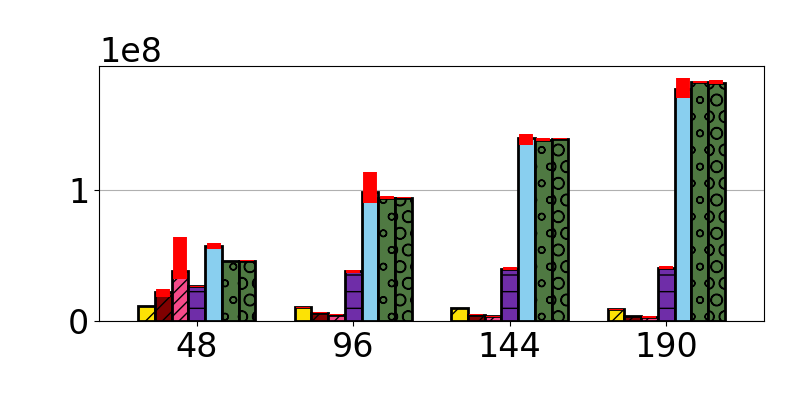}
\includegraphics[width=\plotwidth]{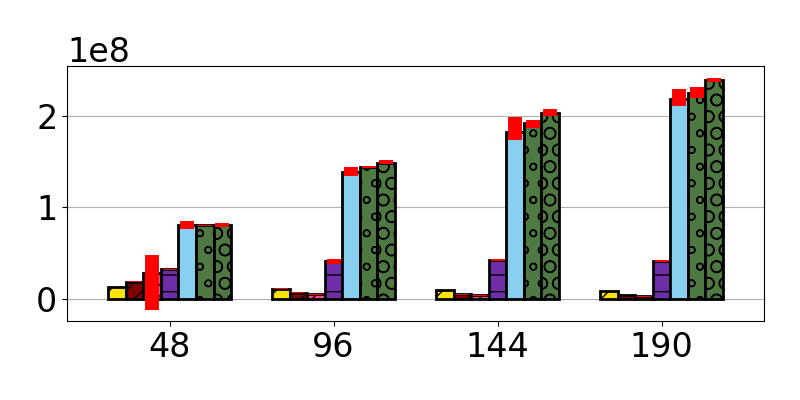}
\rotatebox{90}{\hspace{3mm}\textbf{abort rate (\%)}}
\includegraphics[width=\plotwidth]{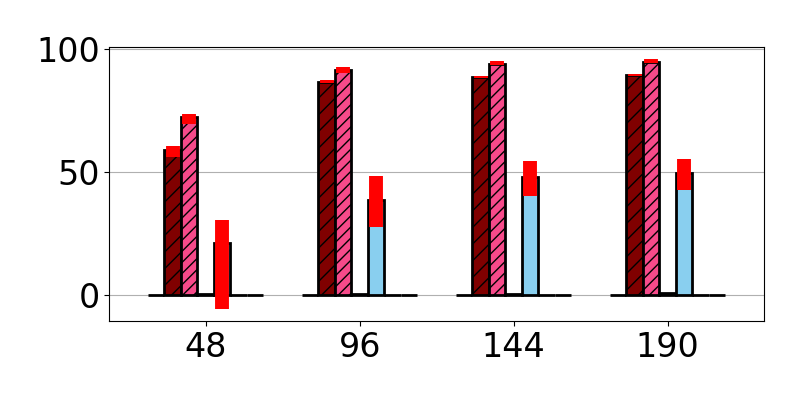}
\includegraphics[width=\plotwidth]{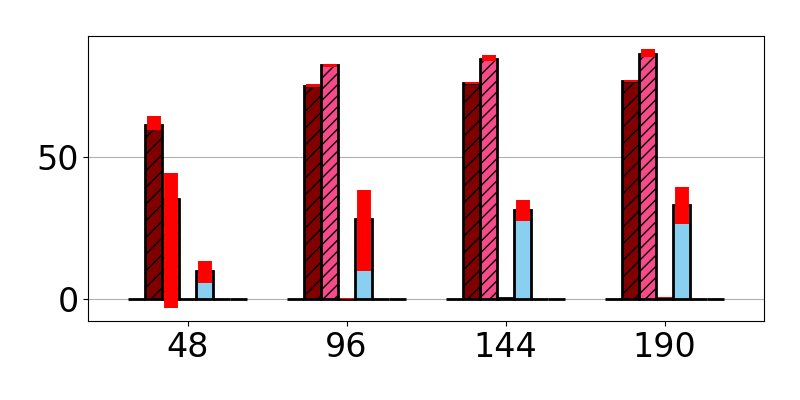}
\includegraphics[width=\plotwidth]{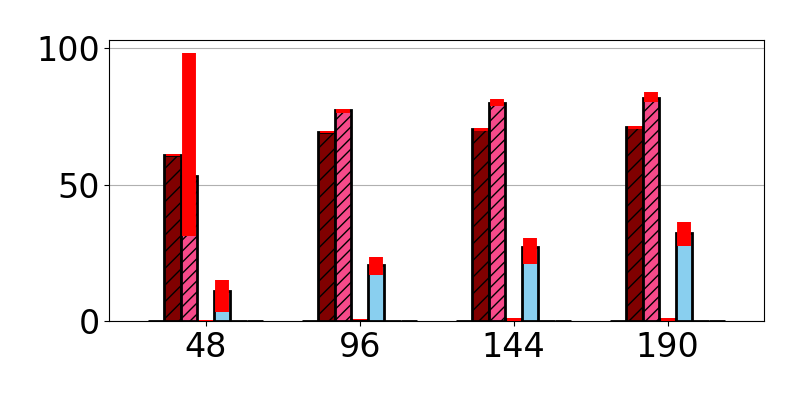}
\rotatebox{90}{\hspace{2mm}\textbf{time locked (s)}}
\includegraphics[width=\plotwidth]{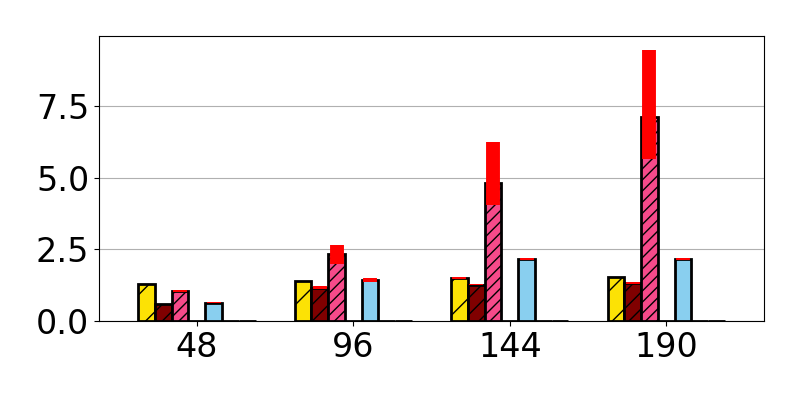}
\includegraphics[width=\plotwidth]{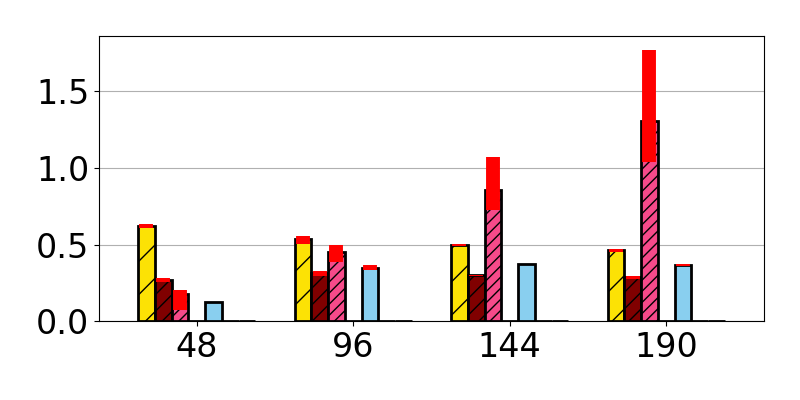}
\includegraphics[width=\plotwidth]{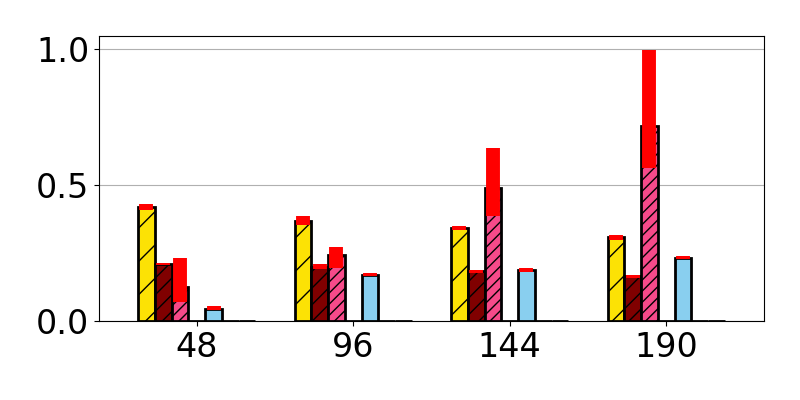}
\includegraphics[width=\legendwidth]{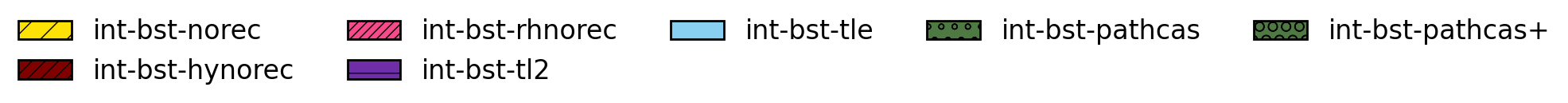}
\vspace{-6mm}
\caption{Comparing with \textbf{TM-based} \textbf{Unbalanced BSTs} on the \textbf{Intel} system. \textbf{Note varying y-axes} (x-axis = \# threads).}
\label{fig:jax-tm-bst}
\end{figure*}

\begin{figure*}
\newcommand{\plotwidth}{0.32\linewidth}
\newcommand{\legendwidth}{\linewidth}
\begin{tabular}{p{\plotwidth}p{\plotwidth}p{\plotwidth}}
\centering\noindent\textbf{10M keys, 10\% updates} & \centering\noindent\textbf{1M keys, 10\% updates} & \centering\noindent\textbf{100k keys, 10\% updates}
\end{tabular}
\rotatebox{90}{\hspace{1mm}\textbf{ops/microsecond}}
\includegraphics[width=\plotwidth]{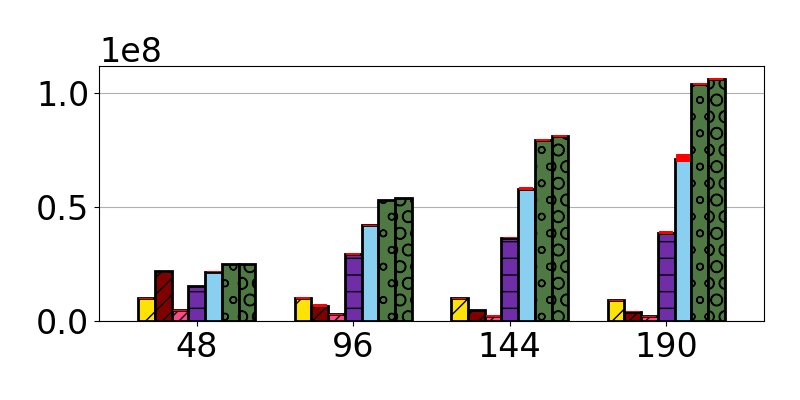}
\includegraphics[width=\plotwidth]{images/ppopp21/jax_tm/avl_tm_total_throughput-u5.0_5.0-k2000000.png}
\includegraphics[width=\plotwidth]{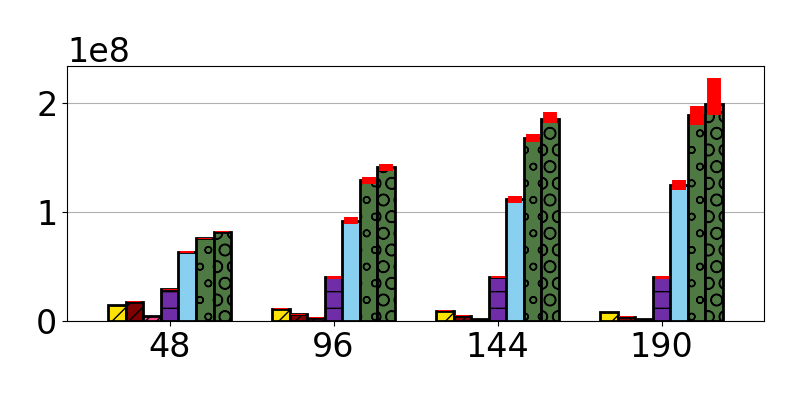}
\rotatebox{90}{\hspace{3mm}\textbf{abort rate (\%)}}
\includegraphics[width=\plotwidth]{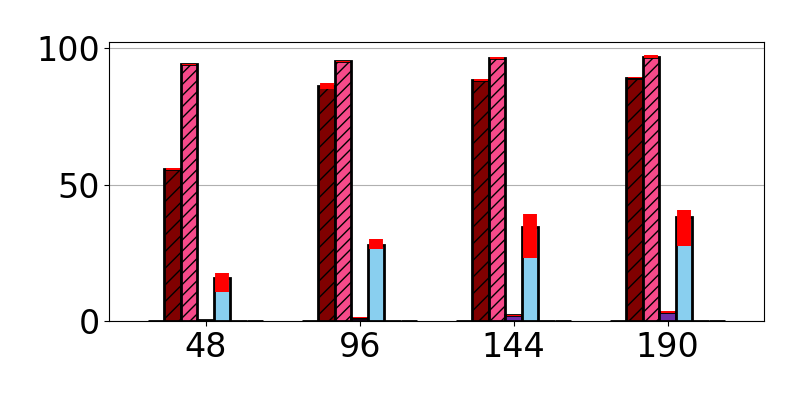}
\includegraphics[width=\plotwidth]{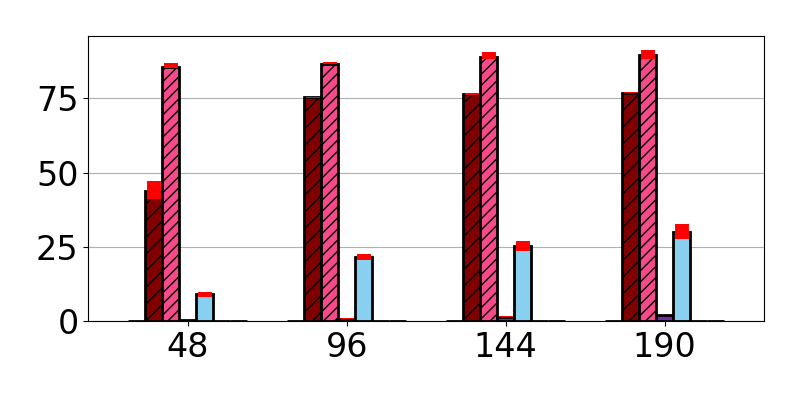}
\includegraphics[width=\plotwidth]{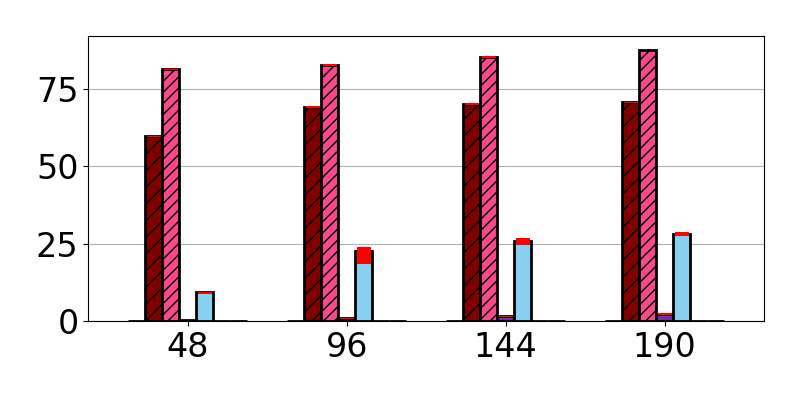}
\rotatebox{90}{\hspace{2mm}\textbf{time locked (s)}}
\includegraphics[width=\plotwidth]{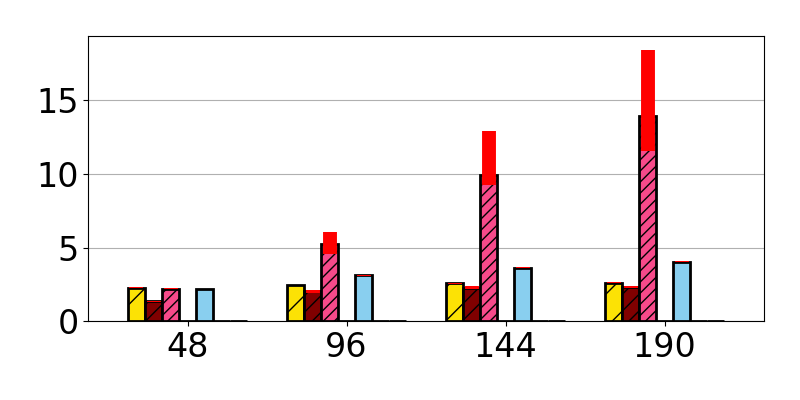}
\includegraphics[width=\plotwidth]{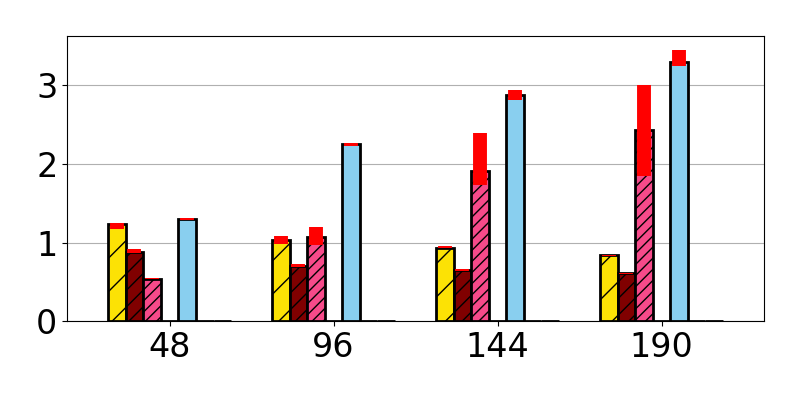}
\includegraphics[width=\plotwidth]{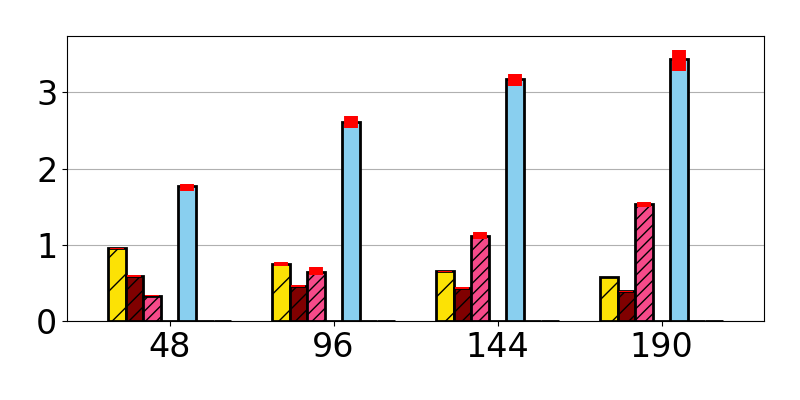}
\includegraphics[width=\legendwidth]{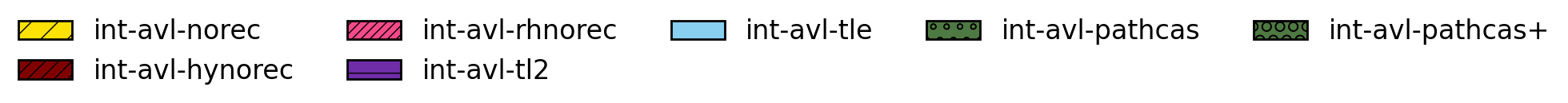}
\vspace{-6mm}
\caption{Comparing with \textbf{TM-based} \textbf{Balanced BSTs} on the \textbf{Intel} system. \textbf{Note varying y-axes} (x-axis = \# threads).}
\label{fig:jax-tm-avl}
\end{figure*}

\begin{figure*}
\newcommand{\plotwidth}{0.32\linewidth}
\newcommand{\legendwidth}{0.5\linewidth}
\begin{tabular}{p{\plotwidth}p{\plotwidth}p{\plotwidth}}
\centering\noindent\textbf{1\% updates, 1M keys} & \centering\noindent\textbf{10\% updates, 1M keys} & \centering\noindent\textbf{100\% updates, 1M keys}
\end{tabular}
\rotatebox{90}{\hspace{1mm}\textbf{ops/microsecond}}
\includegraphics[width=\plotwidth]{images/ppopp21/jax_non_tm/bst_hc_total_throughput-u0.5_0.5-k2000000.png}
\includegraphics[width=\plotwidth]{images/ppopp21/jax_non_tm/bst_hc_total_throughput-u5.0_5.0-k2000000.png}
\includegraphics[width=\plotwidth]{images/ppopp21/jax_non_tm/bst_hc_total_throughput-u50.0_50.0-k2000000.png}
\rotatebox{90}{\hspace{3mm}\textbf{L2 miss/op}}
\includegraphics[width=\plotwidth]{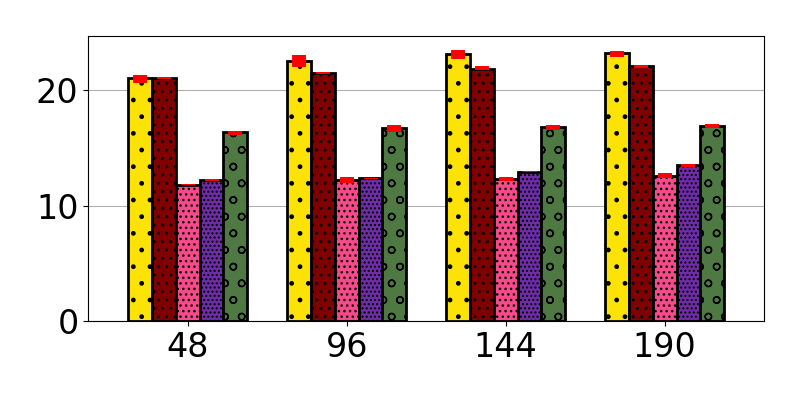}
\includegraphics[width=\plotwidth]{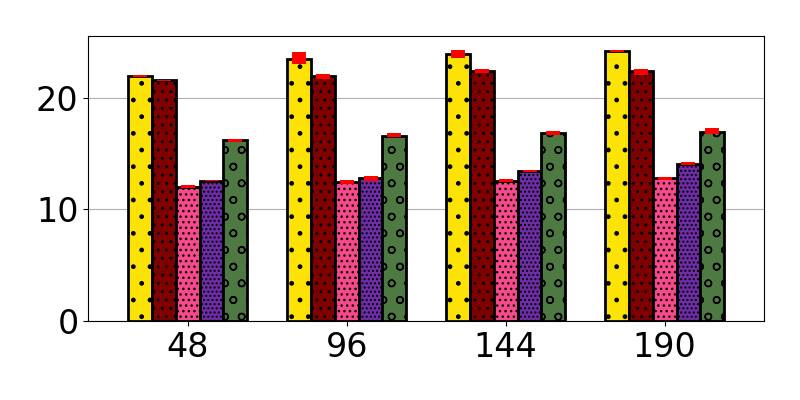}
\includegraphics[width=\plotwidth]{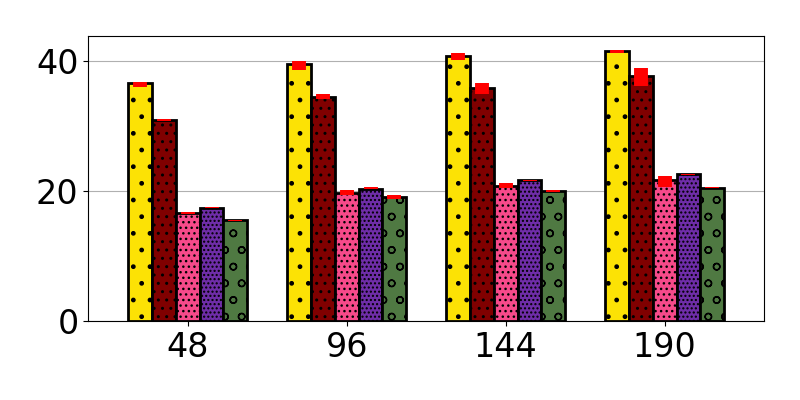}
\rotatebox{90}{\hspace{3mm}\textbf{L3 miss/op}}
\includegraphics[width=\plotwidth]{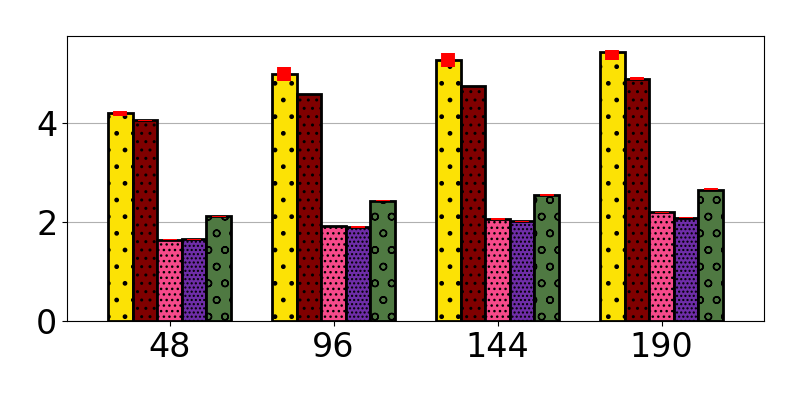}
\includegraphics[width=\plotwidth]{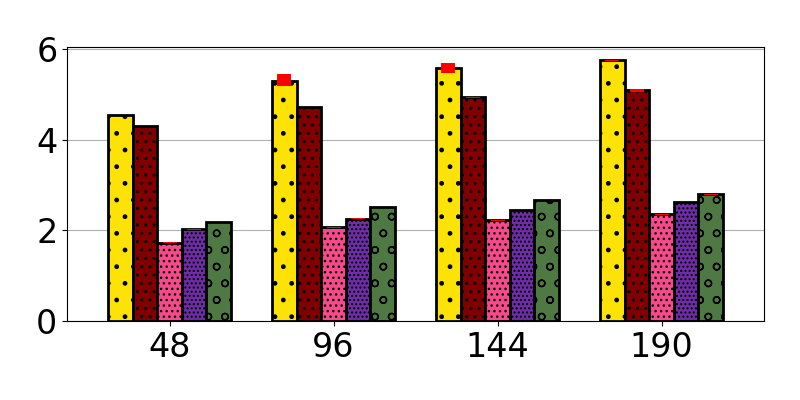}
\includegraphics[width=\plotwidth]{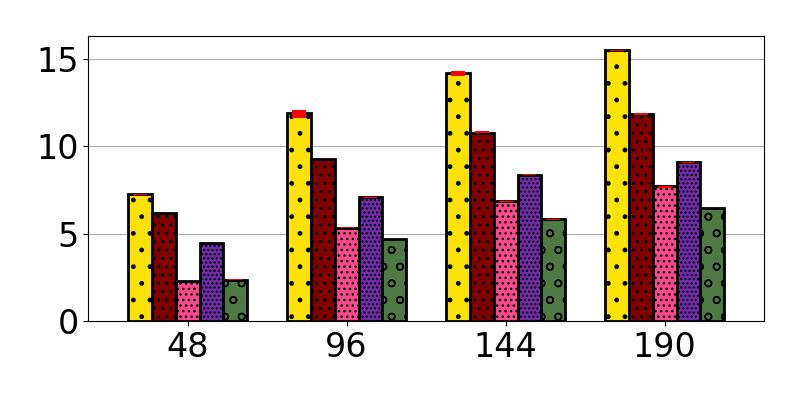}
\rotatebox{90}{\hspace{3mm}\textbf{cycles/op}}
\includegraphics[width=\plotwidth]{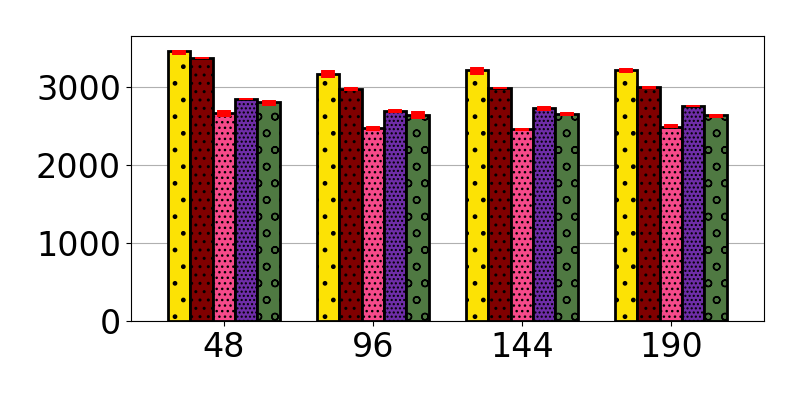}
\includegraphics[width=\plotwidth]{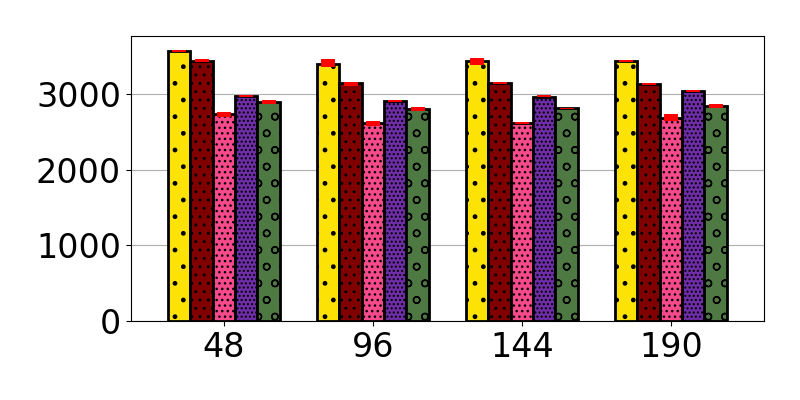}
\includegraphics[width=\plotwidth]{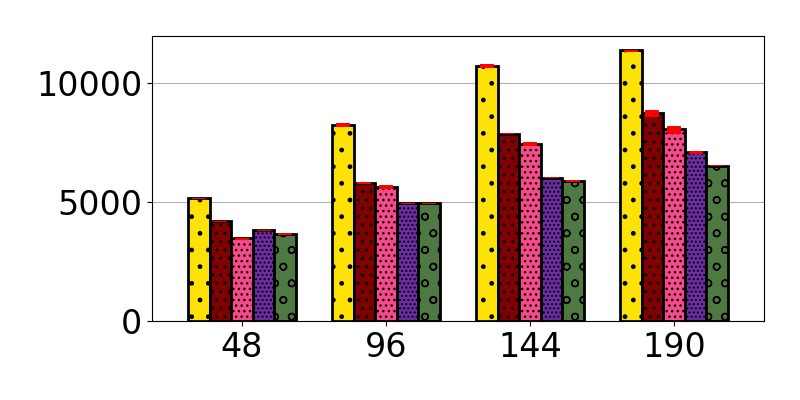}
\rotatebox{90}{\hspace{1mm}\textbf{instructions/op}}
\includegraphics[width=\plotwidth]{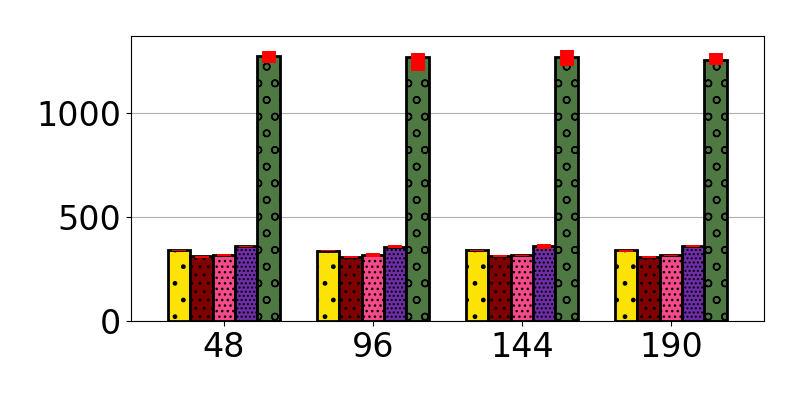}
\includegraphics[width=\plotwidth]{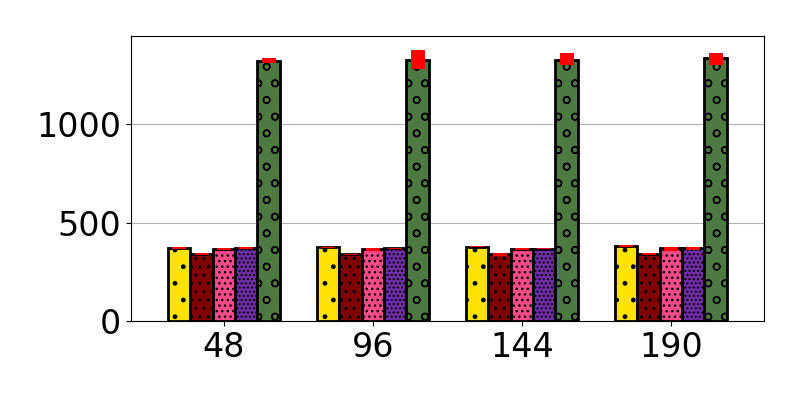}
\includegraphics[width=\plotwidth]{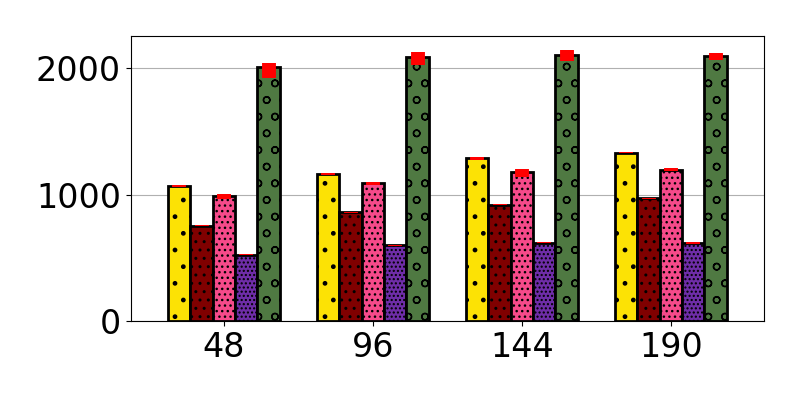}
\rotatebox{90}{\hspace{2mm}\textbf{page faults/op}}
\includegraphics[width=\plotwidth]{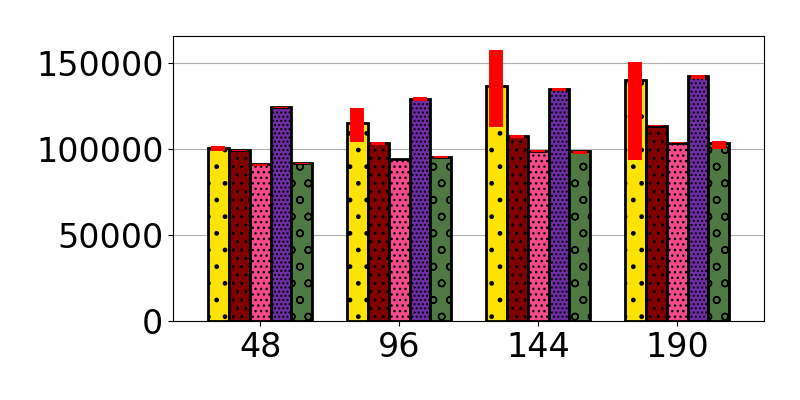}
\includegraphics[width=\plotwidth]{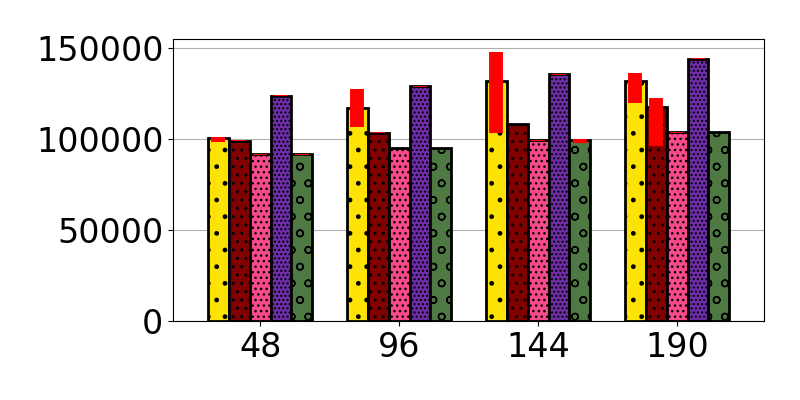}
\includegraphics[width=\plotwidth]{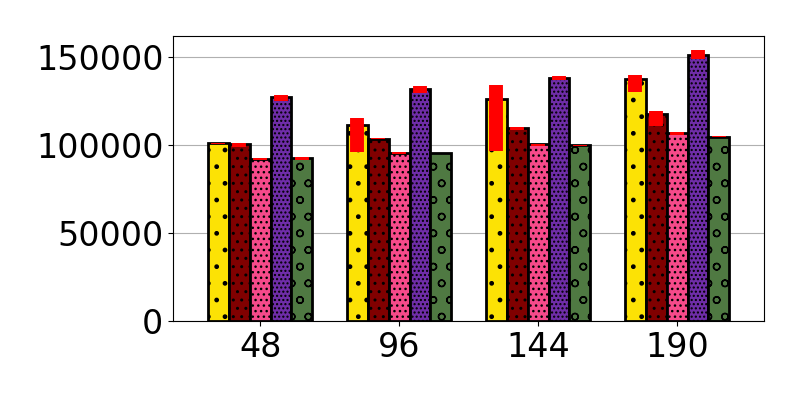}
\rotatebox{90}{\hspace{2mm}\textbf{avg key depth}}
\includegraphics[width=\plotwidth]{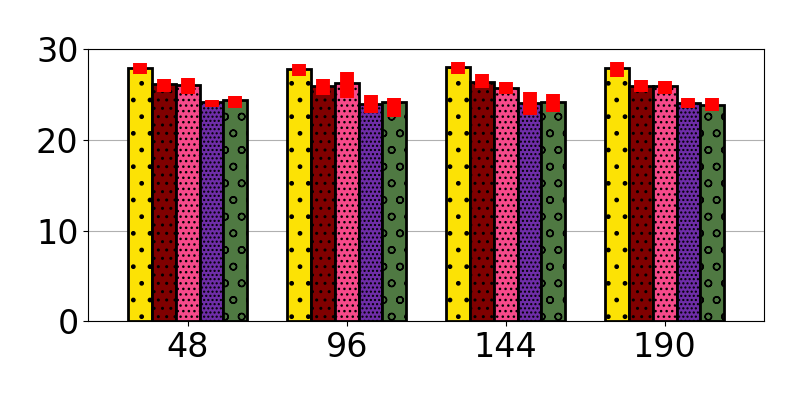}
\includegraphics[width=\plotwidth]{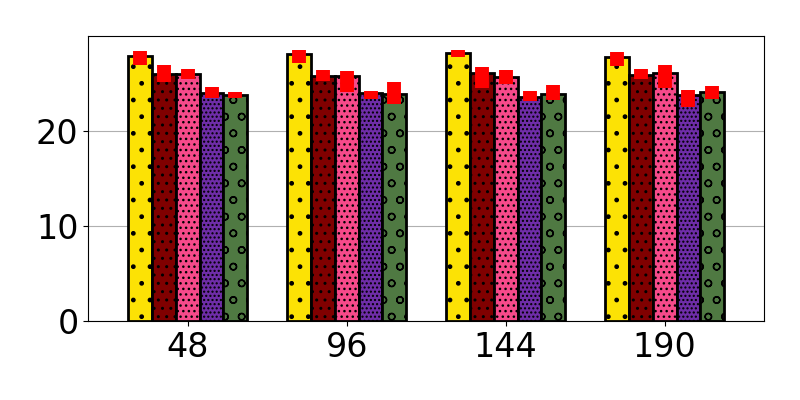}
\includegraphics[width=\plotwidth]{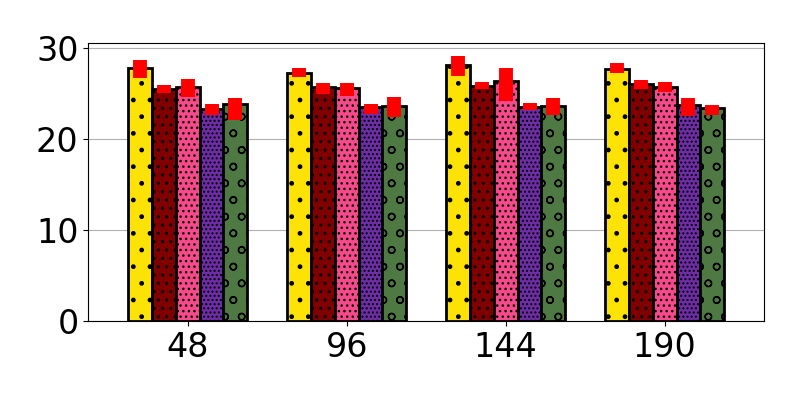}
\includegraphics[width=\legendwidth]{images/ppopp21/jax_non_tm/bst_hc-legend.png}
\vspace{-4mm}
\caption{\textbf{Factor analysis} for \textbf{Handcrafted} \textbf{Unbalanced BSTs} on the \textbf{Intel} system. \textbf{Note varying y-axes} (x-axis = \# threads).}
\label{fig:jax-non-tm-bst-factor-analysis}
\end{figure*}

\begin{figure*}
\newcommand{\plotwidth}{0.32\linewidth}
\newcommand{\legendwidth}{0.5\linewidth}
\begin{tabular}{p{\plotwidth}p{\plotwidth}p{\plotwidth}}
\centering\noindent\textbf{1\% updates, 1M keys} & \centering\noindent\textbf{10\% updates, 1M keys} & \centering\noindent\textbf{100\% updates, 1M keys}
\end{tabular}
\rotatebox{90}{\hspace{1mm}\textbf{ops/microsecond}}
\includegraphics[width=\plotwidth]{images/ppopp21/jax_non_tm/avl_hc_total_throughput-u0.5_0.5-k2000000.png}
\includegraphics[width=\plotwidth]{images/ppopp21/jax_non_tm/avl_hc_total_throughput-u5.0_5.0-k2000000.png}
\includegraphics[width=\plotwidth]{images/ppopp21/jax_non_tm/avl_hc_total_throughput-u50.0_50.0-k2000000.png}
\rotatebox{90}{\hspace{3mm}\textbf{L2 miss/op}}
\includegraphics[width=\plotwidth]{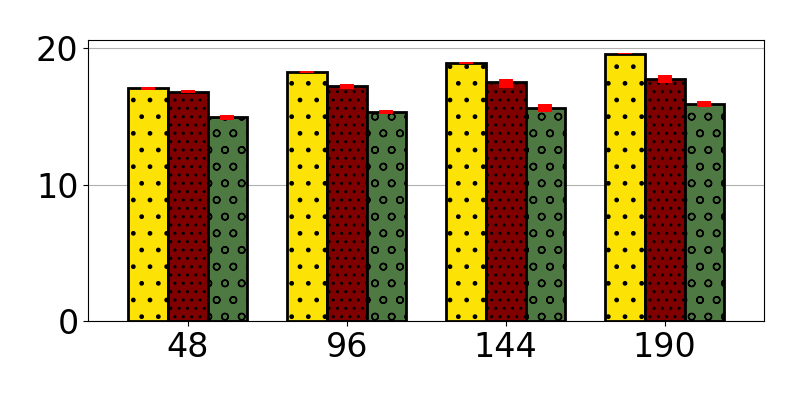}
\includegraphics[width=\plotwidth]{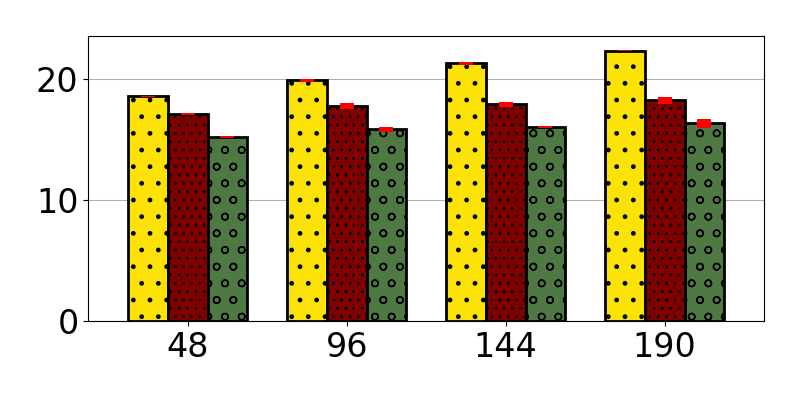}
\includegraphics[width=\plotwidth]{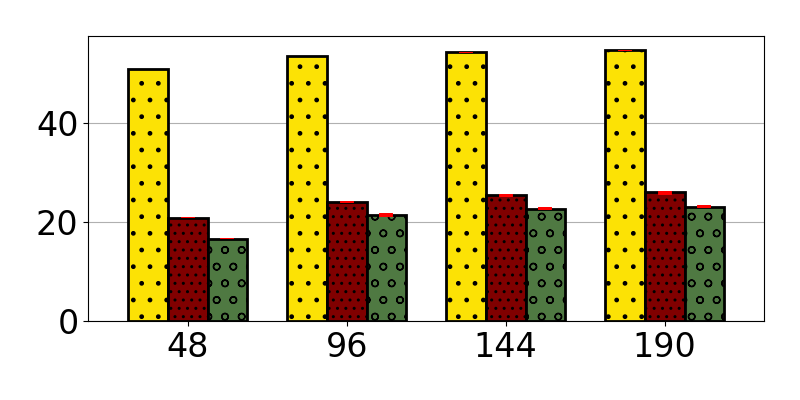}
\rotatebox{90}{\hspace{3mm}\textbf{L3 miss/op}}
\includegraphics[width=\plotwidth]{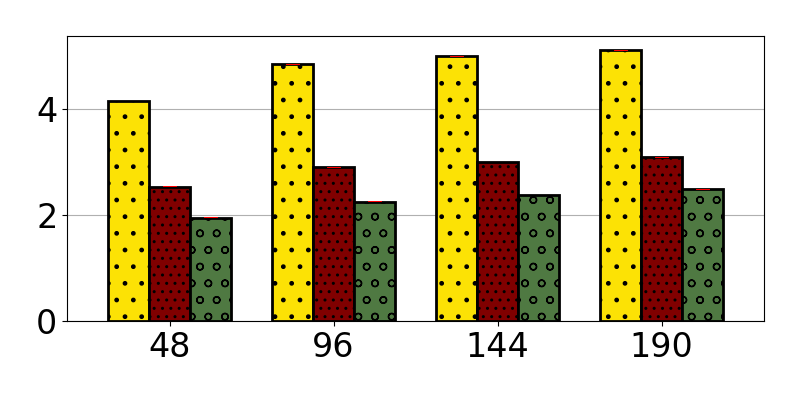}
\includegraphics[width=\plotwidth]{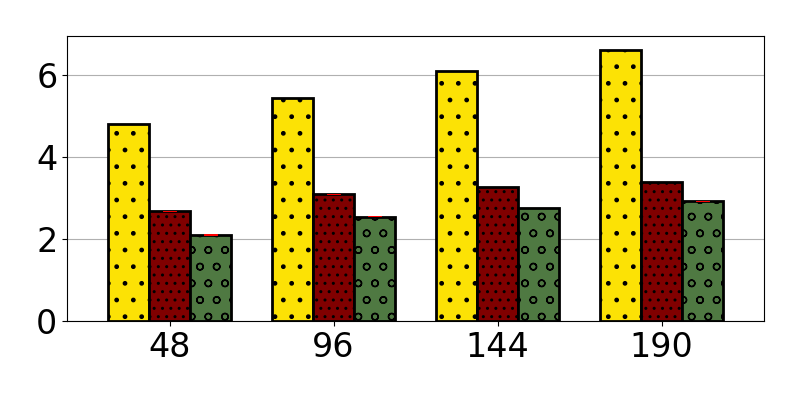}
\includegraphics[width=\plotwidth]{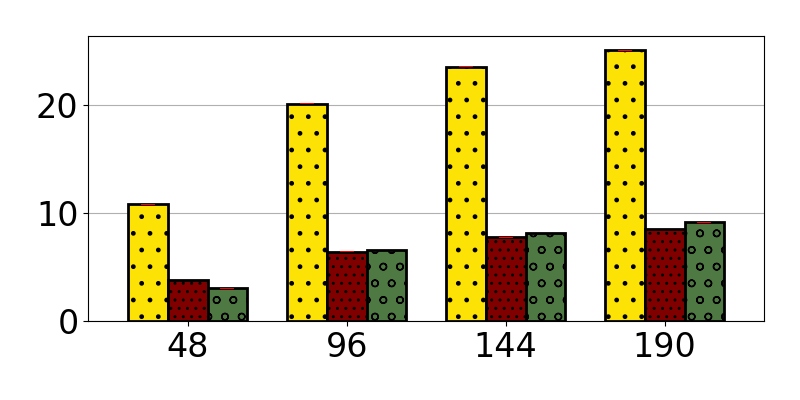}
\rotatebox{90}{\hspace{3mm}\textbf{cycles/op}}
\includegraphics[width=\plotwidth]{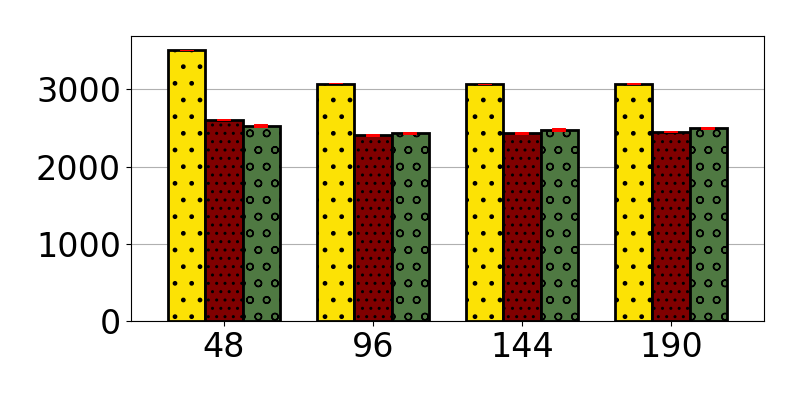}
\includegraphics[width=\plotwidth]{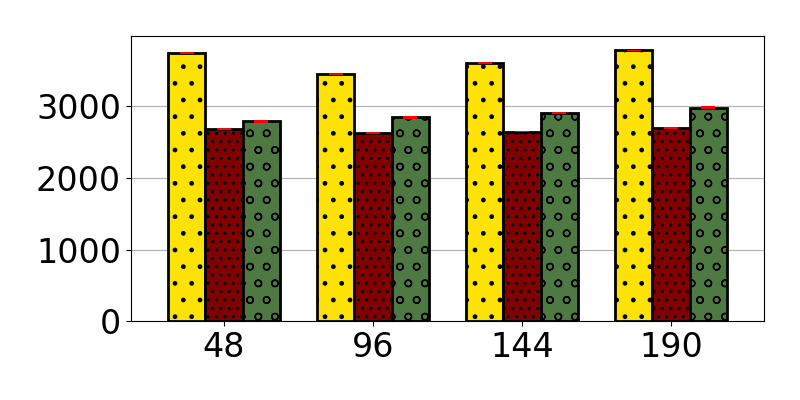}
\includegraphics[width=\plotwidth]{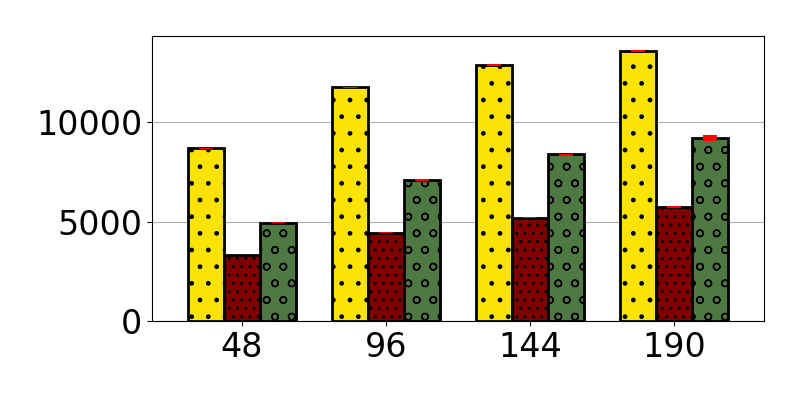}
\rotatebox{90}{\hspace{1mm}\textbf{instructions/op}}
\includegraphics[width=\plotwidth]{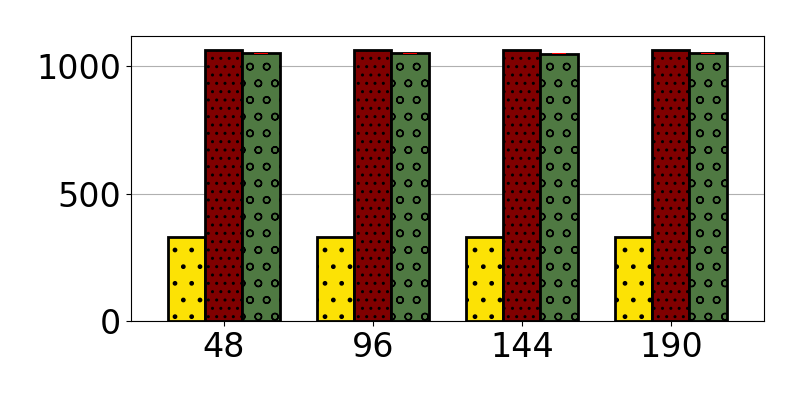}
\includegraphics[width=\plotwidth]{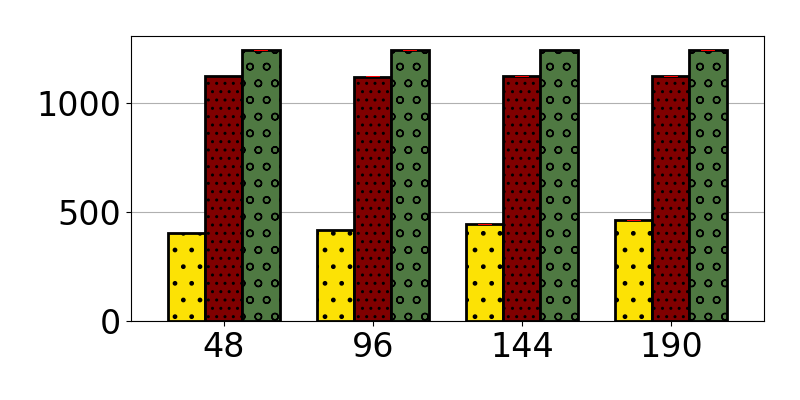}
\includegraphics[width=\plotwidth]{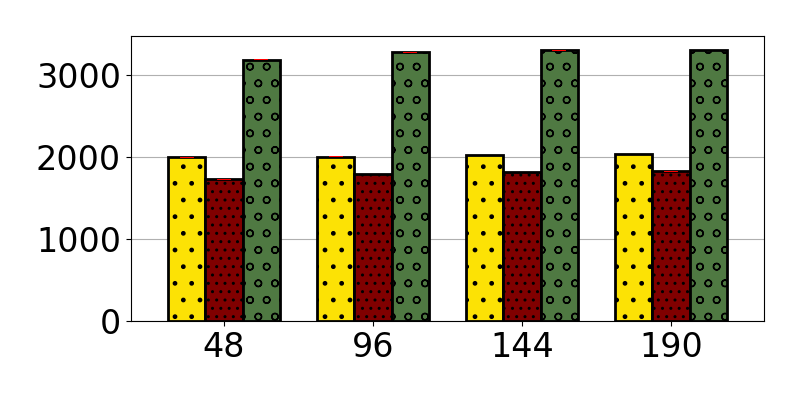}
\rotatebox{90}{\hspace{2mm}\textbf{page faults/op}}
\includegraphics[width=\plotwidth]{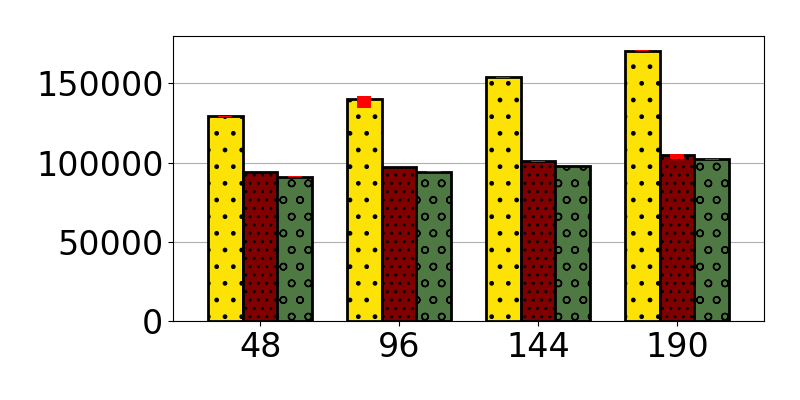}
\includegraphics[width=\plotwidth]{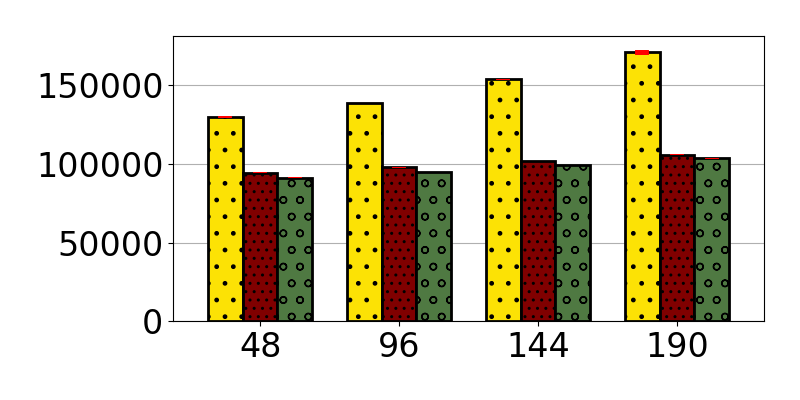}
\includegraphics[width=\plotwidth]{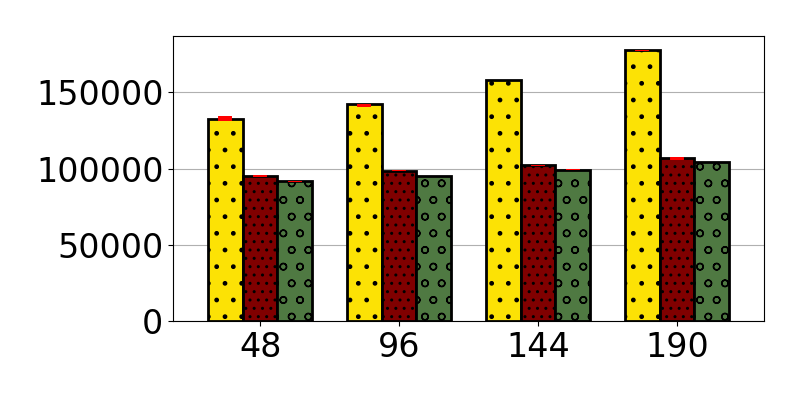}
\rotatebox{90}{\hspace{2mm}\textbf{avg key depth}}
\includegraphics[width=\plotwidth]{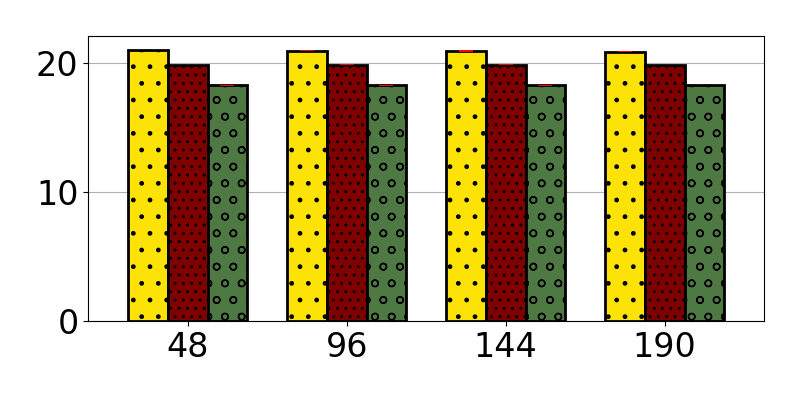}
\includegraphics[width=\plotwidth]{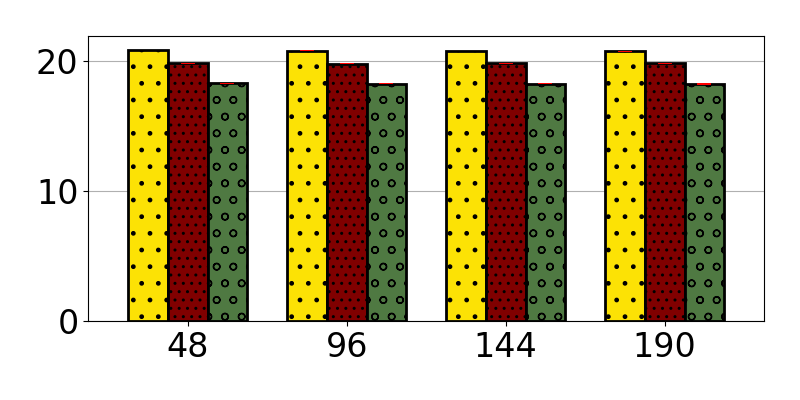}
\includegraphics[width=\plotwidth]{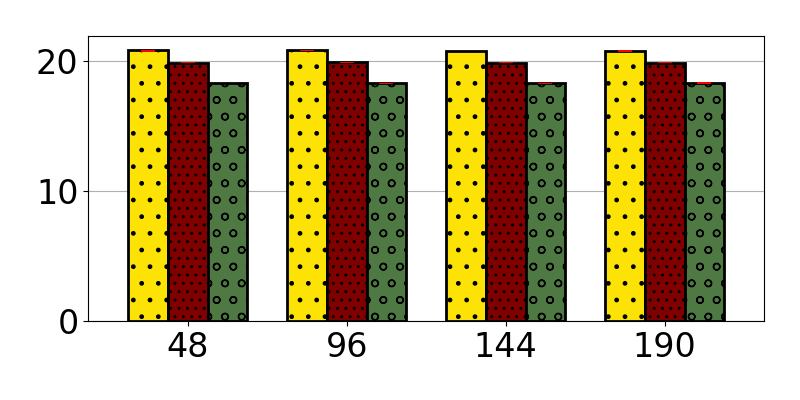}
\includegraphics[width=\legendwidth]{images/ppopp21/jax_non_tm/avl_hc-legend.png}
\vspace{-4mm}
\caption{\textbf{Factor analysis} for \textbf{Handcrafted} \textbf{Balanced BSTs} on the \textbf{Intel} system. \textbf{Note varying y-axes} (x-axis = \# threads).}
\label{fig:jax-non-tm-avl-factor-analysis}
\end{figure*}

 \section{Dynamic Connectivity Via PathCAS} \label{a:cut}
 As an example of how our approach can be applied to a different data structure, in this section we give a brief overview of the first lock-free concurrent solution to the dynamic connectivity problem on undirected acyclic graphs.

 \noindent\textbf{Overview.}
 The dynamic connectivity problem involves maintaining a graph containing a set of fixed vertices and a dynamic set of edges.
 Solving dynamic connectivity requires implementing three operations: 
 \textit{connected(v, w)}, \textit{link(v, w)} and \textit{cut(v, w)}.
 \textit{connected(v, w)} returns true if there exists a path from node \textit{v} to \textit{w}.
 Otherwise, it returns false. 
 If there is no path between $v$ and $w$, \textit{link(v, w)} creates an edge between them and returns true.
 Otherwise, it returns false. Note that you cannot link two nodes if there exists a path between them, as this would create a cycle in the graph. This is a common limitation of \emph{sequential} data structures for dynamic connectivity, which does not relate to our method.
 Finally, if there is an edge between $v$ and $w$, \textit{cut(v, w)} removes it and returns true.
 Otherwise, it returns false. 

 We follow the same general approach that we used to implement the AVL tree: all nodes have version numbers, and \textit{validatePath} is used to implement atomic searches.

 In the sequential setting, Euler Tours \cite{Euler-Tours} are typically used to implement dynamic connectivity.
 An Euler Tour starts at an arbitrary node and visits each edge exactly once (interpreting undirected edges as two directed edges for our purposes) recording each visit to a node as they are traversed.
  \begin{figure}[H]
      \includegraphics[width=8cm]{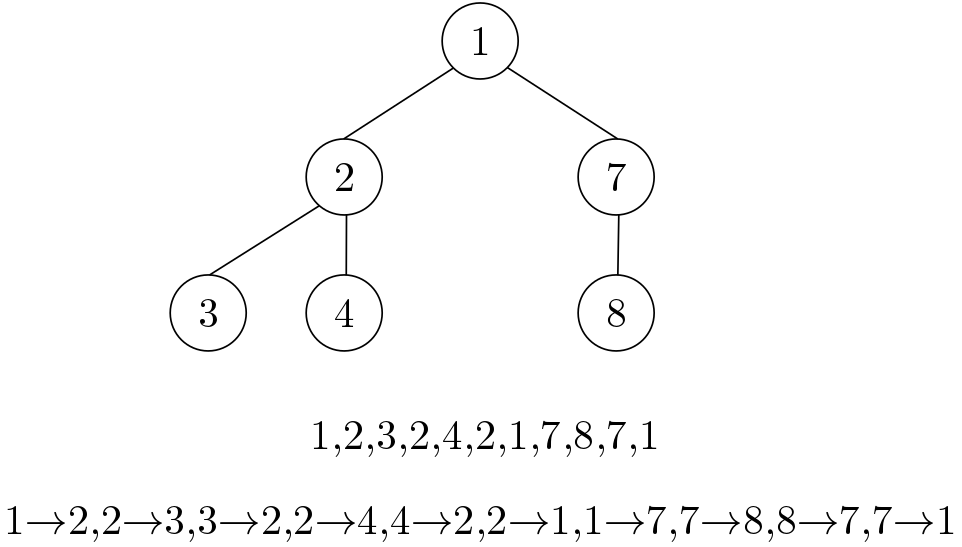}
  \centering

  \caption{Example of an Euler tour of a Tree.} 
  \label{fig:tour_ex}
  \end{figure}


 In the classical Euler tour data structure, Euler tours are stored in a BST.
 We chose to store the tour in a skip list rather than a BST, as in \cite{batch-skip}, which makes it easier to split and merge while maintaining (probabilistic) balance, and having each node of the skiplist represent 
 an \textit{edge} in an Euler tour rather than 
 a node \cite{edge-tours}.
 To clarify, there is a graph comprised of graph nodes, and a skiplist comprised of list nodes (which represent \textit{edges} in the graph), and each graph node has pointers to the list nodes for each of its incident edges.
 That paper also adds an additional self-edge for each node (which appears as a list node, pointed to by the appropriate graph node).
 It turns out that this self edge greatly simplifies the data structure operations in a concurrent setting.
 Figure \ref{fig:ll_link} shows an example of how such tours can be represented.
 We omit the upper levels of the skip-list to save space, and draw the graph representation above the list.
 To avoid special cases, towers of sentinel nodes are added to be beginning and the end of every tour list.

 Just as we used version numbers to impose a sequential ordering on modifications to a single BST node in our AVL tree, we use the version number of the leftmost sentinel node at the bottom level of the list (the \textbf{minimum sentinel}) to impose a sequential ordering on modifications to an individual \textit{Euler tour list} (i.e., a single version number protects the \textit{entire} tour list). 
 More precisely, all updates increment the version number of the minimum sentinel, allowing only a single update on any list at a time.
 This might seem like a concurrency bottleneck, but care must be taken to avoid the possibility of cycles being introduced by concurrent \textit{link}s.
 Additionally, every graph node is initially in \textit{its own} Euler tour tree, allowing plenty of concurrency.
 We now sketch how the operations are implemented.


 \noindent\textbf{Connection Queries.}
 The main purpose of this data structure is to answer connectivity queries: does a path between nodes \textit{v} and \textit{w} exist (or, equivalently, do \textit{v} and \textit{w} belong to the same connected component, or tour list)? Consider the self-edges from $v$ to $v$, and from $w$ to $w$.
 Let $L_v$ and $L_w$ be the list nodes that represent these self-edges.
 If a time can be established when $L_v$ and $L_w$ were in the \textit{same} tour list, then a path exists between those two graph nodes at that time.
 Conversely, if a time can be established where these list nodes were in \textbf{different} tour lists, then no path exists between the graph nodes at that time.


 This is simple to determine: starting from $L_v$ and $L_w$, the tour list(s) are traversed left towards the minimum sentinel.
 These traversals do the reverse of a traditional skiplist search, traversing up and left in the list until a sentinel node is reached. 
 Once a sentinel node is reached, the node is traversed down towards the bottom level of the list to locate the minimum sentinel.
 (We could simply have traversed left, without going upwards, but then our traversal time could be linear in the size of the tour list.)
 The paths taken by both traversals are then validated, and if either validation fails the entire operation is retried.
 If the \textit{same} minimum sentinel is found by these validated searches then there was a time when these two graph nodes existed in the same subgraph and therefore a path existed between them (so true is returned).
 If they are different, a time exists where the graph nodes were in different subgraphs and no path existed between them (so false is returned).


 \noindent\textbf{Link.}
 To simplify the presentation, we first assume each tour list is implemented as a doubly-linked list (rather than skiplist), then explain how this changes when skiplists are used.
 The goal of our implementation of \textit{link(v,w)} is to add a link from graph node \textit{v} to graph node \textit{w}, absorbing the subgraph (equivalently, tour list) of \textit{w} into the subgraph of \textit{v}.
 \textit{v} and \textit{w} can only be linked if they are not part of the same subgraph, so we start by performing the algorithm for \textit{connected(v,w)}.
 Let $M_v$ and $M_w$ be the minimum sentinels located while performing this algorithm. 
 If $M_v$ and $M_w$ are the same, then \textit{link(v,w)} can safely return false: a time was determined where they were already in the same subgraph.
 If $M_v$ and $M_w$ are different, then the operation can proceed. 
 We will include $M_v$ and $M_w$ in our (eventual) KCAS operation, using it to increment both of their version numbers. 

 We explain the next steps with an example.
 Consider the case \textit{link(3, 6)} presented in Figure \ref{fig:ll_link}. 
 The tour lists drawn at the top of that figure are logically split into two \textit{sublists} each: sublist \textit{L1} contains all nodes in the tour list to the left of and including 3's self-edge (excluding the minimum sentinel, which we call S1), and sublist \textit{L2} contains all nodes to the right of 3's self-edge (excluding the \textbf{maximum sentinel}, S2).
 In the tour list for node 6, S3, L3, L4, and S4 are similar to S1, L1, L2, and S2, respectively.
 We suffix the labels L1, L2, L3 and L4 with \textit{A} and \textit{B} to denote the beginning and end of sublists (i.e., L1A is the leftmost node of L1, and L1B is the rightmost node of L1).
 These nodes will require updates. 

 Since link adds a new edge, we should add that edge to the tour lists (twice, as it should be traversed in both directions).
 Two new list nodes are created (VW and WV), one for each direction.
 The resulting tour list can be constructed by arranging the sublists in the following order: [S1, L2, L1, VW, L4, L3, WV, S4], which requires the operation to change the \textit{left} or \textit{right} pointers of the nodes on the ends of the sublists, as well as those of sentinel nodes. Note that it is possible for L2 and L4 to be empty, in which case, the same sequence sublist order works if empty sublists are omitted.
 This arrangement effectively \textit{rotates} the individual tours containing $v$ and $w$ such that they are rooted at $v$ and $w$, respectively, and then links them together.
 Note that we also update the graph nodes for 3 and 6 to add their new neighbour (6 and 3, resp.) to their adjacency lists.

 All of these pointer changes are performed in a single KCAS.
 In other words, the KCAS needs to update the left and right fields of all the list nodes at the ends of the sublists, add neighbours to the graph nodes, increment the version numbers of all nodes involved (crucially, including the minimum sentinel), and \textit{mark} any nodes that are removed (S2 and S3, in this case).
 We use \textit{marking} to avoid erroneous modifications to deleted nodes. Before a KCAS is performed, we first verify that every node included in the KCAS is not marked.
 If a node is marked, we restart the entire operation.

 \textit{In a skiplist}, this list restructuring is simply repeated at \textit{every level}, in one large KCAS.
 The relevant sublists are determined at \textit{each level} by traversing starting from the bottom list, and are rearranged in the same order as the bottom list. Crucially, updates to a skiplist based tour list are still serialized on the same field: the version of the minimum sentinel.

 To determine a sublist at level \textit{i} + 1 from level \textit{i}, we traverse \textit{upwards} and \textit{inwards} from the ends of the sublist at level \textit{i}. 
 For example, consider the top left image in Figure \ref{fig:ll_link}, we will call the bottom level of the list represented here level 1. 
 If we wished to determine the sublist L1 at level 2, we would traverse from L1A right until a node is encountered that has a node above it at level 2.
 Similarly, to determine the other end of the sublist, we would traverse from L1B \textit{left} until a node is encountered that has a node above it at level 2. The two nodes found at level 2 are the ends of the sublist L1 at level 2. If these two traversals ever encounter the same node at some level \textit{i}, this indicates there is no such sublist at level \textit{i} + 1. By performing this traversal for every sublist, at every level, all the nodes that need to be modified can be found. This process is repeated until the maximum height of the skiplist is reached, or the sublist does not exist at some level. 

 The goal of our implementation of \textit{cut(v, w)} is to remove the edge connecting \textit{v} to \textit{w} if it exists, and split their tour list into two.
 Graph nodes contain adjacency lists, so determining if two nodes are directly connected by an edge is easy.
 If they are not neighbours, then the operation returns false.
 Otherwise, the version numbers of these graph nodes should be added to our (eventually) KCAS. 
 The minimum sentinel is located as in the previous operations (but we only need to traverse starting at one of $v$ or $w$). 

 From the graph nodes we can find the list nodes representing the edges \textit{VW} and \textit{WV}.
 These list nodes will be removed as part of the operation and the list nodes between them will form one of the new tour lists.
 The list is separated into \textit{three} sublists: L1 (which contains all nodes to the right of the minimum sentinel and to the left of the edge VW), L2 (which contains all nodes to the right of VW and to the left of WV), and L3 (which contains all nodes to the right of WV and to the left of the maximum sentinel (S2).
 Two new sentinel nodes are created for the new list, S3 and S4.
 This operation simply removes L2 from the center of the list, creating two lists as a result: [S1, L1, L3, S2] and [S3, L2, S4].\footnote{One of L1 or L3 (but not both) could be empty, but the order presented here remains correct if empty lists are omitted.} The sublist L2 represents the nodes no longer reachable from \textit{v} after the removal of \textit{w}, since the only way \textit{v} could reach these nodes was by first traversing \textit{w}. 

 \textit{To extend this to a skiplist} it is very similar to \textit{link}.
 These sublists are formed at each level and linked together in the same order as the bottom list. 

 \begin{figure}[t]
     \includegraphics[width=.45\textwidth]{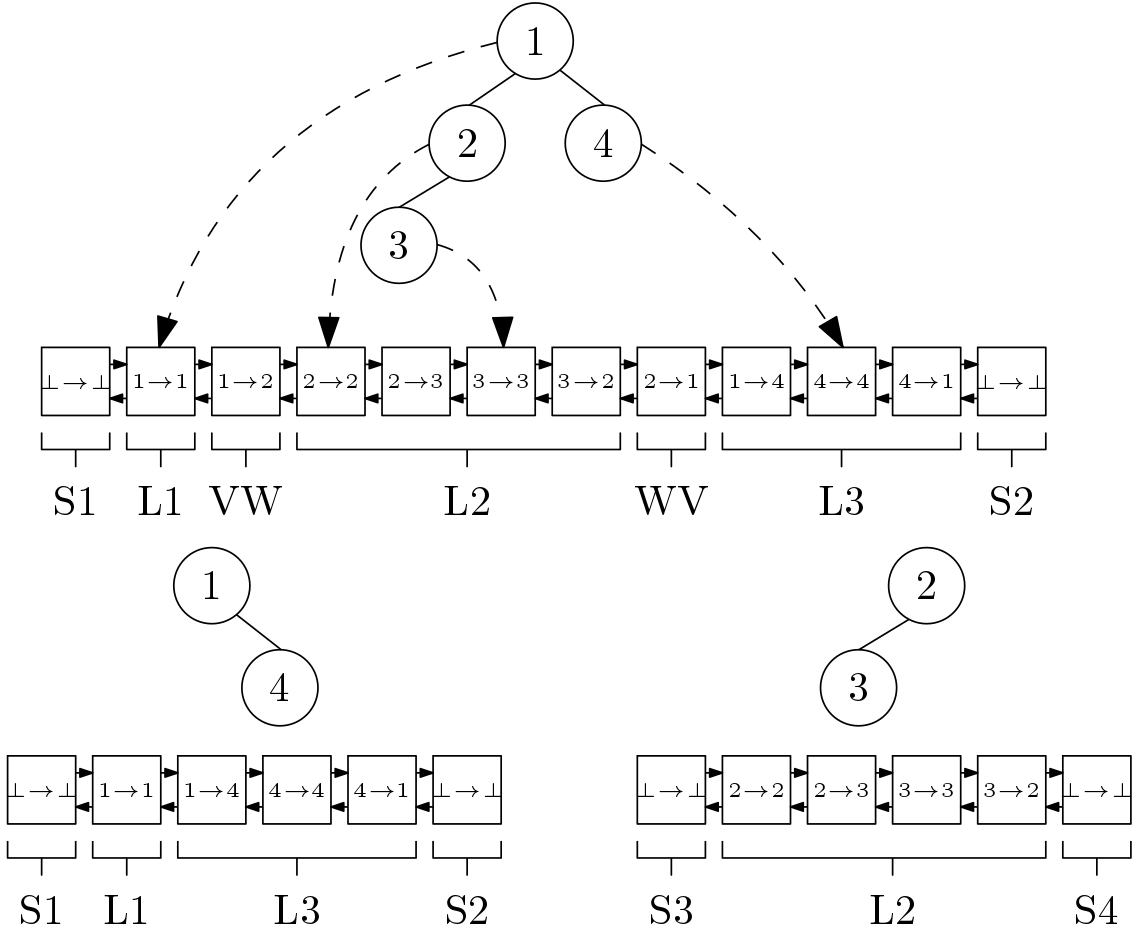}
 \centering

 \caption{Operation \textit{cut}$(3,6)$ on (simplified) Euler tour lists}
 \label{fig:ll_link}
 \end{figure}
 \section{Dynamic Connectivity Correctness Sketch} \label{a:dynamic-proof}

Recall that to avoid special cases, we take the original graph and add a self-edge for each graph node. Once this is done, there is still a well defined Euler tour that follows each edge once, but now these self-edges appear in the Euler tour. if we consider any Euler tour containing self-edges, and delete all self-edges from that tour, we obtain an Euler tour of the original graph.

 As in the AVL tree we mark list nodes at the time they are deleted, so we can verify (purely syntactically) before performing a KCAS that all nodes it will modify are unmarked, and in doing so guarantee that no deleted node is ever changed.
 \begin{defn} \label{def:fully-dynamic}
 Our \textit{fully-dynamic connectivity} data structure consists of a set of Euler tour skiplists and a set of graph nodes. each graph node $u$ participates in a single Euler tour skiplist (or tour list for short), and contains pointers to all of the (skip)list nodes that represent directed edges starting from $u$ (including the self edge $u\rightarrow u$). in each tour list, the bottom level list nodes represent the seq of edges visited in an \textbf{Euler tour} of the graph nodes that participate in the tour list.
 \end{defn}{}

 \begin{algorithm}[h]
 \caption{\textit{isConnected(v, w)}}\label{dconn:isconn}
 \begin{algorithmic}[1]
 \While{\textit{true}}
     \State \textit{$p_v$ = traversePathToMinSentinel(v)} \Comment A path $p_v$ to a minimum sentinel is followed from some the self edge of some graph node $v$, 
     \State \textit{$p_w$ = traversePathToMinSentinel(w)} \Comment A path $p_w$ to a minimum sentinel is followed from some the self edge of some \textbf{other} graph node $w$ 
     \IIf{\textbf{not} \textit{validatePath($p_v$)}} \textbf{continue} 
     \IIf{\textbf{not} \textit{validatePath($p_w$)}} \textbf{continue} 
     \IIf{$p_v$.minSent == $p_w$.minSent} \textbf{return} \textit{true} \Comment If both paths ended up at the same minimum sentinel, return \textit{true}
     \ElseI\ \textbf{return} \textit{false} \Comment If both paths ended up at different minimum sentinels, return \textit{false}
 \EndWhile
 \end{algorithmic}
 \end{algorithm}

 \begin{obs} \label{obs:min-sent}
 Any list node that has the value NIL in both its down and left field is the \textbf{minimum sentinel} of a tour list. 

 \end{obs}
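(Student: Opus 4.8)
The plan is to prove Observation~\ref{obs:min-sent} as a structural invariant maintained by every operation, using the same style of induction over successful PathCAS steps employed in Lemma~\ref{lem:avl}. The cleanest target is actually the biconditional: a list node has $\mathit{down} = \mathtt{NIL}$ and $\mathit{left} = \mathtt{NIL}$ \emph{iff} it is the minimum sentinel of its tour list; the observation is then the forward direction. Since the minimum sentinel is defined as the leftmost node on the bottom level of a tour list, it suffices to establish three invariants that pin down that node: (I1) a node's $\mathit{down}$ field equals $\mathtt{NIL}$ exactly when the node lies on the bottom level of its skiplist; (I2) on the bottom level, exactly one node---the leftmost---has $\mathit{left} = \mathtt{NIL}$, while every other bottom-level node has a non-$\mathtt{NIL}$ left neighbour; and (I3) that leftmost bottom-level node is the unique minimum sentinel. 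Given (I1)--(I3), a node with both fields $\mathtt{NIL}$ is forced to be a bottom-level node (by I1) that is leftmost (by I2), hence the minimum sentinel (by I3).

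First I would fix the base case. By Definition~\ref{def:fully-dynamic}, every graph node initially occupies its own tour list whose bottom level consists of a left sentinel tower, the self-edge node, and a right sentinel tower. The bottom node of the left tower has nothing below it ($\mathit{down} = \mathtt{NIL}$) and nothing to its left ($\mathit{left} = \mathtt{NIL}$) and is the minimum sentinel; every other bottom-level node has a left neighbour, and every non-bottom node points downward, so (I1)--(I3) hold.

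For the inductive step I would note that only \emph{link} and \emph{cut} alter $\mathit{left}$/$\mathit{down}$ pointers (connectivity queries are read-only), and each does so atomically in a single PathCAS touching only unmarked nodes (Lemma~\ref{lemma:marked}). For each operation I would track the left end of every affected level. In \emph{link}, the sublists are rearranged into the order $[\,S1, L2, L1, VW, L4, L3, WV, S4\,]$ at every level: the surviving left sentinel tower $S1$ stays the leftmost column, so its bottom node retains $\mathit{left} = \mathtt{NIL}$ and $\mathit{down} = \mathtt{NIL}$, while the removed sentinels $S2, S3$ are \emph{marked} and leave the structure, and no newly spliced node acquires a $\mathtt{NIL}$ left pointer since each is inserted strictly to the right of $S1$. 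In \emph{cut}, the list is split into $[\,S1, L1, L3, S2\,]$ and a new list $[\,S3, L2, S4\,]$; $S1$ again keeps both $\mathtt{NIL}$ fields as the minimum sentinel of the first list, while the freshly created left tower $S3$ is built with its bottom node holding $\mathit{down} = \mathtt{NIL}$ and installed as the leftmost node of the second list with $\mathit{left} = \mathtt{NIL}$, becoming its minimum sentinel. Setting the new towers' $\mathit{down}$ pointers correctly (bottom node $\mathtt{NIL}$, each higher node to the one below) preserves (I1), and installing exactly one new leftmost node per new list preserves (I2) and (I3).

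The main obstacle I anticipate is the inductive step for the \emph{skiplist} versions of \emph{link} and \emph{cut}, where the splicing is repeated at \emph{every} level and sublist ends are located by the upward-and-inward traversal described in the text. I must verify that this multi-level restructuring (i) never leaves two distinct bottom-level nodes both holding $\mathit{left} = \mathtt{NIL}$, (ii) never strands a bottom-level node with $\mathit{left} = \mathtt{NIL}$ that is not the intended minimum sentinel, and (iii) always attaches a well-formed sentinel tower whose bottom node has $\mathit{down} = \mathtt{NIL}$. Because all of these pointer changes, together with the version-number increments on the minimum sentinels and the marking of removed sentinels, occur inside a single PathCAS, I can invoke atomicity exactly as in Lemma~\ref{lem:avl}: the invariants hold before the PathCAS, the rearrangement carries one valid configuration to another, and no intermediate state is observable, so (I1)--(I3)---and hence the observation---are preserved.
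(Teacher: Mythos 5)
The paper offers no proof of Observation~\ref{obs:min-sent} at all: it is stated as a self-evident structural fact about the tour-list representation (the minimum sentinel is by definition the leftmost node of the bottom level, hence the unique node with neither a left neighbour nor a node below it) and is then used only to justify the ``static check'' at the end of the connectivity traversal. Your proposal turns this into a genuine invariant proof---base case plus induction over the \emph{link}/\emph{cut} PathCAS steps, in the style of Lemma~\ref{lem:avl}---which is a legitimately different and more rigorous route. What your approach buys is an explicit argument that the multi-level splicing in \emph{link} and \emph{cut} never creates a second bottom-level node with a \texttt{NIL} left pointer and never strands one; what the paper's (non-)approach buys is brevity, at the cost of leaving the skiplist restructuring unexamined.

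One caveat you should make explicit: as literally stated, the observation is false for \emph{physically present but logically deleted} nodes. In \emph{link}, the absorbed list's minimum sentinel ($S3$ in the paper's notation) is marked and unlinked, but by Lemma~\ref{lemma:marked} no successful PathCAS ever modifies a marked node's fields, so $S3$ retains $\mathit{down}=\mathtt{NIL}$ and $\mathit{left}=\mathtt{NIL}$ while no longer being the minimum sentinel of any tour list. Your proof quietly sidesteps this by saying the removed sentinels ``leave the structure,'' i.e., by restricting the claim to unmarked nodes currently in a tour list. That restriction is the right reading---the observation is only ever applied to a node reached by a traversal whose path subsequently \emph{validates}, which guarantees the node is unmarked---but the invariant you prove should be stated over reachable (equivalently, unmarked) list nodes, or the inductive step for \emph{link} does not go through for the literal universally quantified statement.
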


 \begin{lemma}
 Our implementation 
 satisfies the following claims:
 \begin{enumerate}
 \item \textit{isConnected(v, w)} returns the same value it would if it were performed atomically at its linearization point (just before the first validation)
 \item \begin{enumerate}
     \item The data structure is a fully-dynamic connectivity structure (see definition \ref{def:fully-dynamic})
     \item Any \textit{link} or \textit{cut} operation that performs a successful KCAS returns the same value it would if it were performed atomically at its linearization point (the KCAS)
     \item Any \textit{link} or \textit{cut} operation that terminates without performing a successful KCAS returns the same value it would if it were performed atomically at its linearization point 
 \end{enumerate}
 \end{enumerate}
 \end{lemma}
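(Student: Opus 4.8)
The plan is to prove all the claims simultaneously by induction on the sequence of atomic steps $s_1, s_2, \ldots$ of an arbitrary execution $E$, exactly as in the proof of Lemma~\ref{lem:avl}. The base case is the initial configuration, in which every graph node sits in its own singleton tour list containing only its self-edge between a minimum and a maximum sentinel; this is trivially a fully-dynamic connectivity structure in the sense of Definition~\ref{def:fully-dynamic}. For the inductive step I assume all claims hold before a step $s$ and show they persist after $s$. Reads and validations cannot change the structure, so only a successful KCAS performed by a \textit{link} or \textit{cut} can affect Claim~2a; the remaining claims hinge on the snapshot guarantee of \textit{validatePath}.

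The heart of the argument is Claim~2a. First I would establish the \emph{sequential} correctness of the rearrangements: that relinking the sublists in the orders $[S1, L2, L1, VW, L4, L3, WV, S4]$ (for \textit{link}) and $[S1, L1, L3, S2]$, $[S3, L2, S4]$ (for \textit{cut}) transforms a valid Euler tour (or pair of tours) into a valid Euler tour (or pair) of the updated graph. This is precisely the standard Euler-tour-tree splice/merge, so it follows from the sequential correctness of the underlying tour representation~\cite{edge-tours, batch-skip}, noting that deleting the self-edges recovers an Euler tour of the original graph. Then I would lift this to the concurrent setting using three facts: (i) every update increments the version number of the \textbf{minimum sentinel} (Observation~\ref{obs:min-sent}), which serializes all modifications to a single tour list and is the key to ruling out cycles formed by concurrent \textit{link}s into the same component; (ii) \textit{validatePath} guarantees, by the same reasoning as Claim~\ref{lem:avl:validate}, that the sublist boundaries identified while traversing up and inward at each skiplist level form an atomic snapshot of the list at the instant just before validation, so the sublists passed to the KCAS are exactly the sublists of the list as it exists at the KCAS; and (iii) by Lemma~\ref{lemma:marked} the KCAS cannot alter any already-removed (marked) node, so no deleted sentinel or edge node is resurrected. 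Combining the sequential correctness with (i)--(iii) shows the successful KCAS atomically replaces a correct structure with a correct structure, preserving the invariant and establishing Claim~2b (the operation's return value matches an atomic execution at the KCAS).

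For Claim~1, I would observe that \textit{isConnected}$(v,w)$ (Algorithm~\ref{dconn:isconn}) succeeds only when both traversals $p_v$ and $p_w$ validate, so by the snapshot argument there is a single instant---just before the first validation---at which both paths are simultaneously consistent with the current structure, and hence $v$ and $w$ reach minimum sentinels $p_v.\mathit{minSent}$ and $p_w.\mathit{minSent}$. By Claim~2a the structure is a fully-dynamic connectivity structure at that instant, so $v$ and $w$ lie in the same tour list (equivalently, the same connected component) if and only if those minimum sentinels coincide; linearizing at that instant therefore yields the correct Boolean. Claim~2c is the analogue of Claim~\ref{lem:avl:lin-no-kcas}: a \textit{link} returning false does so because its \textit{isConnected} subroutine found $M_v = M_w$, and a \textit{cut} returning false does so because a validated read of the graph nodes' adjacency lists showed $v$ and $w$ are not neighbours; in each case validation fixes a time during the operation at which the returned value is correct, and we linearize there.

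The step I expect to be the main obstacle is the concurrent half of Claim~2a, specifically arguing that the multi-level sublist identification is consistent under concurrency---that the boundaries discovered by the upward-and-inward traversals at each skiplist level genuinely correspond to the sublists of a single atomic snapshot, and that serializing on the minimum sentinel's version number suffices to prevent two concurrent \textit{link}s from jointly introducing a cycle. Making (ii) fully rigorous across all levels, rather than for a single linear list as in the BST case, is where the bulk of the care will be required.
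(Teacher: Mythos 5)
Your proposal is correct and follows essentially the same route as the paper's own proof sketch: induction over the steps of an arbitrary execution, using \textit{validatePath} to establish an atomic snapshot time for each traversal, the minimum sentinel's version number to serialize all updates to a tour list (so any concurrent \textit{link}/\textit{cut} on that list forces validation or the KCAS to fail), and the same linearization points for each claim. If anything, you are slightly more explicit than the paper — which only gives a sketch — about the sequential correctness of the Euler-tour splice and the multi-level sublist identification, both of which the paper's argument takes for granted.
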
{}

 \begin{proof}
 Consider an arbitrary execution $E$.
 We prove these claims together by induction on the sequence of steps $s_1, s_2, ...$ (which can be shared memory reads, atomic KCASRead operations, or atomic KCAS operations) in $E$.

 Base case: There are a finite number \textit{i} graph nodes, and each is in its own tour list. These tour lists contain a single self-edge, and two sentinel towers on each side. 

 Inductive step: suppose the claims all hold before step $s$. We prove they hold after step $s$.

 \noindent \textsc{Claim 1}. The only operations that can impact this claim are KCAS operations from \textit{link} or \textit{cut}. Reads do not change the data structure, hence they will not change the paths followed by the traversals in \textit{isConnected}. 

 \textit{Subcase 1}: Consider an invocation of \textit{isConnected(v, w)} that returns true. In the final loop of this invocation, the following occurs: the traversal from \textit{v} to a minimum sentinel which follows path $p_v$ occurs, then the traversal from \textit{w} to the same minimum sentinel which follows path $p_w$ occurs, let the time this second traversal ends be $t_0$. (From Observation \ref{obs:min-sent} we can statically check that node reached by these traversals was, in fact, a minimum sentinel.) We then validate the path of the first traversal at $t_1$, and then validate the path of the second traversal at $t_2$ (therefore, $t_0<t_1<t_2$). This operation does not return unless \textit{validatePath} returns true for both paths. Since we know that \textit{validatePath($p_w$)} returns true, there were no modifications to any nodes in $p_w$ between $t_0$ and $t_2$. Hence, there were also no modifications to any node in $p_w$ between $t_0$ and $t_1$, and \textit{validatePath($p_w$)} would still have returned true if it were executed at $t_1$.  Consider a \textit{link} or \textit{cut} update that changes the configuration of the tour list during this operation. If this \textit{link} or \textit{cut} does not involve the current tour list, it does not change the minimum sentinel that would be reached by either traversal. If these updates were to occur on the current tour list, it must include the version number of the minimum sentinel in the KCAS. If this update occurs during one of the traversals, then the traversal will fail to validate, and this operation will be retried. Additionally, if the update occurs between the traversals and one of the validations, the validation will fail. Therefore, since both of these traversals end at the same minimum sentinel, they were in the same tour list just before $t_1$ (which is where we linearize this operation).

 \textit{Subcase 2}: Consider an invocation of \textit{isConnected(v, w)} that returns false. This argument is the same as Subcase 1, since \textit{isConnected(v, w)} still must validated both paths, however the minimum sentinels are different. 

 \noindent \textsc{Claim 2a}. The only operations that can impact this claim are the KCAS operations in \textit{link} an \textit{cut}, as reads to not change the data structure.

 \textit{Subcase 1}: Suppose \textit{s} is a successful KCAS of \textit{link(v, w)}. Before this KCAS, two traversals occurred and were validated that from the self-egdes of the two graph nodes \textit{v} and \textit{w} to two different minimum sentinels. As we proved in Claim 1, a time \textit{t} exists where \textit{v} and \textit{w} were in different tours (and hence there was no path between them). From the inductive hypothesis, these two tour lists were well-formed before this operation. Hence, it is correct to perform a \textit{link} operation on these two nodes at \textit{t}. Since the version number of both minimum sentinels are part of this KCAS, there are no changes to either tour list between \textit{t} and \textit{s}. This means that no update has modified any node in either tour list after the time they were validated. Therefore, the \textit{link} operation is still applicable at \textit{s}.

 \textit{Subcase 2}: Suppose \textit{s} is a successful KCAS of \textit{cut(v, w)}. This is simply an easier case than Subcase 1, as only a single list is tracked for this operation. 

 \noindent \textsc{Claim 2b}. This is proven in Claim 2A, as we proved that both \textit{link} and \textit{cut} are atomic at \textit{s}, which is when the KCAS is executed. 

 \noindent \textsc{Claim 2c}. This is proven in Claim 1, as both use the result of \textit{isConnected} to determine if a KCAS should be executed or not. If \textit{link} calls \textit{isConnected} and it returns true, \textit{link} will return false, as a time \textit{t} was established where a path already existed between the two nodes (right before the first validation of \textit{isConnected}), we linearize this operation at \textit{t}. The argument is identical for \textit{cut}, but reversed.

 \end{proof}{}

\section{Dynamic Connectivity Progress Sketch}

\begin{theorem}
 Our implementation of dynamic connectivity is lock-free.
 \end{theorem}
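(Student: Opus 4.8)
The plan is to reduce lock-freedom of the three high-level operations (\textit{isConnected}, \textit{link}, \textit{cut}) to the already-established lock-freedom of the underlying PathCAS primitives, and then use Property P1 to rule out livelock in the retry loops. First I would argue that each single iteration of the enclosing \textit{while} loop terminates: an iteration performs one bounded traversal (each step moves strictly upward or strictly toward a sentinel, so only finitely many list nodes are visited, and every pointer dereference goes through the lock-free \textit{read}), followed by a constant number of \textit{visit}/\textit{validatePath} calls or a single \textit{vexec}. Since \textit{read}, \textit{visit}, \textit{validate}, and \textit{vexec} are lock-free and each invocation of \textit{help} terminates in bounded time (the sorted-address, cycle-free helping argument already given for PathCAS), a thread takes only finitely many steps per iteration. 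Hence a non-terminating operation must execute infinitely many iterations.

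The core of the proof is a charging argument showing that infinitely many iterations force infinitely many successful updates. I would assume, for contradiction, that from some time $T$ onward threads take infinitely many steps but no operation of an active thread completes. An active thread's operation completes precisely when a \textit{vexec} it owns is decided \emph{Succeeded} (\textit{link}/\textit{cut} returning \textit{true}) or when its \textit{validatePath} returns a decisive value; in particular, if any active thread's \textit{vexec} succeeds after $T$, that thread returns within finitely many further steps, a contradiction. So no such \textit{vexec} succeeds after $T$. Since every change to a pointer, version number, or mark bit is performed by a single successful \textit{vexec}, the data structure is static after $T$. Every \textit{add} in \textit{link} and \textit{cut} uses an old value that was read during the same attempt, so all of these \textit{vexec}s are \textit{reasonable}; by Property P1, a reasonable \textit{vexec} that fails implies some \textit{vexec} succeeded since its \textit{start}, which after $T$ is impossible. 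Thus any \textit{link}/\textit{cut} attempt begun after $T$ must eventually perform a succeeding \textit{vexec} (or return early through its embedded \textit{isConnected}), completing an operation --- a contradiction.

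It remains to handle the case where, after $T$, all infinitely many steps belong to \textit{isConnected} operations, which contain no \textit{vexec} of their own. Here I would use that, with the structure static, the only way a \textit{validatePath} fails is by observing a descriptor pointer of a concurrent, still-undecided PathCAS. Since each \textit{help} terminates and PathCAS is lock-free, every such pending descriptor is eventually decided and its addresses unlocked; by the previous paragraph a \textit{vexec} cannot be decided \emph{Succeeded} after $T$, so it fails and is retried by its owner, which is then forced to succeed --- contradicting that no operation completes. Consequently every retry of \textit{isConnected} is charged either to a completing \textit{link}/\textit{cut} or to one that is compelled to complete, so \textit{isConnected} cannot loop forever either, which finishes the contradiction and establishes lock-freedom.

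The main obstacle I anticipate is the interaction between spurious PathCAS failures and the per-tour-list serialization on the minimum sentinel's version number: two updates to the same tour list both \textit{add} that sentinel, so at most one can succeed and the loser must retry, and a naive \textit{vexec} could let both fail spuriously and livelock. Discharging this cleanly is exactly what Property P1 together with the sorted-address strong-\textit{vexec} slow path provides, so the delicate part of the write-up is verifying that every \textit{add} issued by \textit{link} and \textit{cut} is genuinely \textit{reasonable}, and that the marked-node checks and the embedded \textit{isConnected} calls never produce a retry that fails to be charged to some successful update (using, as in the BST, that a marked node is unreachable so a fresh traversal cannot be trapped by it). With those checks in place, P1 applies verbatim and the charging argument goes through.
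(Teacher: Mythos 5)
Your proof is correct and follows the same skeleton as the paper's: assume for contradiction that threads keep taking steps but no operation completes after some time, observe that the data structure is then modified only finitely often (each operation performs at most one successful KCAS), and conclude that validation and KCAS failures --- which require progress by someone --- must cease, forcing completions. The one substantive difference is that the paper's one-paragraph argument asserts that \emph{validatePath} and KCAS ``only fail if a node has been modified,'' silently ignoring the spurious failures caused by encountering another operation's descriptor (a scenario the paper itself flags in its PathCAS progress discussion); your explicit appeal to Property P1, strong \emph{vexec} with sorted addresses, and the charging of descriptor-induced \emph{validatePath} failures to eventually-decided \emph{vexec}s closes that gap rather than eliding it. The only caveat worth stating in a final write-up is that the theorem then formally relies on the dynamic-connectivity code using \emph{strong} \emph{vexec} (so that P1 applies) and on every \emph{add} being \emph{reasonable}, which you correctly identify as the conditions to verify.
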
{}
 \begin{proof}
 Consider some configuration \textit{C} where threads continue to take steps, but after some time \textit{t} no operations complete. All threads, therefore, must be stuck in retry loops, failing \textit{validatePath} or KCAS operations. Since \textit{validatePath} and KCAS operations only fail if a node has been modified since its version number was last read, the only way these infinitely many KCAS operations and \textit{validatePath} fail is if there are infinitely many modifications to the data structure. However, if operations stop completing after time \textit{t}, then eventually the data structure must stop changing, since each operation can perform at most one successful KCAS, and the data structure is changed exclusively by successful KCAS operations. This is a contradiction, the only way a thread fails an operation is if another thread has made progress. 

\end{proof}{}

\end{document}